\documentclass[%
  reprint,%
  aps,%
  pra,%
  superscriptaddress,%
  longbibliography,%
  nofootinbib,%
  twocolumn
]{revtex4-1}

\usepackage{amsmath,amssymb,amsthm,amsfonts}
\usepackage{mathrsfs}
\usepackage{dsfont}
\usepackage{mathtools}
\usepackage{bm}
\usepackage{braket}
\usepackage{multirow}
\usepackage{array}
\usepackage{makecell}
\usepackage{stmaryrd}
\usepackage{amssymb}
\usepackage{comment}
\usepackage[normalem]{ulem}

\usepackage{graphicx}
\usepackage{color,xcolor}
\usepackage{titletoc} 

\usepackage{hyperref}
\hypersetup{
  colorlinks=true,
  linkcolor=blue,
  citecolor=blue,
  urlcolor=blue
}

\usepackage{qcircuit}
\usepackage{algorithm}
\usepackage{algpseudocode}

\usepackage{enumerate}

\newcommand{\Tr}{\mathrm{Tr}}

\newcommand{\T}{\mathcal{T}}

\newcommand{\rank}{\operatorname{rank}}

\newcommand{\comm}[2]{\left[#1,#2\right]}
\newcommand{\acomm}[2]{\left\{#1,#2\right\}}
\newcommand{\expect}[1]{\langle #1 \rangle}

\newtheorem{theorem}{Theorem}
\newtheorem{observation}{Observation}
\newtheorem{proposition}{Proposition}

\newtheorem{lemma}{Lemma}
\newtheorem{remark}{Remark}

\newtheorem{corollary}{Corollary}

\usepackage{varioref}
\labelformat{equation}{#1}

\labelformat{section}{#1}

\labelformat{figure}{#1}

\labelformat{proposition}{#1}

\labelformat{lemma}{#1}

\labelformat{theorem}{#1}

\labelformat{observation}{#1}

\labelformat{definition}{#1}

\labelformat{problem}{#1}

\labelformat{algorithm}{#1}

\labelformat{corollary}{#1}

\labelformat{statement}{#1}

\newcommand{\sun}[1]{\textcolor{black}{#1}}

\newcommand{\agg}[1]{\textcolor{purple}{#1}}

\begin{document}

\title{Quantum probe advantage in learning many-body systems}

\author{Wenzheng Dong}
\email{wenzheng.dong.quantum@gmail.com}
\affiliation{School of Physical and Chemical Sciences, Queen Mary University of London, London E1 4NS, United Kingdom}

\author{Andrew G. Green}
\affiliation{London Centre for Nanotechnology, University College London, Gordon St., London, WC1H 0AH, United Kingdom}

\author{Vlatko Vedral}
\affiliation{Clarendon Laboratory, University of Oxford, Parks Road, Oxford, OX1 3PU, United Kingdom}

\author{Jinzhao Sun}
\email{jinzhao.sun.phys@gmail.com}
\affiliation{School of Physical and Chemical Sciences, Queen Mary University of London, London E1 4NS, United Kingdom}

\begin{abstract}



Which properties of a quantum many-body system are operationally accessible is a central question underlying spectroscopy, thermodynamics, and quantum information science. Conventional response theory answers this question within a system-only paradigm: one perturbs and measures the matter itself, obtaining susceptibility built from causally ordered nested commutators. Here we show that coherently controlled quantum probes, when measured at the end, define a strictly larger operational learning framework beyond that accessible from response theory. We establish this through a quantum-circuit description that unifies spectroscopy, probe microscopy, and probe-based quantum technologies within a common operational framework, from which we develop quantum protocols for learning many-body properties from probe readout only. This advantage arises because the reduced dynamics of quantum probes generically encode anti-commutator and mixed-order correlators of the target; therefore, measurements on the probe provide access to fluctuations, non-equilibrium structure, and entanglement entropy that are in general not accessible through response functions or a single probe alone. Moreover, we demonstrate that entangled probes can access many-body properties such as von Neumann entropy. We prove that the required probe resources scale with the complexity of the target correlations rather than with the size of the many-body system. Quantum probes are therefore not merely more sensitive sensors but provide a new way to learn many-body properties distinct from those of tomography or quantum simulation.

\end{abstract}

\maketitle



\vspace{10pt}


\section[0]{Introduction}

Which properties of quantum matter are operationally accessible?
For decades, the answer has been shaped by response theory~\cite{kubo1957statistical}. In spectroscopic experiments~\cite{braicovich2010magnetic,klepp2014fundamental} one perturbs the system, e.g. with a beam of neutrons, and reads out its ensuing dynamics, thereby reconstructing susceptibilities from how the system reacts to external stimuli~\cite{boothroyd2020principles}. Conversely, macroscopic quantities such as magnetic susceptibility and conductivity~\cite{roman2025linear} are given – through the Kubo-Greenwood formulae~\cite{greenwood1958boltzmann} – as retarded correlators of local operators. This paradigm forms the cornerstone of probing many-body properties and understanding many-body physics~\cite{marconi2008fluctuation,roggero2019dynamic,baez2020dynamical}.

Quantum engineering brings this question back into focus from a new perspective. 
Across atomic, solid-state, and photonic settings, 
controllable ancillae have become central to applications in quantum technologies and probe microscopy,   including superconducting scanning tunnelling microscopy (STM) tips~\cite{binnig1982surface,kim2018toward}, nitrogen-vacancy  centres in diamond~\cite{balasubramanian2008nanoscale,boss2017quantum}, cavity modes~\cite{dorfman2016nonlinear,xia2023entanglement}, \sun{or scanning-probe sensors~\cite{huxter2025multiplexed},} that couple coherently to a target and can themselves be measured with high fidelity, respectively. Indeed, nuclear magnetic resonance (NMR) uses the simplest of quantum probes~\cite{gershenfeld1997bulk,alvarez2011measuring} – a single spin – and will be used to illustrate our ideas below.  
Once such auxiliary degrees of freedom are treated as explicit quantum systems~\cite{rossini2007decoherence} rather than merely as sources of perturbation, the traditional response-theory picture is no longer the only operational framework available~\cite{wang2019characterization}. 
This shift raises a fundamental conceptual question that remains open: Do quantumness and entanglement of probes fundamentally enlarge what can be learned about matter~\cite{chiara2006scheme,irfan2021quantum}? And more importantly, how can entangled probes learn many-body systems? This is not straightforward; direct measurements on the system might appear more powerful since they provide direct access to its dynamics and can involve substantially more degrees, though in practice the number of conventional probes is often restricted.


Here we first show at the operator level that measurements on coherently controlled quantum probes fundamentally enlarge what can be learned about many-body systems. \sun{
A key ingredient of our framework is the quantum-circuit description shown in Fig.~\ref{fig:fig_1_demo}. This provides a unified operational model encompassing spectroscopy (and its quantum simulation)~\cite{roggero2019dynamic,kokcu2024linear,sun2025probing,bruschi2024quantum,lee2021simulation,fomichev2024simulating,chan2025algorithmic}, scanning-probe microscopy~\cite{huxter2023imaging,huxter2025multiplexed}, and probe-based quantum techniques~\cite{chiesa2019quantum,stenger2022simulating,gjonbalaj2026unified,wang2021classical,wang2019characterization}, and moreover, provides the protocol for learning many-body properties from the (entangled) probes rather than direct measurements on the system.
}
The advantage of quantum probes arises because the reduced dynamics of the probe-based approach encode correlator sectors beyond the causally ordered response functions of system-only measurements~\cite{dong2025efficient}. \sun{Specifically, quantum probe can learn system fluctuations beyond response theory by virtue of an enlarged operator space, which encompasses $2^{k-1}$ sectors of $k$-point correlators, including anti-commutators and their mixed orderings. In contrast, system-only approaches provide access only to commutators. Therefore, measurements on the probe provide access to fluctuations, non-equilibrium information}, and entanglement-related observables such as entropy that are not, in general, determined by response theory \sun{nor single-probe alone.}

The notion of quantum probe has broader implications and connections to established techniques. This includes proposals, such as entangled probes for entanglement extraction~\cite{chiara2006scheme} based on neutron interferometry~\cite {erdosi2013proving}, and experimental techniques such as NMR and superconducting STM tips using Cooper pairs to probe more entangled states than metallic tips~\cite{kim2018toward}. 
There has been substantial development in multimodal scanning probe spectroscopy and quantum sensing theory~\cite{gjonbalaj2026unified,wang2021classical,wang2019characterization} and experiments~\cite{huxter2023imaging,huxter2025multiplexed}. 
In quantum technology, a related example of this probe-based viewpoint arises through quantum noise spectroscopy in which controlled quantum probes can be used to infer environmental noise spectra~\cite{alvarez2011measuring,yoneda2023noise,yan2013rotating,sung2021multi} \sun{while here the objective is to extract many-body features. Recent developments have extended linear response theory to non-Hermitian perturbations~\cite{pan2020nonHermitian,geier2022non} though they still remain within the linear response paradigm.}
 


To establish the probe advantage in extracting many-body properties, we generalise to multiple entangled probes and show that entangled probes systematically enhance learnability. 
As a concrete application, we demonstrate that the entanglement entropy expanded to order $K$ can be learned by $\mathcal{O}(K)$ entangled probes, which is independent of system size. More generally, to access matter correlators up to order $K$, it suffices to employ $\mathcal{O}(K)$ entangled probes. This is different from nonlinear spectroscopy~\cite{cheung2025quantum,wang2019characterization}, such as two-dimensional coherent spectroscopy~\cite{wan2019resolving,nandkishore2021spectroscopic} and entangling matter with a single sensor~\cite{meinel2022quantum, pfender2019high,wang2019characterization,wang2021classical}, in both learnability and operational point of view, where the probes are not entangled. Our work, based on a quantum circuit model, provides an efficient protocol to access the many-body properties. Typical quantum algorithms, however, usually need full access to the system evolution (which could be regarded as a system-based approach). For example, computing higher-order correlators typically requires controlled evolution~\cite{roggero2019dynamic,kokcu2024linear} or imaginary-time evolution~\cite{wang2025computing}.    Our work thus establishes a fundamentally new route to learning many-body properties, distinct from tomography-based methods~\cite{huang2020predicting,miquel2002interpretation} or quantum algorithms~\cite{roggero2019dynamic,kokcu2024linear,sun2025probing,bruschi2024quantum,lee2021simulation,fomichev2024simulating,chan2025algorithmic}, which do not explicitly exploit the system's correlation structure.

\begin{figure*}[htbp]
  \centering
  \includegraphics[width=1.0\textwidth]{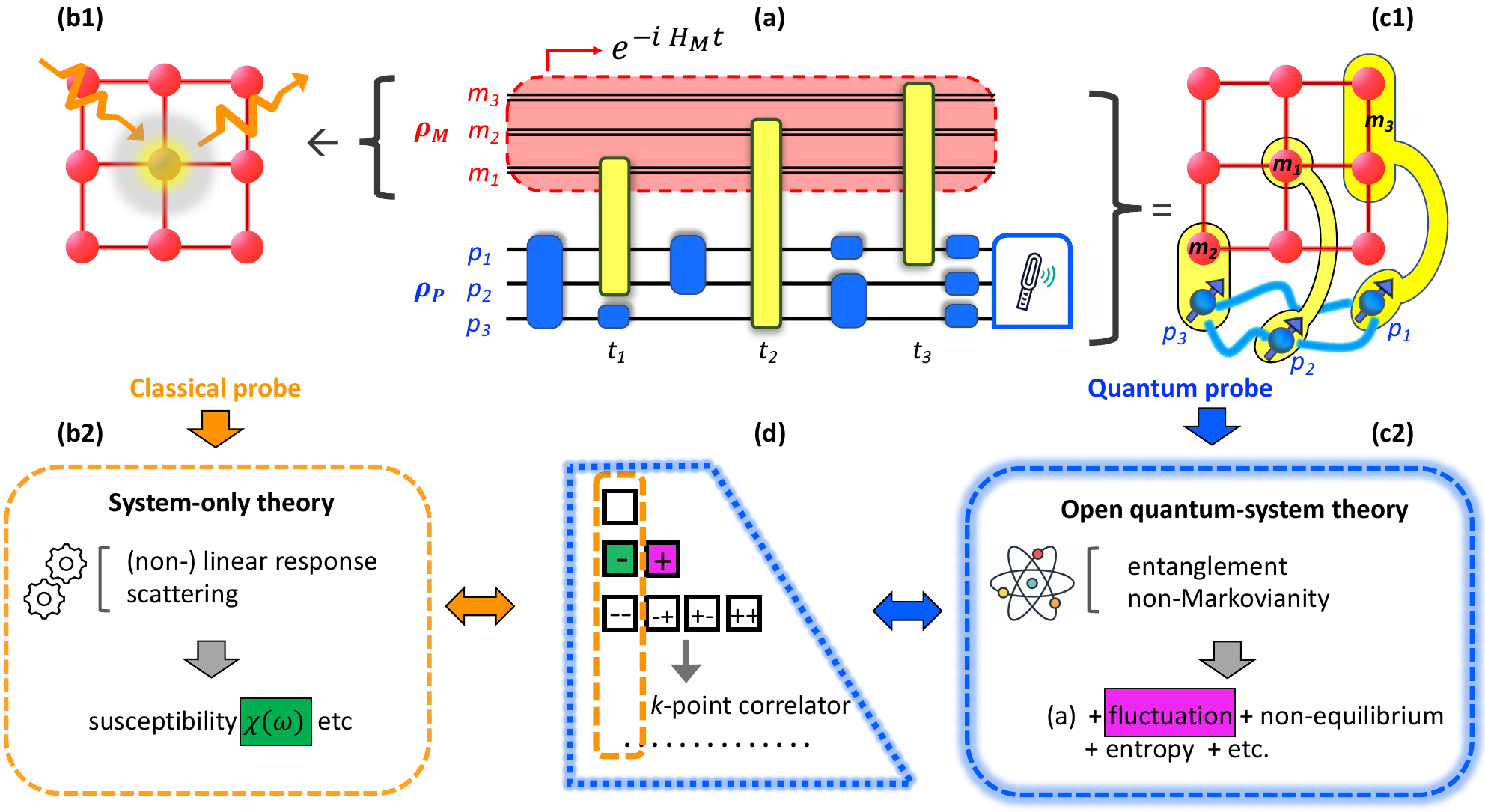}
  \caption{\textbf{Framework of quantum-probe learning: coherent ancillae turn spectroscopy into an open-system readout problem.}
  (a) Quantum-circuit representation of the probe protocol: a controllable probe register $\rho_P$ undergoes coherent control, couples through addressable channels $p$ to selected matter operators $M_{m(p)}$ in the target state $\rho_M$, and is finally measured so that information about matter is inferred from reduced probe observables.
  (b1)--(b2) Effective classical-probe regime: when the probe is treated as a prescribed perturbation rather than as a measured quantum subsystem, the circuit reduces to the target system alone; correspondingly, scattering leads to system-only response theory with susceptibilities such as $\chi(\omega)$.
  (c1)--(c2) Genuine quantum-probe regime, where coherent probe control (blue curves), probe--matter entanglement (yellow bulbs), and non-Markovian open-system dynamics transfer richer information from matter to probe. The example in (c1) uses $(m(p_1),m(p_2),m(p_3))=(m_3,m_1,m_2)$, with $m_3$ denoting an extended two-site subsystem.
  (d) Corresponding learning spaces for a $k$-point correlator: the orange dashed strip is the single nested-commutator sector available to system-only response, whereas the surrounding blue dashed region denotes the broader family of mixed commutator/anti-commutator sectors opened by quantum probes. For $k=2$, these reduce to the response and fluctuation sectors, respectively.}
  \label{fig:fig_1_demo}
\end{figure*}
\section[1]{Quantum versus Classical Probes}
\label{sec:probe_framework}

Let us start with what we mean by an effective classical-probe regime.
The term does not simply imply that the physical probe itself, such as a beam of neutrons or photons, is literally classical.
Rather, it refers to an operational regime in which the probe's own quantum state is not used as the explicit information-bearing readout register.
In that effective description, the probe acts only through the perturbation it applies and the signal it helps induce, while the learned quantities are encoded in observables of the target system itself.
\sun{In this sense, a classical probe effectively acts as a c-number source, $\lambda(t)M(t)\equiv (\lambda(t)I_P)\otimes M(t)$ in a formal enlarged space, such as that in neutron scattering, whereas a quantum probe is represented by an operator-valued coupling $h_P(t)\otimes M(t)$ (see Methods).
}

The classical-probe scenario is precisely the one described by response theory.
The target is driven by externally prescribed perturbations and subsequently measured, so the effective circuit description contains only the matter degrees of freedom as explicit quantum wires; schematically, this is the regime illustrated in Fig.~\ref{fig:fig_1_demo}(b1)--(b2).
The resulting observables are susceptibilities and higher-order response functions built from causally ordered commutator structures.
What this framework does not exploit is the probe as an independent quantum memory of the interaction history.

The quantum-probe regime is operationally different.
Here, the probe remains an explicitly modelled, coherently controlled quantum system throughout the protocol: it is initialised, driven, coupled to selected matter operators, allowed to become entangled with the target, and finally measured.
In this setting, the reduced final state of the probe itself becomes the readout channel for matter information, as depicted schematically in Fig.~\ref{fig:fig_1_demo}(a) and Fig.~\ref{fig:fig_1_demo}(c1)--(c2).

The essential resource in this regime is probe-matter entanglement and, more generally, {the controlled history by which that entanglement is created and accessed.}
Because the probe is read out only after a designed sequence of coherent interactions, its final observables carry a record of how the target has imprinted itself onto the probe over the full interaction history, with non-Markovian information flowing into the probe.

Viewed in this way, the operational shift is not that one must build a probe register comparable in size to the many-body system.
Rather, one uses a small, well-controlled quantum register and redesigns the interaction schedule, coherent control, and final readout so that the probe functions as a compact quantum circuit for learning matter (see Fig.~\ref{fig:fig_1_demo}(a)).
{This is the key message of the framework: for platforms that already possess controllable ancillae, the decisive step is often not upgrading the hardware (i.e. adding more probes), but a different orchestration of the same probe--matter dynamics.}
{
From this starting point we derive a unified learning interface that separates what the laboratory can \emph{dial} (probe controls, couplings, etc.) from what the matter can \emph{encode} (multi-time correlators of target matter).}


It is worth clarifying the distinction between quantum and classical probes. A classical probe still interact with a quantum system and thus require the quantum resources. However, the key is that it functions only as an external perturbation or measurement channel and thus the accessible information is constrained by the system dynamics alone. By contrast, this work exploits the probe's quantumness to enlarge the class of learnable quantities.

This shift may yield a broader notion of learnability.
Once the probe is treated as a measured open quantum subsystem rather than as a hidden perturbation source, the operationally accessible information channel need no longer be confined to the standard response sector.
The above insights of quantum probe advantage are algebraic: standard response theory is confined to fully retarded commutator correlators, whereas quantum probes naturally access the $2^{k-1}$ sectors of $k$-point correlators, including anti-commutators and mixed orderings.
The simplest nontrivial instance already appears for a single coherently controlled probe, whose final coherence separates response-like and fluctuation-like information \sun{as extensively studied in other contexts~\cite{geier2022non,wang2021classical,wang2019characterization,pfender2019high}}.
We turn to that minimal example below.


\section[2]{Power of A Single Quantum Probe}
\label{sec:probe_decoherence}

{To illustrate our framework, we first consider a probe consisting of a single quantum spin as a starting example. This is a reframing of the familiar schemes used in NMR, where independent nuclear spins form the probe, or muon spin resonance ($\mu$-SR) where implanted muons provide the probe spins. In this context it is well known that the probe spin dynamics is affected by its environment in two ways leading to energy relaxation ($T_1$) and dephasing ($T_2$). If the system being studied is in thermal equilibrium, these are related by a fluctuation-dissipation relation. In general, however, they provide distinct information about the system. }

This is formalised in the two canonical correlators of a many-body interaction-pictured observable 
$M(t)$: the commutator correlator $\mathcal{M}^-(t,t')=\langle [M(t),M(t')] \rangle /2$ and the anti-commutator correlator $\mathcal{M}^+(t,t')=\langle \{M(t),M(t')\} \rangle /2$. 
For Hermitian $M(t)$, $\mathcal{M}^-$ and $\mathcal{M}^+$ are imaginary and real respectively. 
They represent two physically distinct sectors of many-body dynamics: the commutator sector $\mathcal{M}^-$ generates causal response (the energy decay in our single spin example) and {can be detected using standard linear response theory, see Fig.~\ref{fig:fig_1_demo}(b2)}; 
by contrast, $\mathcal{M}^+$ characterises fluctuations of $M$ (dephasing of our single spin) and is not captured by conventional response theory. 
The common bridge between the two sectors is the fluctuation–dissipation theorem, which relates $\mathcal{M}^+$ and $\mathcal{M}^-$ only under the special assumption that the system is in thermal equilibrium.  
Away from equilibrium, the two objects encode independent information about the many-body system. \sun{Note that in another context, a single quantum probe has been found to extract commutators and anti-commutators in quantum sensing~\cite{wang2021classical,wang2019characterization,pfender2019high} or non-Hermitian perturbation~\cite{geier2022non}. However, our central objective is to learn the matter dynamics rather than probe dynamics. To that end, we design an efficient algorithm for learning many-body properties thanks to the unified circuit-model framework developed in \autoref{fig:fig_1_demo}.
}

A quantum probe provides a direct operational channel to this otherwise hidden fluctuation sector, {as indicated in Fig.~\ref{fig:fig_1_demo}(c2) for its Fourier-transformed form.} 
To see this, consider the simplest setting: a single qubit probe $P$ longitudinally coupled to a matter observable $M(t)$ through an interaction $H_I(t)=\frac12 [\cos\theta(t)\sigma_x+\sin\theta(t)\sigma_z]\otimes M(t)$, where $\theta(t)$ is a controllable modulation generated by probe control and the initial state is factorized as $\ket{+}\bra{+}_P\otimes\rho_M$. 
The probe coherence at time $T$ is captured by the expected off-diagonal element $\mathbb{E}[\rho_{01}(T)]= e^{\Gamma(T) + i\Phi(T)}$.
Note that for simplicity, we assume the matter's statistics is a zero-mean Gaussian distribution. With some derivation, one obtains the expression of the decay exponent $\Gamma$ and the coherent phase shift $i\Phi$ in terms of the two correlators $\mathcal{M}^+$ and $\mathcal{M}^-$:
\begin{equation}
\begin{aligned}
    \Gamma(T) &=-\ln2-\int_0^T\!dt\!\int_0^t\!dt'\sin\theta(t)\sin\theta(t')\mathcal{M}^+(t,t'),
\\
\Phi(T) &=-i\!\int_0^T\!dt\!\int_0^{t}\!dt'\sin(\theta(t)-\theta(t'))\mathcal{M}^-(t,t').
\end{aligned}
\end{equation}
These two terms have a transparent interpretation in the language of qubit dynamics. 
The quantity $\Gamma$ governs the irreversible decay of probe coherence: it is the genuinely open-system part of the probe evolution. 
Remarkably, it depends exclusively on the fluctuation kernel $\mathcal{M}^+$. 
The probe therefore converts microscopic fluctuations of the many-body observable $M$ into a directly measurable decoherence signal. 
In contrast, the phase $i\Phi$ represents a coherent unitary phase shift of the probe and depends only on the commutator correlator $\mathcal{M}^-$. 
This is precisely the retarded response sector accessed in conventional susceptibility measurements. 

The probe–matter interaction therefore separates two complementary channels of information flow. 
The unitary part of the probe evolution, encoded in $i\Phi$, mirrors the causal response of the matter system and corresponds to the same commutator structure that underlies response theory. 
The non-unitary part, encoded in $\Gamma$, reflects the fluctuations of the matter system that have flowed into the probe through probe–matter entanglement and appeared as irreversible decoherence in the reduced probe dynamics. 
Since the joint evolution of $P$ and $M$ is unitary, this decoherence does not represent a loss of information at the global level; rather, it is a transfer of fluctuation information from $M$ into correlations with the probe. 
From this perspective, the probe acts as a detector that translates the fluctuation sector of many-body physics into experimentally accessible signals.

The key message is therefore not that response theory is ineffective — the matter perturbed by classical probe is described by a unitary theory, and it is structurally incomplete for open-system learning. 
A quantum probe, by contrast, inevitably experiences open dynamics once it becomes entangled with the system, and this open dynamics exposes the anti-commutator sector $\mathcal{M}^+$. 
Even the minimal dephasing experiment thus reveals a fundamental extension of the operational information accessible in many-body physics: probe decoherence transforms intrinsic fluctuations of a quantum system into a measurable signal.


We  formulate the following statement of probe advantage in this minimal setting:
\begin{observation}[Quantum probe: learning fluctuation beyond response theory] 
Given the setup described above, a quantum probe can learn fluctuation beyond response theory without relying on the fluctuation-dissipation theorem.
  \label{theo:decoherence}
\end{observation}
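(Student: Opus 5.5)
The plan is to derive the decay exponent $\Gamma(T)$ and phase $\Phi(T)$ quoted above. Then we show that the map from the probe controls $\theta(\cdot)$ to the pair $(\Gamma,\Phi)$ lets us reconstruct $\mathcal{M}^+(t,t')$ directly, without using $\mathcal{M}^-$ and without any equilibrium assumption. The argument has three parts.

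\emph{Step 1: reduced probe dynamics.} Work in the interaction picture with $U(T)=\mathcal{T}\exp[-i\int_0^T H_I(t)dt]$ acting on $\ket{+}\bra{+}_P\otimes\rho_M$. Write the probe operator as $h_P(t)=\frac12[\cos\theta(t)\sigma_x+\sin\theta(t)\sigma_z]$. Expand $\rho_{01}(T)=\bra{0}\Tr_M[U\rho U^\dagger]\ket{1}$ in a Dyson series and trace out $M$. Under the stated assumption that $M$ has zero-mean Gaussian statistics, all higher cumulants vanish. The cumulant expansion therefore truncates at second order and exponentiates, giving $\mathbb{E}[\rho_{01}(T)]=\frac12\exp[\cdots]$; the prefactor $\frac12$ accounts for the $-\ln 2$ in $\Gamma$.

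The second-order terms split into two kinds:
\begin{itemize}
\item terms where both insertions of $M$ sit on the same branch (forward or backward), which carry time-ordered products $M(t)M(t')$ with $t>t'$;
\item terms with one insertion on each branch, which carry $M(t')M(t)$.
\end{itemize}
Regroup these into symmetric and antisymmetric combinations, $M(t)M(t')=\mathcal{M}^+(t,t')+\mathcal{M}^-(t,t')$ inside expectation values. The probe matrix elements then combine as follows. The $\sigma_z$ parts acting on $\ket{0},\ket{1}$ give the dephasing factor $\sin\theta(t)\sin\theta(t')$ multiplying $\mathcal{M}^+$. The mixed $\sigma_x\sigma_z$ commutator contributions give $\sin(\theta(t)-\theta(t'))$ multiplying $\mathcal{M}^-$.

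\emph{Step 2: reality properties.} For Hermitian $M$, $\mathcal{M}^+$ is real and $\mathcal{M}^-$ is imaginary. Hence $\Gamma=\ln|\mathbb{E}\rho_{01}|$ depends only on $\mathcal{M}^+$, and $\Phi=\arg\mathbb{E}\rho_{01}$ depends only on $\mathcal{M}^-$. Both are real functionals, and they are read out separately by measuring $\langle\sigma_x\rangle$ and $\langle\sigma_y\rangle$ of the probe.

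\emph{Step 3: learnability of $\mathcal{M}^+$.} We must show that varying $\theta(\cdot)$ determines $\mathcal{M}^+$ pointwise. Define $s(t)=\sin\theta(t)$. Then $\Gamma+\ln 2=-\int_0^T\!\!\int_0^t s(t)s(t')\mathcal{M}^+(t,t')\,dt'\,dt$ is a quadratic form in $s$ with the symmetric kernel $\mathcal{M}^+$. Polarization recovers the kernel:
\begin{itemize}
\item choose $s=s_1+s_2$ built from narrow pulses localized near $t_1$ and $t_2$, together with $s_1$ and $s_2$ alone;
\item compute $\Gamma[s_1+s_2]-\Gamma[s_1]-\Gamma[s_2]$, which isolates $\mathcal{M}^+(t_1,t_2)$.
\end{itemize}
Since only $\Gamma$ is used, no knowledge of $\mathcal{M}^-$ and no fluctuation--dissipation relation is invoked. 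A response measurement cannot do the same, because it only yields the commutator $\mathcal{M}^-$. The standard counterexample is two states with equal $\mathcal{M}^-$ but different $\mathcal{M}^+$, for instance a Gibbs state and a nonequilibrium mixture commuting with $H$; these share their response functions but give distinct probe decoherence. This establishes the separation.

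\emph{Main obstacle.} The delicate step is Step 1: the bookkeeping of the branch and ordering signs when $h_P(t)$ does not commute at different times (because $\theta$ varies), so that the probe factors combine exactly into $\sin\theta\sin\theta'$ and $\sin(\theta-\theta')$. The Gaussian truncation also needs care, since exponentiation requires the probe operators to be handled through a second-order cumulant for noncommuting operators. I would first verify the pure-dephasing case $\theta\equiv\pi/2$, where the result is exact. I would then treat general $\theta$ via the second-order Magnus/cumulant expansion, which is exact under the Gaussian assumption.
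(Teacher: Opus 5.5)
Your route is genuinely different from the paper's. The paper gives no model-specific argument. It reads the Observation off as the $k=2$ case of Theorem~\ref{theo:probe_adv}. The supplementary proof of that theorem has three ingredients: it writes the order-$k$ readout as $\mathbf d=F\mathbf M$ on the $2^{k-1}$-dimensional braketor space (via a Walsh--Hadamard change of basis from ordered branch strings); it obtains recovery under an \emph{assumed} full-column-rank condition on $F$; and it notes that the response series contains only the nested commutator. At $k=2$ the two sectors are $\mathcal M^{\pm}$.

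Your argument is confined to the single-qubit model and $k=2$, but it adds two things the paper leaves implicit. Your polarization step exhibits a concrete control family resolving $\mathcal M^+$ pointwise, which checks the rank condition in this model. Your separation example is what turns ``response yields only $\mathcal M^-$'' into ``response does not determine $\mathcal M^+$''. Two steps, however, need repair.

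First, the second-order cumulant is not exact under Gaussianity for general $\theta(\cdot)$. Gaussian statistics remove higher matter cumulants. But once $h_P(t)$ has a $\sigma_x$ component it no longer acts diagonally on $\ket{0},\ket{1}$, and when $\theta$ varies it no longer commutes with itself at other times; a linear ramp of $\theta$ is the spin--boson model. Already for constant $\theta$ the exact Gaussian result is $\rho_{01}(T)=\tfrac12\bigl[\cos^2\theta+\sin^2\theta\,e^{-Q}\bigr]$, with $Q=\int_0^T\!dt\int_0^t\!dt'\,\mathcal M^+(t,t')$. This agrees with $\tfrac12 e^{-\sin^2\theta\,Q}$ only to first order in $Q$.

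This matters in Step~3. Making $s=\sin\theta$ vanish between your pulses forces $\cos\theta=\pm1$ there, so the $\sigma_x$ coupling stays on at full strength and back-acts on the matter. Its fourth-order cross terms with the two pulses are also proportional to $A_1A_2$, the product of the pulse areas. Your polarization difference therefore returns $\mathcal M^+(t_1,t_2)$ only at leading order in the probe--matter coupling. That is acceptable if you state the result at Dyson order two, which is the level at which Theorem~\ref{theo:probe_adv} operates; the displayed $\Gamma,\Phi$ formula should be read the same way.

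For an exact statement, keep $\theta(t)\in\{\pm\pi/2\}$. Then the $h_P(t)$ commute, $\Phi\equiv0$, and $\Gamma+\ln2=-\int_0^T\!dt\int_0^t\!dt'\,s(t)s(t')\mathcal M^+(t,t')$ holds exactly for Gaussian matter. For disjoint intervals $I,J$, the combination $\Gamma[1-2\chi_I-2\chi_J]-\Gamma[1-2\chi_I]-\Gamma[1-2\chi_J]+\Gamma[1]=-4\int_I\!dt\int_J\!dt'\,\mathcal M^+(t,t')$ uses only $\pm1$ sequences and gives pointwise values as the intervals shrink. In your three-term version, polarize $\Gamma+\ln2$ rather than $\Gamma$, otherwise a stray $\ln2$ survives.

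Second, your separation example fails as stated. For a stationary state $\sum_n p_n\ket{n}\bra{n}$ one has $\langle[M(t),M(0)]\rangle=\sum_{n,m}(p_n-p_m)|M_{nm}|^2e^{i(E_n-E_m)t}$. In a generic finite system (nondegenerate Bohr frequencies, connected transition graph of $M$) the response therefore fixes every $p_n$, and hence $\mathcal M^+$. So a Gibbs state and a stationary mixture do not share response functions in general.

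The clean example sits inside your own Gaussian setting. For quasi-free bosonic matter with $M$ linear in the modes, $[M(t),M(t')]$ is a c-number, so every state has the same $\mathcal M^-$. Meanwhile $\mathcal M^+$ differs between, for example, thermal states at two temperatures, or a stationary state with mode-dependent occupations to which no fluctuation--dissipation relation applies.

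Finally, a small slip in Step~1: the backward-branch pair carries $M(t')M(t)$, and the two mixed placements carry both orderings. Regrouping into $\mathcal M^\pm$ still produces your kernels $\sin\theta\sin\theta'$ and $\sin(\theta-\theta')$.
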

The Observation~\ref{theo:decoherence} is a natural consequence of Theorem~\ref{theo:probe_adv} in the subsequent Section.

\begin{figure*}[tbp]
  \centering
  \includegraphics[width=0.8\textwidth]{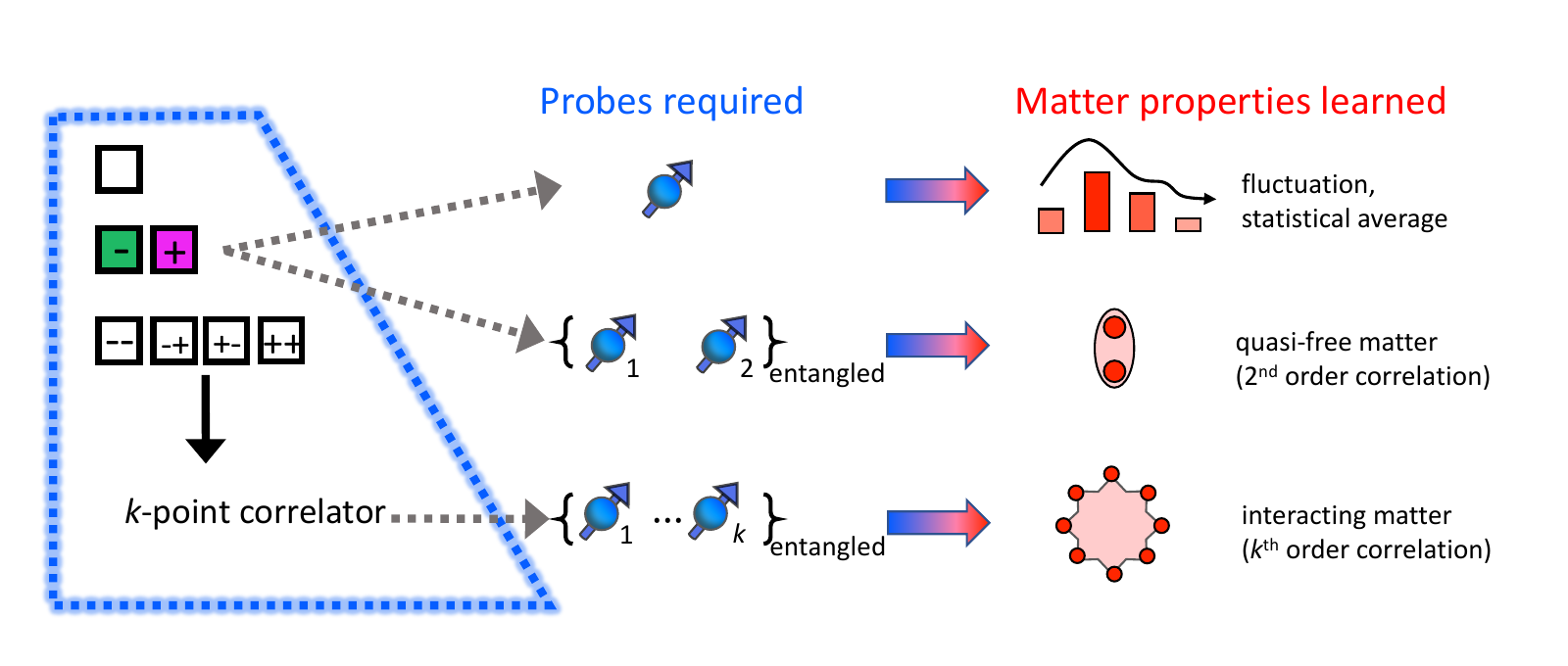}
  \caption{\textbf{Probe resources are set by correlator complexity, not system size.}
  The left wedge represents the enlarged $k$-point correlator landscape opened by probe-based learning, with mixed commutator/anti-commutator sectors beyond the single retarded response branch.
  The middle column indicates the minimal probe resources needed to access different levels of many-body structure: a single probe already resolves fluctuations and statistical averages; two entangled probes suffice for quasi-free matter, where second-order correlators determine the relevant physics; and $k$ entangled probes access genuinely interacting matter through $k$th-order correlator structure.
  The right column summarizes the corresponding learned properties, emphasizing that the probe overhead scales with the order of the target correlations rather than with the full system size.}
  \label{fig:fig_2_probe_circuit}
\end{figure*}



\section[3]{General Algebraic Structure and Protocol of Quantum Probe-Based Learning}
\label{sec:probe_landscape}

With the dephasing example in Sec.~\ref{sec:probe_decoherence} as a warm-up, we now step back and ask a fundamental question: {what is the universal object a quantum probe learns about matter, and how does this differ from system-only response?}
This section sets up a compact algebraic landscape that will be used repeatedly later, while keeping the discussion operational:
every statement below is phrased in terms of {what a laboratory can dial} (probe control and coupling) and {what it can read out} (final probe observables).


\paragraph{Setting.}
We consider a controlled probe register $P$ (one or several qubits/modes) coupled to a many-body system $M$.
The total Hamiltonian is $H(t)=H_P(t)+H_M+H_{PM}(t)$, where $H_P(t)$ is a user-controlled probe drive and $H_M$ is the intrinsic matter Hamiltonian.
The only structural assumption is that the $P-M$ interaction  is a sum of single-probe-channel operators.
Each probe channel is assigned to a chosen matter operator by a fixed channel-to-matter assignment $p\mapsto m(p)$:
\begin{equation}
\label{eq:general_coupling}
  H_{PM}(t)=\sum_{p\in P} h_p(t)\otimes M_{m(p)},
\end{equation}
In Eq.~\eqref{eq:general_coupling}, $P$ labels the addressable probe--matter coupling channels available in the probe register.
In the main-text protocols we use the direct-addressing implementation, where each channel $p\in P$ is carried by one coherently controlled probe degree of freedom.
Thus the number of channels and the number of probes coincide in each run, which is the convention used in the resource statements below.
The operator $h_p(t)$ acts on the probe degree of freedom carrying channel $p$, e.g., $h_p(t) = f_p(t)Z_p$ with $f_p(t)$ being a time-dependent modulation, while $M_{m(p)}$ is the matter operator addressed by that channel.
Here $p$ is a probe-channel label and $m(p)$ is the assigned matter label; they need not coincide.
The matter operator $M_{m(p)}$ may be supported on a site, a bond, or an extended region.
For example, Fig.~\ref{fig:fig_1_demo}(a,c1) illustrates three probe channels with
$(m(p_1),m(p_2),m(p_3))=(m_3,m_1,m_2)$.
Here $m_3$ is an extended two-site subsystem, showing that $m(p)$ can label a site, bond, or larger region; the operator in the $j$th probe slot is therefore $M_{m(p_j)}$, not necessarily $M_{m_j}$.
In the single-probe dephasing example of Sec.~\ref{sec:probe_decoherence}, this bookkeeping is trivial: there is only one addressed matter operator, so the channel label carries no nontrivial subsystem assignment.
More general interfaces, in which time-windowed channel reassignment or probe reuse can make the required number of physical probes smaller than the number of addressed channels, are discussed in the Supplementary Material.

\paragraph{From probe readout to matter correlators.}
All experimentally accessible information about $M$ enters through probe observables measured at a final time $T$,
$\langle O_P(T)\rangle_{\rho_P}
=\Tr_{PM}\!\big[(O_P\otimes \mathbb{I}_M)\,U_I(T)\,(\rho_P\otimes\rho_M)\,U_I^\dagger(T)\big],$
where $U_I$ is the propagator in the interaction picture with respect to $H_P(t)+H_M$.
Because probe--matter entanglement generically develops, the reduced probe state $\rho_P(T)$ is an open-system object whose Dyson expansion naturally produces \emph{nested} operator structures on $M$.

A compact way to organize those structures is to keep track of whether each nesting step is a commutator or an anti-commutator\cite{dong2025efficient}.
We use the ``braketor'' shorthand
$\llbracket X,Y\rrbracket_{\mu} := XY+(-1)^{\mu}YX$, with $\mu\in\{0,1\}$, so that $\mu=1$ gives a commutator and $\mu=0$ gives an anti-commutator.
With this notation, the probe expectation admits the universal expansion~\cite{dong2025efficient} (see {Methods} for the derivation and precise definitions of the control tensor):
\widetext
\begin{equation}
  \begin{aligned}
  \langle O_P(T)\rangle_{\rho_P}
&=\sum_{k\geq 0, ~\vec{p}, ~\vec{\mu}}(-i)^k
\int_0^T d_>\vec{t}\;
\mathcal{F}^{(k;\vec{\mu})}_{O_P;\vec{p};\vec{t}}~\big[H_P(\vec{t})\big] \times 
 \mathcal{M}^{\vec{\mu}}_{\vec{p}}(\vec{t}),   \\
 \mathcal{M}^{\vec{\mu}}_{\vec{p}}(t_1,\ldots,t_k)
&:=
\frac{1}{2^{k-1}}\,
\widehat{P}_{\vec{p}}\Big(\expect{
\llbracket\cdots\llbracket
M_{m(p_1)}(t_1),M_{m(p_2)}(t_2)\rrbracket_{\mu_1},
M_{m(p_3)}(t_3)\rrbracket_{\mu_2}\cdots,
M_{m(p_k)}(t_k)\rrbracket_{\mu_{k-1}} }_{\rho_M}
\Big).
  \end{aligned}
  \label{eq:dyson_commutator}
\end{equation}
\twocolumngrid
The meaning of each symbol is as follows: the sum is over all orders $k$ of the Dyson expansion, $d_>\vec{t}$ denotes the time-ordered integration region $t_1\ge t_2\ge\cdots\ge t_k$,
$\vec{p}=(p_1,\ldots,p_k)$ indexes the probe-channel slots generated by the expansion, and the actual matter operators in those slots are written explicitly as $M_{m(p_j)}(t_j)$.
The binary vector $\vec{\mu}=(\mu_1,\ldots,\mu_{k-1})$ specifies the braketor sector: $\mu_j=1$ inserts a commutator at the $j$th nesting step, while $\mu_j=0$ inserts an anti-commutator.
The operator $\widehat{P}_{\vec{p}}$ is retained because the nested-braketor basis compares different operator orderings while keeping the external probe-channel tuple in a fixed canonical order.
For $k=2$, this means $\mathcal{M}^{(\mu)}_{p_1,p_2}(t_1,t_2)=\frac12[\expect{M_{m(p_1)}(t_1)M_{m(p_2)}(t_2)}_{\rho_M}+(-1)^\mu\expect{M_{m(p_1)}(t_2)M_{m(p_2)}(t_1)}_{\rho_M}]$ with $\mu=0,1$, where the second term is the $\widehat{P}_{(p_1,p_2)}$-relabelled reversed ordering and the subscript remains the probe-channel pair, not a matter-index pair.
The functional $\mathcal{F}^{(k;\vec{\mu})}_{O_P;\vec{p};\vec{t}}[H_P(\vec{t})]$ is a {control tensor}:
it is fully determined by the probe Hilbert space, the chosen initial state $\rho_P$, the applied probe control $H_P(t)$, and the final observable $O_P$.
In contrast, all dependence on the unknown state and dynamics of $M$ is isolated in the correlators
$\mathcal{M}^{\vec{\mu}}_{\vec{p}}(\vec{t})$, whose subscript records the probe-channel history and whose entries contain the corresponding matter operators $M_{m(p_j)}$.
Equation~\eqref{eq:dyson_commutator} is therefore the universal interface between ``dial settings'' and ``matter information'': the learning task is to engineer $\mathcal F$ so that the desired correlator sectors become identifiable from measured probe data.
Although it is written as a time-ordered convolution over insertion times, it is not meant to be inverted as a brute-force high-dimensional kernel.
The role of the probe control tensor $\mathcal F$ is precisely to shape this kernel: in the windowed protocols developed in the Methods and Supplementary Material, short probe--matter interaction windows localize the integral near chosen times, reducing the readout to calibrated linear equations for selected correlator sectors.
Broader or always-on couplings lead to the same inverse-problem structure after deconvolution.

\paragraph{The algebraic distinction and a physical picture.}
Equation~\eqref{eq:dyson_commutator} already exposes the conceptual separation between {system-only response theory} and {quantum probe-based open-system learning}.
Because the probe is an external quantum degree of freedom that is \emph{coherently} driven and then traced down to a reduced observable, its readout decomposes into correlator sectors labelled by $\vec{\mu}$, i.e., by the choice of commutator ($\mu_j=1$) or anti-commutator ($\mu_j=0$) at each nesting level.
Nothing in the reduction enforces $\vec{\mu}=(0,\ldots,0)$.
By contrast, conventional response protocols perturb and measure $M$ itself, and causal ordering restricts the learned objects to the fully retarded edge: nested commutators only.
This contrast is already explicit at $k=2$: response isolates $\mathcal{M}^-$, while a probe can also resolve $\mathcal{M}^+$.

A useful way to remember \emph{why} the number of sectors grows is to think in time-ordered ``interaction windows'' $t_1\!>\!t_2\!>\!\cdots\!>\!t_k$.
The Dyson expansion forces the nesting to follow this time ordering, but at each new insertion, there is a binary choice: the additional matter operator can enter the reduced probe observable through a \emph{response-like} channel (a commutator with what came before) or a {fluctuation-like} channel (an anti-commutator).
Iterating this choice produces a branching, time-ordered tree with $2^{k-1}$ leaves at order $k$.
In this sense, a quantum probe can ``learn sequentially'' along the interaction history, combining response and fluctuation information in all algebraically distinct ways.
The Gaussian setting discussed in Sec.~2 is the simplest instance of this landscape, where only the lowest sectors are needed; beyond Gaussianity, the higher-order branches encode genuinely new multi-time structure.

We emphasize that the branching picture above is a mnemonic for the algebra rather than a complete physical taxonomy.
The exact content of Eq.~\eqref{eq:dyson_commutator} is that probe readout resolves a family of nested commutator/anti-commutator correlators, while system-only response is confined to the fully retarded sector.
Beyond the lowest orders, labels such as ``response-like'' and ``fluctuation-like'' should therefore be understood only heuristically.
Out of equilibrium the different sectors are generally independent; in equilibrium they may be linked by KMS-type constraints, with the fluctuation--dissipation relation at second order as the simplest familiar case.
For our purposes, Eq.~\eqref{eq:dyson_commutator} should be read as an operator-level map of what probe measurements can access.

This discussion can be summarized in the following structural counting statement.
\begin{theorem}[Quantum probe advantage]
  Regardless of the probe size, even for a one-qubit probe, a quantum probe can learn $2^{k-1}$ distinct correlator sectors while a system-only (classical) response protocol can only learn one.
\label{theo:probe_adv}
\end{theorem}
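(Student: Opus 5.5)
The plan is to prove the two halves of the statement separately from the universal expansion \eqref{eq:dyson_commutator}: an upper bound (system-only response reaches only $\vec\mu=(1,\ldots,1)$) and a lower bound (a single-qubit probe can have a control tensor $\mathcal F^{(k;\vec\mu)}$ that is nonzero and linearly independent across all $2^{k-1}$ choices of $\vec\mu$).

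\emph{Response side.} Take a classical probe to be a c-number source, $H(t)=H_M+\sum_p\lambda_p(t)M_{m(p)}$, with the readout an observable $O_M$ of the matter. In the interaction picture I will Dyson-expand $\langle O_M(T)\rangle$ in the Heisenberg form $U_I^\dagger O_M U_I$. Because $\lambda_p(t)$ is a scalar, each order contributes $i[\lambda_p(t_j)M_{m(p)}(t_j),\,\cdot\,]$, which gives the Kubo nested-commutator structure $\langle[\cdots[O_M(T),M(t_1)],\ldots,M(t_k)]\rangle$. Identifying $O_M$ with one of the inserted operators, this is the single sector $\vec\mu=(1,\ldots,1)$. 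Mathematically, an anti-commutator would require an operator-valued coefficient on the other side of the product, and a c-number supplies none. So whatever the choice of $\lambda_p(t)$ and $O_M$, the measured data are linear functionals of only this one sector.

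\emph{Probe side.} I will rederive \eqref{eq:dyson_commutator} with $H_{PM}=\sum_p h_p(t)\otimes M_{m(p)}$. Each Dyson step acts as a commutator $[h\otimes M,X]$. Writing this as
\[
[h\otimes M,\,A\otimes B]=\tfrac12\left([h,A]\otimes\{M,B\}+\{h,A\}\otimes[M,B]\right),
\]
splits every step into two branches, one with a commutator on the matter and one with an anti-commutator. After $k$ insertions and a partial trace against $O_P\otimes\rho_M$, one obtains a probe factor $\mathcal F^{(k;\vec\mu)}$ built from the complementary probe braketors $\llbracket h_{p_1},\ldots\rrbracket_{1-\mu_j}$. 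The first step is fixed by the trace, which is why there are $2^{k-1}$ sectors rather than $2^k$. For a single qubit, with $h(t)=\frac12[\cos\theta(t)\sigma_x+\sin\theta(t)\sigma_z]$ as in Sec.~\ref{sec:probe_decoherence}, Pauli algebra lets each probe braketor be computed explicitly. An anti-commutator of Paulis gives a scalar identity part and a commutator gives a Pauli. By choosing $\theta(t_j)$ independently within short interaction windows, together with $\rho_P$ and $O_P$, I can make any prescribed pattern of commutator and anti-commutator factors have a nonzero trace. The $k=2$ case reproduces $\Gamma$ (the $\mathcal M^+$ sector) and $\Phi$ (the $\mathcal M^-$ sector), which serves as the base check.

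\emph{Main obstacle.} The difficult step is showing that the $2^{k-1}$ control functions, viewed as functions of the angles $(\theta(t_1),\ldots,\theta(t_k))$, are linearly independent, so that the sectors are separately identifiable rather than merely present. I would first try an induction on $k$. For each fixed angle setting at the earlier slots, varying $\theta(t_k)$ should give the new commutator and anti-commutator branches distinct trigonometric dependences, for example $\sin(\theta_k-\theta_{k-1})$ against $\cos(\theta_k-\theta_{k-1})$. A Vandermonde- or Fourier-type argument would then let the windowed readouts be inverted for every $\vec\mu$. Since the probe is a single qubit, the resource is independent of its size, which gives the claim.
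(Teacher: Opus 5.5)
\textbf{The single-qubit achievability step fails for your coupling once $k\ge 3$; it is an obstruction, not just a hard step.}

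What is sound: your response-side argument is the paper's. Your derivation of the probe expansion from $[h\otimes M,A\otimes B]=\tfrac12([h,A]\otimes\{M,B\}+\{h,A\}\otimes[M,B])$ is correct. It is a more direct route than the paper's forward/backward branch strings followed by the Walsh--Hadamard change of coordinates. The matter trace forces the outermost matter braketor to be an anti-commutator. The coefficient of $\mathcal M^{\vec\mu}$ is then, up to a constant, $\Tr_P([O_P,h_1]\,\llbracket h_2,\cdots\llbracket h_k,\rho_P\rrbracket_{1-\mu_{k-1}}\cdots\rrbracket_{1-\mu_1})$.

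Why the claim that every pattern can be made nonzero fails:
\begin{itemize}
\item Your $h=\tfrac12(\cos\theta\,\sigma_x+\sin\theta\,\sigma_z)$ is traceless. It stays traceless under any probe drive, because $\tilde h=U_P^\dagger hU_P$.
\item Hence $[O_P,h_1]$ is traceless, and it annihilates every multiple of the identity in the final trace.
\item For $k=3$ and $\vec\mu=(1,0)$ the probe factor contains $\{h_2,[h_3,\rho_P]\}$. Here $[h_3,\rho_P]$ is a traceless combination of Paulis. Its anti-commutator with the Pauli $h_2$ is proportional to $I$, which is exactly the fact you quote.
\item So the coefficient of $\mathcal M^{(1,0)}=\tfrac14\langle\{[M(t_1),M(t_2)],M(t_3)\}\rangle$ vanishes identically. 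This holds for every choice of angles, $\rho_P$, $O_P$ and $H_P$, windowed or not.
\end{itemize}
More generally, track scalar versus Pauli components. After the first commutator step, anti-commutator steps must occur in adjacent pairs, or the term dies. So at most a Fibonacci number of the $2^{k-1}$ coefficients are not identically zero: 3 of 4 at $k=3$, and 5 of 8 at $k=4$. The control matrix therefore never reaches rank $2^{k-1}$, and no induction or Vandermonde/Fourier argument in the angles can rescue it. Your $k=2$ base case is the only order at which the claim holds.

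\textbf{How the paper avoids this.} The paper's proof never shows that one qubit attains full rank. It proves three things:
\begin{itemize}
\item the order-$k$ readout is $\mathbf d^{(k)}_{\vec p}=F^{(k)}_{\vec p}\mathbf M^{(k)}_{\vec p}$ with $F^{(k)}_{\vec p}=G^{(k)}_{\vec p}H_{k-1}$, acting on the $2^{k-1}$-dimensional sector space;
\item system-only response sees only the fully retarded sector;
\item the sector vector is recoverable when $\rank F^{(k)}_{\vec p}=2^{k-1}$.
\end{itemize}
Its Remark explicitly treats this rank condition as a design requirement, not something proved. So you must either settle for that conditional statement, or change the probe model. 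Your own Pauli computation shows that the literal one-qubit clause needs more than a traceless coupling beyond $k=2$.

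\textbf{Two repairs that work.}

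(i) Allow an identity component, $h_j=c_jI+g_j$ with $g_j$ traceless. An example is the controlled coupling $|1\rangle\langle1|\otimes M$, with probe control able to flip the sign of $g_j$. Each anti-commutator step then contains the grade-preserving term $2c_jX$. Take the pattern whose anti-commutator steps form the set $A$. Its part that uses $c_j$ at every $j\in A$ is $2^{|A|}\prod_{j\in A}c_j$ times a pure nested Pauli commutator, which is generically nonzero. That part can be isolated by sign flips of the $g_j$, and it receives contributions only from patterns whose anti-commutator set contains $A$. The resulting system is triangular in the inclusion order and has full rank.

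(ii) Put each insertion on its own qubit, with product inputs and product readouts. The coefficient then factorizes as $\prod_j\Tr(O_j\llbracket h_j,\rho_j\rrbracket)$. The rank is that of a Kronecker product of rank-two blocks, which gives $2^{k-1}$. This is close in spirit to the paper's windowed $k$-probe construction.
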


Theorem~\ref{theo:probe_adv} concerns which operator sectors are accessible in principle, not yet how they are reconstructed in practice.
Its simplest realization is the single-probe dephasing setting of Sec.~\ref{sec:probe_decoherence}.
For $k=2$, the two available sectors are precisely the commutator $\mathcal{M}^-$ and anti-commutator $\mathcal{M}^+$, so Observation~\ref{theo:decoherence} is the lowest-order instance of the theorem.

\paragraph{From a structural statement to a learning protocol.}
Theorem~\ref{theo:probe_adv} is a structural statement about accessible operator orderings, not a promise of automatic reconstruction.
Turning it into a \emph{learning protocol} requires a workflow that (a) designs probe controls to generate informative control tensors $\mathcal{F}$, and (b) reconstructs the desired channel-indexed correlator sectors $\mathcal{M}^{\vec{\mu}}_{\vec{p}}(\vec{t})$, with matter entries $M_{m(p_j)}$, from measured probe data as an inverse problem.
Figure~\ref{fig:fig_1_demo} gives the circuit-level intuition: a driven probe register interacts with selected matter operators in successive time windows, and the final probe measurements act as the readout channel for the encoded correlator sectors.
The concrete protocol design and reconstruction details are given in {Methods} (Algorithm~1).


\section[4]{Entangled Quantum Probes: Learning Entanglement Entropy}
\label{sec:entropy}
A particularly striking manifestation of quantum probe-based learning arises when the goal is not just to characterise the overall fluctuations, but to reconstruct genuine quantum correlations of a many-body system. 
Among such quantities, the entanglement entropy of a subsystem provides a fundamental diagnostic of quantum matter, capturing correlations that have no classical analogue. 
Yet from an operational perspective, learning entanglement entropy is highly nontrivial: it is not a response function and cannot in general be inferred from susceptibilities alone. 
Entangled quantum probes provide a natural route to access the necessary information.

\subsection[ssec:free_matter]{Warm-up: Gaussian many-body systems.}

To build intuition, we first consider Gaussian bosonic systems whose reduced states are Gaussian.
This provides the cleanest starting point because Gaussianity compresses the entanglement-entropy reconstruction problem to first and second moments: the covariance matrix is a sufficient statistic, without the need for full state tomography or higher-order correlators.
{This is exactly the quasi-free, second-order tier highlighted in Fig.~\ref{fig:fig_2_probe_circuit}.}

Let $R_A=(q_1,p_1,\ldots,q_{|A|},p_{|A|})^T$ denote the vector of canonical quadratures restricted to a region $A$ as a subsystem of $M$.
The equal-time symmetrized correlators define the covariance matrix
$(V_A)_{ij}=\tfrac12\langle\{\Delta R_{A,i},\Delta R_{A,j}\}\rangle$,
with $\Delta R_{A,i}=R_{A,i}-\langle R_{A,i}\rangle$.
For Gaussian states this matrix completely determines the reduced density matrix $\rho_{A,G}$ up to displacements, which do not affect the entropy.
Its symplectic eigenvalues $\{\nu_\alpha\}$ are obtained from the spectrum of $i\Omega_A V_A$, where
$\Omega_A=\bigoplus_{\alpha=1}^{|A|}\begin{pmatrix}0&1\\-1&0\end{pmatrix}$.
These eigenvalues may be viewed as the effective occupation data of the independent normal modes obtained after bringing the Gaussian state on $A$ to its canonical form.
The von Neumann entropy of the subsystem takes the universal form
\begin{equation}
S(\rho_{A,G})=\sum_{\alpha=1}^{|A|}
\Big[(\bar n_\alpha+1)\ln(\bar n_\alpha+1)-\bar n_\alpha\ln\bar n_\alpha\Big],
\end{equation}
and $\bar n_\alpha={\nu_\alpha}-\frac12 $.
In practical terms, the Gaussian entropy-learning problem is therefore simple to state:
learn the covariance matrix $V_A$, extract the symplectic eigenvalues $\{\nu_\alpha\}$, and evaluate the formula above.
No higher-order correlators are needed in this warm-up setting.

The key point for our theme is operational.
The covariance matrix $V_A$ is built entirely from equal-time anti-commutators, namely from symmetrised fluctuations of local quadratures on $A$.
By contrast, the equal-time commutators are fixed by the canonical algebra and therefore do not contain state-specific information.
Thus, even in this simplest entropy problem, the relevant data lie in the fluctuation sector rather than in the response sector.


Quantum probes are naturally suited to this task. In this sense, the probe does not learn the entropy in one opaque step; it first learns the fluctuation data on $A$, from which the entropy follows as a known functional.
This Gaussian warm-up is therefore the simplest concrete instance of the main theme of the paper: access to the anti-commutator sector turns an entropy-related quantity that is invisible to system-only response into an operationally learnable one.

\subsection[ssec:interacting_matter]{Interacting many-body systems.}

For interacting matter the situation becomes qualitatively richer.
The reduced state $\rho_A$ is no longer Gaussian, so the entropy cannot be reconstructed from two-point correlators alone.
Instead, interactions make the entropy sensitive to higher-order many-body correlation data.
The problem is therefore no longer merely to learn a covariance matrix, but to learn the higher-order local correlators that correct the Gaussian reference state.

A convenient way to organize this structure is to expand the entropy around the Gaussian reference state $\rho_{A,G}$ generated by the quadratic part of the Hamiltonian,
\begin{equation}
S(\rho_A)=S(\rho_{A,G})+\sum_{n\ge 1}\delta S^{(n)} ,
\label{eq:S_series}
\end{equation}
where the correction $\delta S^{(n)}$ is determined by higher-order local correlation data in region $A$.
This equation makes the operational task transparent.
The Gaussian part $S(\rho_{A,G})$ is already fixed by the two-point fluctuation data discussed above, whereas the genuinely interacting contribution depends on higher-order local correlator data, often organized as connected cumulants on the entropy side.
The connected organization is not an additional probe observable: the probe protocol reconstructs the underlying operator-ordering sectors of multi-point correlators.

If we denote by $R_i(t)$ the local matter operators relevant to the subsystem $A$, then the required multi-point data take the form $\langle R_{i_1}(t_1)\cdots R_{i_m}(t_m)\rangle$.
The key structural point is that such correlators are not single objects from the perspective of probe readout.
Rather, they decompose into the full family of braketor sectors introduced in Sec.~\ref{sec:probe_landscape}:
\begin{equation}
\expect{R_{i_1}(t_1)\cdots R_{i_m}(t_m)}
=
\sum_{\vec{\mu}}\mathcal{M}^{\vec{\mu}}_{\vec{i}}(\vec{t}),
\end{equation}
with $t_1\ge\cdots\ge t_m$.
Here $\vec{i}=(i_1,\ldots,i_m)$ is only a target-operator shorthand for the desired matter correlator; the corresponding probe-channel implementation is made explicit in the Methods.
This identity is precisely why the probe framework is useful in the interacting setting: the higher-order correlators needed for entropy reconstruction are assembled from the same operator-ordering sectors that probe readout can access.

The distinction from system-only response becomes sharp from here.
Conventional response theory accesses only the fully retarded commutator sector.
That single sector is generally insufficient to reconstruct the full multi-point correlators, and hence the connected information that determines interacting entanglement entropy.
Quantum probes, by contrast, access the enlarged family of sectors $\mathcal{M}^{\vec{\mu}}$, providing exactly the missing operator-ordering data.

The interacting entanglement entropy becomes operationally learnable not because the probe directly ``measures'' entropy, but because it gives access to the higher-order correlator sectors from which the entropy-side connected quantities are built.
The important point is that probe-based learning remains aligned with the central theme of this work: once the probe is treated as a measured open quantum subsystem, the information channel is enlarged beyond the response sector, and that enlargement is precisely what interacting entropy requires.

\subsection{How many entangled probes and how many settings are needed?}

We can now return to the second question raised in the Introduction and sharpened in Sec.~\ref{sec:probe_framework}: if interacting entropy requires higher-order correlator data, how large a probe register is actually required?
At this stage the answer is conceptually clear.
The probe-learning framework resolves the correlator hierarchy order by order, so the relevant resource is set by the highest operator-insertion order one aims to access, not by the total size of the many-body system.

The reason is that each additional order in the entropy expansion corresponds to one additional operator insertion in the correlator data to be learned.
Operationally, this means that accessing a $k$-point correlator requires a probe architecture capable of coherently encoding $k$-point operator structure in the final readout.
Consequently, the probe overhead tracks correlator complexity rather than system size.

\begin{corollary}[Probe count tracks correlator order]
To access matter correlator data up to order $K$, and hence connected quantities of that order when an entropy expansion uses them, it suffices to employ $\mathcal{O}(K)$ coherently controlled entangled probes, independent of the system size.
\label{theo:entanglement_entropy}
\end{corollary}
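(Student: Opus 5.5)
The plan is to make the counting behind Eq.~\eqref{eq:dyson_commutator} constructive. I would exhibit, for each $k\le K$, a protocol using $k$ probes (so $\mathcal{O}(K)$ in total, or $K$ probes reused across orders) whose readout isolates any chosen channel-indexed sector $\mathcal{M}^{\vec{\mu}}_{\vec{p}}(\vec{t})$. The full correlators then follow from the identity $\expect{R_{i_1}(t_1)\cdots R_{i_m}(t_m)}=\sum_{\vec{\mu}}\mathcal{M}^{\vec{\mu}}_{\vec{i}}(\vec{t})$.

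The construction assigns one probe qubit per insertion slot $j=1,\dots,k$, with direct addressing $h_{p_j}(t)=f_j(t)Z_{p_j}$. Here $f_j$ is a short window of strength $\lambda$ centred at $t_j$, and the probe register is prepared in a product or GHZ-type state of $\ket{\pm}$. The first step is a selection-rule lemma for the control tensor. Each probe $p_j$ is measured in an $X/Y$ basis, and we keep only the term linear in each $f_j$; this is done by finite differences in $\lambda$ or by sign-flipping the windows, i.e.\ inclusion--exclusion over $\pm f_j$. Then $\mathcal{F}^{(k;\vec{\mu})}$ factorises over slots. Each factor is $\Tr[O_j\,\llbracket Z,\rho_j\rrbracket_{\mu}]$ type, so measuring $X$ versus $Y$ on probe $j$ picks out $\mu_{j-1}=1$ (commutator) versus $\mu_{j-1}=0$ (anti-commutator). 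The $k=2$ case is exactly the single-probe computation of Sec.~\ref{sec:probe_decoherence}, where the phase gave $\mathcal{M}^-$ and the decay gave $\mathcal{M}^+$. Iterating this over the nesting levels gives all $2^{k-1}$ sectors of Theorem~\ref{theo:probe_adv} from $2^{k-1}$ measurement settings on the same $k$ probes.

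The second step is to argue that none of these quantities depends on system size. The windowed protocol localises the time integrals, so each data point is a calibrated linear equation in one sector. The probe count is set by the number of distinct insertions, which is $k$, and not by the support of $A$; only the assignment $m(p)$ ranges over the local operators of $A$. The entropy statement then follows because $\delta S^{(n)}$ in Eq.~\eqref{eq:S_series} is by assumption a known functional of correlators up to the truncation order $K$.

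I expect the main obstacle to be the isolation step. Higher Dyson orders and the same-probe repeated insertions (terms with $p_j=p_l$) contaminate the multilinear coefficient. My first attempt would be to use distinct probes for distinct slots, so that each probe contributes at most linearly in its own window, and to extract that coefficient exactly by the $2^k$-fold sign-symmetrisation over $\pm f_j$. The residual error is then $\mathcal{O}(\lambda^{k+2})$ and can be bounded by the window norms. This multiplies the number of settings by a constant per order but leaves the probe count at $\mathcal{O}(K)$, which is all the corollary claims.
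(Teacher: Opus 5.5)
Your skeleton (one directly addressed probe per insertion slot, short windows at $t_1\ge\cdots\ge t_k$, and a probe count set by $K$ rather than $|A|$) matches the paper's proof. The paper stops there and leaves reconstruction to the rank condition of Theorem~\ref{thm:supp_probe_adv}. The step that fails is the selection rule you add on top of it.

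With $\tilde h_{p_j}=Z_{p_j}$, product input $\bigotimes_j\rho_j$ and product readout $\bigotimes_j O_j$, the control tensor of Eq.~\eqref{eq:methods_F_def} does factorise, but as
\begin{equation*}
\mathcal F^{(k;\vec\mu)}\propto \Tr\big(O_1[Z,\rho_1]\big)\prod_{j=2}^{k}\Tr\big(O_j\,\llbracket Z,\rho_j\rrbracket_{1-\mu_{j-1}}\big).
\end{equation*}
So a matter commutator at level $j-1$ pairs with the probe-side anticommutator $\Tr(\{O_j,Z\}\rho_j)$, and vice versa.

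\begin{itemize}
\item Because $X$ and $Y$ both anticommute with $Z$, this commutator factor vanishes for either choice. An equatorial readout weights the forward and backward placement of an insertion equally, so it only ever produces anticommutators. On $\ket{\pm}$ inputs, $X$ gives $\Tr([X,Z]\rho_j)=0$, i.e.\ nothing at all.
\item Your GHZ option does not help. If every $O_j$ anticommutes with $Z$, then $(-1)^l\Tr(Z_L O_P Z_R\rho_P)=\Tr(O_P Z_L Z_R\rho_P)$ for every branch assignment. Here $Z_L,Z_R$ are the products of $Z$'s placed left and right of $O_P$, and $l$ is the number of left placements. This weight is independent of the branch assignment, so only $\vec\mu=\vec 0$ survives.
\item Sec.~\ref{sec:probe_decoherence} does not support the rule either. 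There $\mathcal M^-$ enters only through $\sin(\theta(t)-\theta(t'))$, i.e.\ because the coupling axis rotates, and it vanishes for a fixed $\sigma_z$ coupling.
\end{itemize}
As written, your protocol reaches only the fully symmetrised sector. It therefore cannot assemble the ordered strings $\sum_{\vec\mu}\mathcal M^{\vec\mu}$ that the entropy functional needs.

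The repair is local. For $j\ge 2$, read probe $p_j$ as follows.
\begin{itemize}
\item When $\mu_{j-1}=1$, use a basis commuting with the coupling, e.g.\ $O_j=Z$, for which $\Tr(Z\{Z,\rho_j\})=2$ for any $\rho_j$. Physically, $Z_{p_j}$ is conserved, so the probe acts as a heralded random-sign classical kick; this is the linear-response mechanism.
\item When $\mu_{j-1}=0$, use a coherence basis, e.g.\ $O_j=Y$ on $\ket{+}$, giving $\Tr([Y,Z]\rho_j)=2i$.
\end{itemize}
Probe $p_1$ must always be read in the coherence basis, since $\Tr(Z[Z,\rho_1])=0$; this is the first-bit redundancy. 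Each of the $2^{k-1}$ readout patterns then isolates exactly one sector with a known nonzero coefficient. Your $\pm f_j$ parity extraction handles the higher Dyson orders as you describe.

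Corrected this way, your argument is a more constructive version of the paper's. The paper only asserts that $K$ slots suffice and that some full-rank calibrated family exists, selecting branch strings $\mathcal C$ via exchanged window patterns. You instead fix one window pattern and select sectors $\mathcal M^{\vec\mu}$ directly by readout basis. That gives an explicit diagonal control matrix with product probe states, and $2^{k-1}$ readout patterns times $2^k$ sign flips per operator tuple, independent of system size.
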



This is the precise sense in which entangled probes are parsimonious learners of interacting many-body structure.
The probe register need not mirror the full many-body system.
It needs only match the complexity of the target information to be learned.
That conclusion directly realises the programme announced previously: quantum probes enlarge learnability without requiring a probe that scales with the system itself.
A formal worst-case proof, including the distinction between direct addressing and possible probe reuse, is given in the Supplementary Material.

A separate resource question concerns not the size of the probe register in a single run, but the number of distinct experimental settings needed to reconstruct the entropy of a chosen subsystem.
Here locality matters.
Because the entropy expansion involves local correlator data supported within region $A$, the relevant setting count is controlled by the subsystem size and the perturbative order, rather than by the size of the full matter system.

\begin{corollary}[Experimental-setting count for interacting entropy learning]
To learn the von Neumann entropy of an interacting subsystem $A$ up to $k$th-order perturbation, it suffices to use at most $\mathcal{O}(|A|^{k})$ distinct experimental settings.
Each setting involves $\sim k$ entangled probes, where the probes couple to operators supported within the subsystem $A$.
\end{corollary}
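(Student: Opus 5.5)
The plan is a counting argument built on the entropy expansion Eq.~\eqref{eq:S_series}, combined with Corollary~\ref{theo:entanglement_entropy} for the per-setting probe count.

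\textbf{Step 1: reduce to a finite correlator set.} Truncating the expansion at order $k$, $S(\rho_A)\approx S(\rho_{A,G})+\sum_{n=1}^{k}\delta S^{(n)}$ is a known functional of local correlators $\expect{R_{i_1}\cdots R_{i_m}}$ with $m\le k$ and all indices $i_j$ supported in $A$. For the Gaussian part this is just the covariance matrix $V_A$, so $m=2$. I will fix a finite local operator basis per site of dimension $d=\mathcal{O}(1)$, for instance the quadratures, or Pauli operators for spins. Then the number of index tuples $\vec i$ at order $m$ is at most $(d|A|)^m$. Summing over $m\le k$ gives $\mathcal{O}(|A|^k)$ target correlators. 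Time arguments are equal or fixed by the windowed protocol, so they add no $|A|$-dependence.

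\textbf{Step 2: one setting per tuple.} For each tuple $\vec i$ I choose a channel assignment $m(p_j)=i_j$ with $m\le k$ probes, entangled in the initial $\rho_P$. Short interaction windows localize the integral in Eq.~\eqref{eq:dyson_commutator} at the chosen times. Varying the probe initial state, control and final observable $O_P$ then resolves the $2^{m-1}$ braketor sectors $\mathcal{M}^{\vec\mu}_{\vec i}$ for that fixed tuple. This is the content of Theorem~\ref{theo:probe_adv} and Corollary~\ref{theo:entanglement_entropy}. These variations change only $\mathcal{O}(1)$ probe-side data (at most $2^{k-1}$ choices), so I count them within a single "setting", or as a $k$-dependent constant that does not scale with $|A|$. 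Summing the sectors via the decomposition identity then reconstructs $\expect{R_{i_1}\cdots R_{i_m}}$. Lower orders $m<k$ come either from dedicated settings or by setting some probe couplings to zero; both options keep the total at $\mathcal{O}(|A|^k)$.

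\textbf{Step 3: assemble.} I insert the reconstructed correlators into the known functional for $\delta S^{(n)}$, and into the symplectic-eigenvalue formula for $S(\rho_{A,G})$. The setting count is bounded by the tuple count, $\mathcal{O}(|A|^k)$. Each setting uses $\le k$ probes coupled only within $A$.

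\textbf{Main obstacle: identifiability in Step 2.} The control tensor $\mathcal{F}$ must separate the target order-$m$ term from lower- and higher-order Dyson contributions, and from other channel orderings $\vec p$ in the same run. My plan is to use weak coupling with windowed pulses and to isolate the term multilinear in all $m$ coupling strengths. I would do this by sign-flipping each probe's coupling and taking the corresponding alternating combination of readouts, i.e. a finite-difference extraction. This costs an extra $2^m$ runs per setting, again independent of $|A|$. I would also need to check that the relabelling $\widehat P_{\vec p}$ mixes only permutations of the same tuple. If so, permutation-equivalent tuples can share settings, which only improves the count.
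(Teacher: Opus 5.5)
Your proposal is correct and is essentially the paper's argument. You count index tuples of length $m\le k$ drawn from an $\mathcal{O}(|A|)$ local basis to get $\sum_{m\le k}\mathcal{O}(|A|^m)=\mathcal{O}(|A|^k)$, and you absorb the per-tuple tomographically complete family of probe settings into a $k$-dependent constant; your $2^m$ sign-flip runs play the same role, and the paper handles that order extraction by coupling-strength scaling or phase cycling. You then invoke the probe-count corollary for $\mathcal{O}(k)$ probes per setting, as the paper does. One caveat: your Step~1 claim that the order-$k$ truncation only involves correlators of length $\le k$ from a per-site basis is not derived in the paper but built into the hypothesis of its corollary. Indeed, the paper's quartic modular correction $\delta S^{(2)}$ already uses four-point cumulants, so you should state this identification as an assumption rather than as a consequence of the expansion.
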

The corresponding counting argument and locality assumptions are stated explicitly in the Supplementary Material.


We would like to point out the comparison with tomography-based entropy estimation. The first distinction is apparently operational: Probe-based learning asks what can be learned when information is mediated through a controllable quantum interface and a final probe readout, whereas tomography-based learning requires measurements on the system. From the computational complexity point of view, the sample complexity by tomography or classical shadows for purity (second-order Renyi entropy) scales $N_{\rm sh}=\mathcal O(4^{|A|}/\epsilon^2)$ up to logarithmic and error-propagation factors, and thus scales exponentially in $|A|$. In stark contrast, the probe route uses $N_{\rm probe}^{\rm set}\sim\mathcal O(|A|^k)$ settings, with $\mathcal O(k)$ probes per setting, because it learns the $k$th-order correlator manifold relevant to the chosen expansion.
In probe-based learning, the idea is that any physical property that can be expressed through a controlled family of such correlators---fluctuations, covariance data, dynamical response beyond the retarded sector, non-equilibrium correlation structure, order-parameter correlations, and entropy-related quantities---can be targeted without reconstructing the full many-body state. The comparison is not that one paradigm subsumes the other; it is that coherent probes provide a complementary and often more natural route whenever the desired physics is organised by low-order or local correlator structure rather than by full-state reconstruction.

Taken together, these results complement the picture for interacting systems.
The advantage of entangled probes arises from their ability to be tailored to the order and support of the correlations that govern the quantity of interest. Therefore, for interacting entanglement entropy, the relevant resource is determined by the information complexity—namely, the correlator order and subsystem support—rather than by the size of the underlying many-body system.

\section[5]{Discussion}

By treating the probe as a measured quantum subsystem rather than only as a source of perturbation, we have exposed an operator-level boundary of conventional response theory.
System-only response organizes information into fully retarded nested commutators.
Reduced probe dynamics, by contrast, carries a broader braketor family generated by coherent probe--matter evolution and final probe readout.
The distinction is therefore not one of sensitivity alone, but of algebra: quantum probes open correlator sectors that are absent from response data except under additional constraints such as equilibrium fluctuation--dissipation relations. 

This algebraic enlargement has concrete consequences.
A single coherently controlled probe already separates the phase channel, tied to response, from the decoherence channel, tied to fluctuations.
For quasi-free matter, the symmetrised sector supplies the covariance data that determine entropy-related observables.
For interacting matter, the same logic extends to higher-order local correlator data, so the relevant auxiliary resource tracks the order and support of the target correlations rather than the Hilbert-space dimension of the full many-body system.

\sun{We have illustrated the connection and comparison with established matter-probe techniques. We would like to further compare notions in quantum technologies.} The advantage we study here is conceptually distinct from entanglement-enhanced metrology.
The latter asks how entanglement improves sensing {precision} (e.g., Heisenberg-limited scaling) for a field parameter that is already measurable~\cite{liu2020quantum, greve2022entanglement,higgins2007entanglement,leibfried2004toward,wang2019heisenberg,deng2024quantum}. 
We instead focus on {capability}: whether quantum (and entangled) probes enlarge the set of qualitative features of many-body systems that can be operationally learned at all. \sun{It is also easy to see the difference from Hamiltonian learning (by compressed sensing or probe tomography~\cite{huang2023learning,hu2025ansatz,chen2025quantum}) as it learns the properties rather than the Hamiltonian parameter (which are usually real numbers).} 
Quantum noise spectroscopy exemplifies how controlled quantum probes can infer spectra of otherwise inaccessible environmental degrees of freedom~\cite{szankowski2016spectroscopy,yan2013rotating,norris2016qubit,yoneda2023noise}. Here we move from environmental characterisation to a theory of many-body feature detection: by identifying the operator algebra exposed by probe readout beyond system-only response, we determine which physical properties can be learned and the quantum-probe resources required to learn them.

Our results place probe-based learning alongside tomography-based quantum learning, but with a different strength. Tomography and classical shadows start from direct access to the target degrees of freedom and are designed to reconstruct, or compressively predict, properties of the state under randomised measurements.
In probe-based, one controls an auxiliary quantum system, couples probes to the chosen subsystem of matter, and post-processes the measurement outcome to infer either observable expectations or entropy.
Spatially structured many-body properties, such as entanglement entropy, cannot in general be reduced to a collection of independent local signals. Their distributed information must instead be coherently mapped onto collective degrees of freedom of the probe register. Tunable probe entanglement provides these collective readout channels, allowing the underlying many-body structure to be resolved and reconstructed.


The practical message is that probe advantage need not require an auxiliary device comparable in size to the target.
What is required is a controllable probe register whose interaction history with the matter can be programmed and whose final reduced state can be measured.
For platforms that already possess coherent ancillae, this points less to new hardware than to new protocols for using the same hardware as an information-bearing quantum interface.
More broadly, quantum probes should be viewed not merely as quantum sensors, but as coherent gateways to sectors of many-body physics that conventional response leaves unseen.





\widetext
\section*{Methods}

\subsection{Comparison between system-based and probe-based learning}

{We first discuss the operational distinction between quantum and classical probes. A classical probe is not assumed to be microscopic or macroscopic; rather, it is a probe whose Hilbert-space degree of freedom is not retained as a readout register. If embedded in a formal probe--matter space, this access model allows only a source term of the form $(\lambda(t)I_P)\otimes M(t)$, up to probe-local evolution. Indeed, if the probe carries no readable memory of the matter, the joint evolution must factorise as $U_{PM}(T)=U_P(T)\otimes U_M^\lambda(T)$, so its generator is probe-local plus matter-local, with $U_M^\lambda(T)$ generated by $H_M+\lambda(t)M(t)$. 

{Take typical inelastic neutron scattering as an example. The probe density matrix of incoherent neutron $\hat\rho_{\mathrm{p}}$ is
\emph{diagonal} in the momentum basis:
$ 
  \hat\rho_{\mathrm{p}}
  = \int d\mathbf{k}_i\, P(\mathbf{k}_i)\,
    \ket{\mathbf{k}_i}\bra{\mathbf{k}_i}.
$
There are no fixed phase relations between different momenta. The transition probability from an initial state
$\ket{i} $ to a final state
$\ket{f}  $ is  given by Fermi's golden rule, from which one can derive the differential cross section and the dynamic structure factor.
All explicit probe operators disappear,
and the remaining factor defines the {dynamic structure factor}
$
  S(\mathbf{q},\omega)
  =
  \frac{1}{2\pi}
  \int_{-\infty}^{\infty} dt\;
  e^{i\omega t}
  \big\langle
    \hat O_{\mathrm{s}}^\dagger(\mathbf{q},0)
    \hat O_{\mathrm{s}}(\mathbf{q},t)
  \big\rangle.
 $
Hence, for incoherent probes such as neutrons,
the probe degrees of freedom can be fully traced out,
and the scattering cross section depends only on sample
correlation functions.
}

By contrast, an interaction $h_P(t)\otimes M(t)$ with $h_P(t)\not\propto I_P$ produces probe-conditioned matter evolution; for a probe superposition, this generically creates probe--matter entanglement and leaves matter information in the final reduced probe state. This is the operational origin of the algebraic distinction summarised in Table~\ref{tab:probe_vs_system_correlators}.}

\begin{table}[H]
\centering
\begingroup
\renewcommand{\arraystretch}{1.25}
\setlength{\tabcolsep}{4pt}
\resizebox{\textwidth}{!}{%
\begin{tabular}{|c|c|c|c|}
\hline
 & Learnable correlators & $\#$ of combinations & Linear term \\
\hline
\begin{tabular}{c}
\textbf{System-based} \\
(closed system)
\end{tabular}
&
$\mathcal{M}^{\vec{1}}(t_{[k]}) \sim \expect{[\cdots[ M_{m_1}(t_1), M_{m_2}(t_2) ], \ldots , M_{m_k}(t_k)]}$
&
$1$
&
$\expect{\big[ M_{m_1}(t_1),\, M_{m_2}(t_2) \big] }$
\\
\hline
\begin{tabular}{c}
\textbf{Probe-based} \\
(open system)
\end{tabular}
&
\begin{tabular}{c}
$\mathcal{M}^{\vec{\mu}}(t_{[k]}) \sim
\expect{\llbracket\cdots \llbracket
M_{m_1}(t_1),M_{m_2}(t_2)\rrbracket_{\mu_1},
\cdots, M_{m_k}(t_k)\rrbracket_{\mu_{k-1}} } $\\[6pt]
$\vec{\mu}\in\{0,1\}^{k-1}$
\end{tabular}
&
$2^{k-1}$
&
\begin{tabular}{c}
$\expect{\big[ M_{m_1}(t_1),\, M_{m_2}(t_2) \big]}/2$
\\[4pt]
$ \expect{\{ M_{m_1}(t_1),\, M_{m_2}(t_2) \} }/2$ \\
\end{tabular}
\\
\hline
\end{tabular}
}
\endgroup
\caption{Closed-system response vs.\ open-system probe learning at the level of operator orderings.
System-only response is restricted to the fully retarded (nested-commutator) sector, whereas probe readout contains all braketor sectors,
including symmetrized (anti-commutator) structures that encode fluctuations. Here we drop the probe labelling for convenient comparison with system-only case (probe-channel indices and operator $\widehat{P}$ are formally retained elsewhere).
The $k=2$ ``linear term'' illustrates the first appearance of both components in probe-based learning.}
\label{tab:probe_vs_system_correlators}
\end{table}

\begin{figure*}[htbp]
\centering
\includegraphics[width=0.8\textwidth]{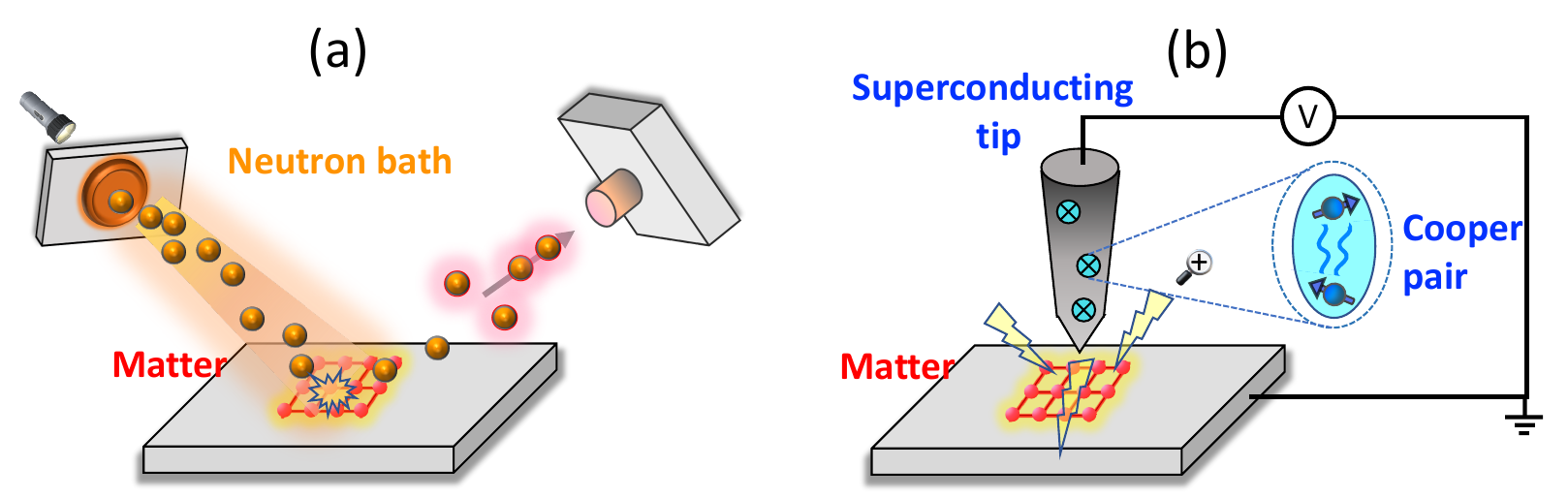}
\caption{\textbf{Experimental setup of probe-based learning as special cases within our framework.}
(a) In a conventional beam- or bath-style probe, illustrated here by neutron scattering from matter, the external particles are usually treated as an imposed perturbation and scattering readout for system-only response: the measured signal is interpreted through matter susceptibilities and therefore through retarded commutator structures. This picture could be extended to non-linear spectroscopy.
(b) A superconducting tip provides a different kind of interface.
The tunnelling object can be a phase-coherent Cooper pair, so the probe itself is a quantum degree of freedom that can become entangled with the target before its reduced state is read out through electrical observables.
This turns the probe--matter contact into an open-system learning channel rather than only a perturb-and-measure response experiment.
 }
\label{fig:fig_3_cooper_pair}
\end{figure*}


Figure~\ref{fig:fig_3_cooper_pair} translates the algebraic distinction in Table~\ref{tab:probe_vs_system_correlators} into a language closer to experiment.
In a scattering or spectroscopy mindset, the external probe is often integrated out conceptually: it prepares a perturbation, the matter evolves, and the recorded signal is interpreted as a response function of the target alone.
That viewpoint is powerful, but it naturally privileges the retarded sector.
Probe-based learning asks for a more quantum use of the same experimental idea: keep the probe as a controllable subsystem, calibrate its coupling history, and use its final reduced readout as the data channel.
The payoff is thus not only a stronger signal, but a larger operator alphabet for what the experiment can learn.

The superconducting-tip sketch in Fig.~\ref{fig:fig_3_cooper_pair}(b) is another motivation.
A Cooper pair carries a coherent two-particle quantum structure, and a biased or gated tip can in principle supply tunable coupling windows, phases, and readout channels.
Those knobs are precisely what appear below as the probe-side control tensor $\mathcal F$: they decide which linear combinations of matter operator orderings are encoded in the measured probe observable.
Thus, for experimental readers, the formalism below can be read as a design language for converting familiar probe hardware into a correlator-selective quantum interface, without requiring full tomography of the many-body target.  Further experimental design may be possible by considering controlled operations between the probe and matter.

\subsection{Open quantum dynamics of the probe-matter system}
\label{sec:methods_open_dynamics}

This subsection provides a minimal, self-contained bridge between the main-text structural Eq.~\eqref{eq:dyson_commutator} and the detailed algebra collected in the Supplementary Material.
The probe--matter coupling $H_{PM}(t)$ is defined in the main text; here we only record the interaction-picture objects and the resulting Dyson expansion of the probe readout.

Let $U_P(t)$ denote the propagator generated by $H_P(t)$; working in the interaction picture with respect to $H_P(t)+H_M$, the probe--matter interaction becomes
\begin{equation}
H_I(t)
= \sum_p \tilde h_p(t)\otimes M_{m(p)}(t),
\label{eq:methods_HI}
\end{equation}
where $\tilde{h}_p(t) \equiv U_P^\dagger(t) h_p(t) U_P(t)$ is the Heisenberg-evolved probe-side toggled operator set by the control.
All probe information about $M$ enters through $H_I(t)$ and the initial probe state $\rho_P$.

For a chosen probe observable $O_P$ measured at the final time $T$, the probe signal can be written as a Dyson series in $H_I$.
Using the shorthand $\int d_{>}t_{[k]}:=\int_{0}^{T}dt_1\cdots\int_{0}^{t_{k-1}}dt_k$ for strictly time-ordered integration ($t_1>\cdots>t_k$), one obtains
\begin{equation}
\label{eq:methods_dyson_raw}
\begin{aligned}
\langle O_P(T)\rangle_{\rho_P}
&=
\sum_{k\ge 0}\,(-i)^k
\sum_{\vec p}\sum_{l=0}^{k}\sum_{\pi\in\Pi_{l,k}}
\int d_{>}t_{[k]}\;
(-1)^l\,
\Tr_P\!\Bigg[
\prod_{j=1}^{l}\tilde h_{p_j}(t_{\pi(j)})\, O_P\,
\prod_{j'=l+1}^{k}\tilde h_{p_{j'}}(t_{\pi(j')})\,\rho_P
\Bigg] \\
&\quad \times
\Big\langle
M_{m(p_1)}(t_{\pi(1)})\cdots M_{m(p_k)}(t_{\pi(k)})
\Big\rangle.
\end{aligned}
\end{equation}
Here $\vec p=(p_1,\ldots,p_k)$ labels the probe-channel slots appearing in $H_I(t)$; the matter operator in slot $p_j$ is $M_{m(p_j)}$.
The index $l$ counts how many probe operators act to the left of $O_P$ inside the trace, and the set $\Pi_{l,k}$ collects the corresponding forward/backward time allocations.
For each branch assignment $\pi$, define the associated ordered matter string
\begin{equation}
\label{eq:methods_C_def}
\mathcal C^{(k)}_{\vec p,\pi}(t_{[k]})
:=
\Big\langle
M_{m(p_1)}(t_{\pi(1)})\cdots M_{m(p_k)}(t_{\pi(k)})
\Big\rangle,
\end{equation}
where the Supplementary Material defines the allowed branch assignments $\pi\in\Pi_{l,k}$ and the reduced branch label $\bm{\pi}_M$ in a binary representation associated with each $\pi$.

The string correlator $\mathcal C$ can be expanded in the basis of $(k\!-\!1)$-level \emph{nested bracketors},
labelled by $\vec\mu=(\mu_1,\ldots,\mu_{k-1})\in\{0,1\}^{k-1}$, where $\mu_j=1$ denotes a commutator insertion and $\mu_j=0$ an anti-commutator insertion.
For each branch assignment, one has
\begin{equation}
\label{eq:methods_nested_rep}
\mathcal C^{(k)}_{\vec p,\pi}(t_{[k]})
=
\sum_{\vec\mu\in\{0,1\}^{k-1}}
(-1)^{\vec\mu\cdot \bm{\pi}_M}\,
\mathcal{M}^{\vec\mu}_{\vec p}\!\big(t_1,\ldots,t_k\big),
\end{equation}
where $\mathcal{M}^{\vec\mu}_{\vec p}$ is the corresponding nested commutator/anti-commutator correlator defined in Eq.~\eqref{eq:dyson_commutator}; its subscript is the probe-channel tuple, while its matter entries are $M_{m(p_j)}$.
Substituting Eq.~\eqref{eq:methods_nested_rep} into Eq.~\eqref{eq:methods_dyson_raw} and collecting all probe-side factors defines the \emph{control tensor}: 
\begin{equation}
\label{eq:methods_F_def}
\mathcal{F}^{(k;\vec\mu)}_{O_P;\vec p;t_{[k]}}[H_P(\vec{t})]
:=
\sum_{l=0}^{k}\sum_{\pi\in\Pi_{l,k}}(-1)^{l+\vec\mu\cdot \bm{\pi}_M}\,
\Tr_P\!\Bigg[
\prod_{j=1}^{l}\tilde h_{p_j}(t_{\pi(j)})\, O_P\,
\prod_{j'=l+1}^{k}\tilde h_{p_{j'}}(t_{\pi(j')})\,\rho_P
\Bigg].
\end{equation}
In this representation, \emph{all} system dependence is isolated in the nested-bracket correlators $ \mathcal{M}^{\vec\mu}_{\vec p}$, while \emph{all} experimental design freedom (probe state, readout, and control history) is isolated in $\mathcal{F}$.
The Supplementary Material provides the detailed derivation, including the branch-string basis $\mathcal C$, the braketor basis $\mathcal{M}^{\vec\mu}_{\vec p}$, their Walsh--Hadamard relation, and the constructive windowed design illustrated in Supplementary Fig.~S1.

\subsection{Quantum-probe workflow: turning readout into correlators}

Theorem~\ref{theo:probe_adv} becomes a \emph{learning protocol} once the time-ordered convolution in Eq.~\eqref{eq:dyson_commutator} is made experimentally identifiable.
The protocol does not require a brute-force inversion of the full high-dimensional kernel.
The clearest implementation is a windowed coupling: each selected probe--matter channel is active only in a short time window around a target insertion time, while unused channels are switched off.
In the ideal narrow-window limit, this localizes the convolution and turns the order-$k$ readout into a calibrated sample of an ordered matter string.
Always-on or broad-window couplings obey the same inverse-problem logic, but with less localized kernels that must be deconvolved, as detailed in Supplementary Sec.~S3 and Supplementary Fig.~S1.

Fix a target order $k$, probe slots $\vec p=(p_1,\ldots,p_k)$, matter labels $\vec m=(m_1,\ldots,m_k)$, and ordered times $\bar\tau=(\tau_1,\ldots,\tau_k)$ with $\tau_1\ge\cdots\ge\tau_k$.
A window setting $\eta$ specifies two pieces of hardware control: the active interaction windows and a run-specific channel-to-matter map $p\mapsto m_\eta(p)$.
For a selected slot $p_j$, one may write the controlled probe-side operator as $$\tilde h_{p_j}^{[\eta]}(t)\simeq f_\sigma(t-\tau_{\eta(j)})\tilde h_{p_j}(t)$$ where the window envelope approaches $\delta(t-\tau_{\eta(j)})$ as $\sigma\to0$; unused channels have $\tilde h_p^{[\eta]}(t)=0$.
The map $m_\eta$ is fixed within that run; only the coupling envelope is time-windowed.
Schematically,
\begin{equation}
\label{eq:methods_windowed_selection}
H_I^{[\eta]}(t)
\simeq
\sum_{p\in P}\tilde h_p^{[\eta]}(t)\otimes M_{m_\eta(p)}(t),
\end{equation}
where $\tilde h_p^{[\eta]}(t)$ includes the window envelope and unused channels are switched off.
The fixed-channel setting of Eq.~\eqref{eq:general_coupling} is the special case in which all runs use the same map, $m_\eta(p)=m(p)$.
Substituting Eq.~\eqref{eq:methods_windowed_selection} into the Dyson expansion gives, in the narrow-window limit,
\begin{equation}
\label{eq:methods_windowed_readout}
\langle O_P(T)\rangle_{\alpha,[\eta]}^{(k)}
\simeq
(-i)^k\,
G^{(k)}_{\alpha;\vec p,\bm{\eta}_M}(\bar\tau)\,
\mathcal C_{\vec p,\bm{\eta}_M}^{(k)}(\bar\tau),
\end{equation}
up to finite-width corrections.
Here $\alpha$ labels the probe setting, namely the preparation, probe control, and readout basis.
The known scalar $G^{(k)}_{\alpha;\vec p,\bm{\eta}_M}$ is the calibration factor for that windowed run: it is fixed entirely by the probe setting and applied windows, while the unknown matter information sits in $\mathcal C$.
Its trace-level expression is given in the Supplementary Material.
Repeating the experiment with one window pattern for each reduced matter-branch label $\bm\nu$ reconstructs the strings $\mathcal C_{\vec p,\bm\nu}^{(k)}(\bar\tau)$.
The desired braketor sectors are then obtained by the inverse binary transform
\begin{equation}
\label{eq:methods_C_to_M}
\mathcal M_{\vec p}^{\vec\mu}(\bar\tau)
=
2^{-(k-1)}
\sum_{\bm\nu\in\{0,1\}^{k-1}}
(-1)^{\vec\mu\cdot\bm\nu}\,
\mathcal C_{\vec p,\bm\nu}^{(k)}(\bar\tau),
\end{equation}
where each reduced label $\bm\nu$ is realized experimentally by choosing a window pattern $\eta$ with $\bm{\eta}_M=\bm\nu$, as defined in the Supplementary Material.
For $k=2$, Supplementary Fig.~S1(b,c) shows the two patterns: $\eta_0$ isolates $\mathcal C_{(p_1,p_2),(0)}^{(2)}=\langle M_{m_1}(\tau_1)M_{m_2}(\tau_2)\rangle$, whereas $\eta_1$ isolates $\mathcal C_{(p_1,p_2),(1)}^{(2)}=\langle M_{m_1}(\tau_2)M_{m_2}(\tau_1)\rangle$ in the canonical probe-slot bookkeeping.

Operationally, the window setting chooses the sampled matter string, while the probe setting (initial state, control unitary, and measurement basis) fixes the calibrated coefficient with which it appears in readout.
After windowing, or deconvolution for broader kernels, the data are linear equations for the unknown string amplitudes; the probe unitaries are not unique, and any full-rank, well-conditioned calibrated family can reveal the desired correlator or braketor sector.

\begin{algorithm}[H]
\label{alg:Alg1}
\caption{Windowed quantum-probe learning of braketor correlators}
\begin{algorithmic}[1]
\State \textbf{Input:} Target order-$k$ object $(\vec p,\vec m,\bar\tau)$ and desired sector labels $\vec\mu$.
\State \textbf{Output:} Estimates $\widehat{\mathcal{M}}^{\vec{\mu}}_{\vec{p}}(\bar\tau)$.

\State \textbf{(1) Route the matter strings.}
For each reduced label $\bm\nu$ in Eq.~\eqref{eq:methods_C_to_M}, choose a window setting $\eta$ whose active windows place the probe slots at $\bar\tau$ and whose fixed run map couples them to $\vec m$.
\State \textbf{(2) Choose readable probe settings.}
Choose initial states, unitaries, and measurement bases whose calibrated readout coefficients give a full-rank linear system.
\State \textbf{(3) Use the simplest full-rank family.}
Prefer native or Clifford unitaries, product or weakly entangled probe inputs, and local measurement bases; enlarge the family only to improve conditioning.
\State \textbf{(4) Estimate ordered strings.}
Run the settings, isolate the order-$k$ contribution, and solve for $\widehat{\mathcal C}_{\vec p,\bm\nu}^{(k)}(\bar\tau)$.
\State \textbf{(5) Convert strings to braketors.}
Apply Eq.~\eqref{eq:methods_C_to_M}; use redundant windows or probe settings for error estimates and consistency checks.
\end{algorithmic}
\end{algorithm}

Algorithm~1 serves as an experimental design checklist.
The reduced label $\bm\nu$ names one of the ordered strings entering Eq.~\eqref{eq:methods_C_to_M}, while $\eta$ names the laboratory setting that realises it through active windows and the fixed run map $p\mapsto m_\eta(p)$.
Steps (2)--(3) use the remaining probe freedom: different preparations, unitaries, and local readout bases give different calibrated coefficients, so the only requirement is a full-rank, well-conditioned linear system for the selected strings.
Step (4) extracts the desired perturbative order, for example, by coupling-strength scaling or phase cycling, and Step (5) performs the fixed algebraic recombination from ordered strings to braketor sectors.
Thus the protocol is correlator-selective learning, not many-body state tomography; finite-width and always-on variants give the same linear-identification problem after deconvolution, with the explicit windowed constructions and error terms detailed in Supplementary Sec.~S3.

\subsection{Entropy construction from many-body correlators}
\label{sec:methods_entropy}

This subsection outlines the operator structure underlying the entropy reconstruction discussed in Sec.~\ref{sec:entropy}. 
Let $\rho_A$ denote the reduced density matrix of a subsystem $A$ of the matter system. 
The von Neumann entropy is $S(\rho_A)=-\Tr_A(\rho_A\log\rho_A)$. 
In the situations considered in the main text, $\rho_A$ can be written as a perturbative expansion around a Gaussian reference state,
$\rho_A=\rho_A^{(0)}+\delta\rho_A$, where $\rho_A^{(0)}$ is determined by the two-point correlations of the matter operators and $\delta\rho_A$ encodes interaction-induced corrections. 
Expanding the entropy functional around $\rho_A^{(0)}$ generates a series 
$S(\rho_A)=S^{(0)}+\sum_{n\ge1}\delta S^{(n)}$, where each correction $\delta S^{(n)}$ can be expressed in terms of many-body correlators of the matter operators.

Let $R(x,t)$ denote the Heisenberg-picture matter operators appearing in the $P-M$ coupling, namely the operators $M_{m(p)}(t)$ selected by the channel map introduced in Sec.~\ref{sec:probe_landscape}.
At equal time they reduce to the spatial operators $R(x)$. 
The entropy corrections depend on equal-time correlators of these operators, schematically

\begin{equation}
\delta S^{(n)}
\sim
\int dx_1\cdots dx_m
\;
\langle
R(x_1,t)\cdots R(x_m,t)
\rangle ,
\end{equation}
where $x$ labels spatial coordinates and the kernels multiplying these correlators depend only on the subsystem geometry $A$.

The operator products appearing in these correlators can be decomposed into the same nested braketor structures introduced in Sec.~\ref{sec:probe_landscape}. 
In particular,

\begin{equation}
\langle
R(x_1,t_1)\cdots R(x_m,t_m)
\rangle 
=
\sum_{\vec\mu}
\mathcal{M}^{\vec\mu}_{\vec{x}}(t_1,\ldots,t_m) ,
\end{equation}
where $\mathcal{M}^{\vec\mu}$ denotes the nested commutator/anti-commutator correlators defined in the main text and $\vec{\mu}\in\{0,1\}^{m-1}$ labels the braketor sector.
The subscript $\vec{x}=(x_1,\ldots,x_m)$ is a target-space shorthand used only in this entropy construction, because the entropy kernels below are functions of matter coordinates.
In the operational probe notation of Eq.~\eqref{eq:dyson_commutator}, the same object is written as $\mathcal{M}^{\vec\mu}_{\vec p}$ after selecting probe channels $\vec p=(p_1,\ldots,p_m)$ such that $M_{m(p_j)}(t_j)=R(x_j,t_j)$.
Thus $\vec{x}$ labels the matter correlator being reconstructed, whereas $\vec p$ records the probe-channel history that realizes it.
These structures include both fluctuation correlators ($\mathcal{M}^{+}$) and response correlators ($\mathcal{M}^{-}$).

The spatial correlators determining the entropy correspond to the equal-time limit of these multi-time structures ($t_1=\cdots=t_m=t$). 
Consequently the entropy can be written schematically as 
\begin{equation}
  S(\rho_A)=S^{(0)}+\sum_m\int dx_1\cdots dx_m\,\mathcal{K}_m(x_1,\ldots,x_m)\mathcal{M}^{\vec\mu}_{\vec{x}}(t,\ldots,t),
\end{equation}
where the kernels $\mathcal{K}_m$ depend only on the subsystem $A$.

Because the probe dynamics derived in Sec.~\ref{sec:probe_landscape} provides direct experimental access to the expectation values $\langle\mathcal{M}^{\vec\mu}\rangle$, the entropy of interacting matter systems can therefore be reconstructed from probe measurements. 
Further details of the entropy expansion and kernel construction are provided in the Supplementary Material.

\begin{acknowledgments}
We would like to thank Tianfeng Feng for useful discussions. We acknowledge funding from the EPSRC through EP/S021582/1 and from Innovate UK (Project No.10075020).
\end{acknowledgments}

\bibliography{Ref}
\bibliographystyle{unsrt}


\clearpage

\setcounter{section}{0}
\setcounter{subsection}{0}
\setcounter{equation}{0}
\setcounter{figure}{0}
\setcounter{table}{0}

\renewcommand{\thesection}{S\arabic{section}}
\renewcommand{\thesubsection}{\Alph{subsection}}
\renewcommand{\theequation}{S\arabic{equation}}
\renewcommand{\thefigure}{S\arabic{figure}}
\renewcommand{\thetable}{S\arabic{table}}


\begin{center}
  {\bfseries Supplementary Materials\par}
\end{center}

\vspace{1.2em}


\titlecontents{section}
  [0pt]
  {\addvspace{0.55ex}}
  {\contentslabel[\thecontentslabel.]{2.4em}}
  {\hspace*{-2.4em}}
  {\hfill\contentspage}

\titlecontents{subsection}
  [2.6em]
  {}
  {\contentslabel[\thecontentslabel.]{1.7em}}
  {\hspace*{-1.7em}}
  {\hfill\contentspage}

\startcontents[supplement]

\begin{center}
\begin{minipage}{0.84\linewidth}
  \printcontents[supplement]{}{1}[2]{}
\end{minipage}
\end{center}

\vspace{1.2em}

The notation of the main manuscript is followed throughout the Supplementary Information.
In particular, the binary bracket
\begin{equation}
\llbracket X,Y\rrbracket_{\mu}:=XY+(-1)^{\mu}YX,
\qquad \mu\in\{0,1\},
\end{equation}
is normalized so that $\mu=1$ denotes a commutator and $\mu=0$ denotes an anti-commutator.

\section{General algebraic structure of quantum-probe learning}
\label{sec:supp_general_algebra}
This section supplies the mathematical backbone of the main-text claim that probe readout is an open-system object and therefore resolves a larger operator-ordering algebra than system-only response.
The role of this section is to isolate the pieces that are actually needed for the present manuscript:
the forward--backward Dyson expansion, the definition of the ordered branch assignments, the nested-braketor basis, and the explicit control tensor that separates probe design from matter information.

\subsection{Interaction-picture setup and ordered Dyson expansion}
\label{subsec:supp_probe_expansion}
Let the full Hamiltonian be
\begin{equation}
H(t)=H_P(t)+H_M+H_{PM}(t),
\qquad
H_{PM}(t)=\sum_{p\in P} h_p(t)\otimes M_{m(p)} ,
\end{equation}
where $P$ denotes the probe register, $M$ the target many-body system, and $p\mapsto m(p)$ the fixed assignment from probe channels to the matter operators selected for the experiment.
Working in the interaction picture with respect to $H_P(t)+H_M$, we write
\begin{equation}
H_I(t)=\sum_{p\in P} \tilde h_p(t)\otimes M_{m(p)}(t),
\qquad
\tilde h_p(t):=U_P^\dagger(t)\,h_p(t)\,U_P(t),
\label{eq:toggled_PM_H}
\end{equation}
with $U_P(t)$ the propagator generated by $H_P(t)$ alone.
Starting from the factorized state $\rho_P\otimes \rho_M$, the final probe signal is
\begin{equation}
\label{eq:supp_probe_signal_exact}
\langle O_P(T)\rangle_{\rho_P}
=
\Tr_{PM}\!\left[
(O_P\otimes \mathbb{I}_M)\,
U_I(T)\,(\rho_P\otimes \rho_M)\,U_I^\dagger(T)
\right],
\end{equation}
where
\begin{equation}
U_I(T)=\T\exp\!\left(-i\int_0^T dt\,H_I(t)\right).
\end{equation}

The essential structural point is that Eq.~\eqref{eq:supp_probe_signal_exact} contains both $U_I(T)$ and $U_I^\dagger(T)$.
After expanding them separately, one obtains forward- and backward-branch operator strings.
This is the source of the enlarged algebra.

To make this explicit, write
\begin{equation}
\label{eq:supp_UI_forward_backward}
\begin{aligned}
U_I(T)
&=
\sum_{r\ge 0}
(-i)^r
\int_0^T dt_1\cdots dt_r\;
\T\!\big[H_I(t_1)\cdots H_I(t_r)\big],\\
U_I^\dagger(T)
&=
\sum_{s\ge 0}
(+i)^s
\int_0^T dt'_1\cdots dt'_s\;
\bar{\T}\!\big[H_I(t'_1)\cdots H_I(t'_s)\big].
\end{aligned}
\end{equation}
Substituting Eq.~\eqref{eq:supp_UI_forward_backward} into Eq.~\eqref{eq:supp_probe_signal_exact} gives the branch-resolved expansion
\begin{equation}
\label{eq:supp_branch_resolved}
\begin{aligned}
\langle O_P(T)\rangle_{\rho_P}
=
\sum_{r,s\ge 0}
(-i)^r(i)^s
\int_0^T dt_1\cdots dt_r
\int_0^T dt'_1\cdots dt'_s\;
\Tr_{PM}\!\Big[
\bar{\T}\!\big(H_I(t'_{[s]})\big)\,
(O_P\otimes \mathbb I_M)\,
\T\!\big(H_I(t_{[r]})\big)\,
(\rho_P\otimes \rho_M)
\Big],
\end{aligned}
\end{equation}
where $t_{[r]}=(t_1,\ldots,t_r)$ and $t'_{[s]}=(t'_1,\ldots,t'_s)$.
Equation~\eqref{eq:supp_branch_resolved} is the most direct mathematical form of the forward/backward story.
Its drawback is only notational:
for later algebraic manipulations, it is more convenient to merge the two time lists into one globally ordered list.

For a fixed total order $k$, it is convenient to integrate over ordered times
\begin{equation}
\int d_>t_{[k]}
:=
\int_0^T dt_1\int_0^{t_1}dt_2\cdots\int_0^{t_{k-1}}dt_k ,
\qquad
t_1\ge t_2\ge \cdots\ge t_k .
\end{equation}
One then distributes the $k$ ordered times between the left and right of the measured probe observable $O_P$.
For each $l\in\{0,\ldots,k\}$, define $\Pi_{l,k}$ as the set of branch-adapted permutations,
\begin{equation}
\Pi_{l,k}
:=
\left\{
\pi\in S_k:
\pi(1)>\cdots>\pi(l),
\;
\pi(l+1)<\cdots<\pi(k)
\right\},
\end{equation}
namely the permutations that preserve the relative time ordering within the branch placed to the left of $O_P$ and within the branch placed to the right of $O_P$.
Equivalently, $\Pi_{l,k}$ only records which globally ordered times are assigned to the left and right branches; the actual branch ordering is then implemented explicitly by the product convention introduced below.
To keep the branch order explicit, we use the shorthand
\begin{equation}
\label{eq:supp_branch_products}
\prod_{a=1}^{l} X_a := X_1\cdots X_l,
\qquad
\prod_{b=l+1}^{k} Y_b := Y_{l+1}\cdots Y_k,
\end{equation}
with the convention that an empty product equals the identity.
The order-$k$ probe signal can then be written as
\begin{equation}
\label{eq:supp_dyson_raw}
\begin{aligned}
\langle O_P(T)\rangle_{\rho_P}
&=
\sum_{k\ge 0}(-i)^k
\sum_{\vec p}\sum_{l=0}^{k}\sum_{\pi\in \Pi_{l,k}}
\int d_>t_{[k]}\;
(-1)^l\,
\mathcal G_{O_P;\vec p}^{(k,l,\pi)}(t_{[k]})\;
\mathcal C_{\vec p,\pi}^{(k)}(t_{[k]}), \\
\mathcal G_{O_P;\vec p}^{(k,l,\pi)}(t_{[k]})
&:= 
\Tr_P\!\Bigg[
\prod_{a=1}^{l}
\tilde h_{p_a}(t_{\pi(a)})\,
O_P\,
\prod_{b=l+1}^{k}
\tilde h_{p_b}(t_{\pi(b)})\,
\rho_P
\Bigg],\\
\mathcal C_{\vec p,\pi}^{(k)}(t_{[k]})
&:= 
\Big\langle
\prod_{a=1}^{l}
M_{m(p_a)}(t_{\pi(a)})
\prod_{b=l+1}^{k}
M_{m(p_b)}(t_{\pi(b)})
\Big\rangle_{\rho_M}.
\end{aligned}
\end{equation}
Here both blocks are written in forward slot order.
The difference between the two branches is instead carried by the opposite monotonicity conventions in the definition of $\Pi_{l,k}$ above, together with the common product convention in Eq.~\eqref{eq:supp_branch_products}: on the left branch the permutation values decrease from left to right, whereas on the right branch they increase from left to right.
The permutation $\pi$ acts only on the ordered time labels $t_1,\ldots,t_k$; the probe-channel tuple $\vec p=(p_1,\ldots,p_k)$ remains fixed, and the actual matter operator in slot $p_j$ is always written as $M_{m(p_j)}$.
This is the same bookkeeping convention implemented later by $\widehat P_{\vec p}$ in the nested-braketor basis.
All probe-side dependence sits in $\mathcal G$, while the unknown matter information sits in the ordered $k$-point strings $\mathcal C$.

\paragraph{Attention.}
Under the present convention, the numerical ordering of the permutation values and the ordering of the physical times should not be conflated.
After expansion, the left string is
$\tilde h_{p_1}(t_{\pi(1)})\cdots \tilde h_{p_l}(t_{\pi(l)})$,
while the right string is
$\tilde h_{p_{l+1}}(t_{\pi(l+1)})\cdots \tilde h_{p_k}(t_{\pi(k)})$.
Thus the permutation values decrease from left to right on the left of $O_P$ and increase from left to right on the right of $O_P$.
However, because $t_1\ge\cdots\ge t_k$, the physical times $t_{\pi(\cdot)}$ increase from left to right on the left branch and decrease from left to right on the right branch.

The need for $\Pi_{l,k}$ is already visible at $k=2$.
The branch-resolved formula \eqref{eq:supp_branch_resolved} contains three populations $(r,s)=(2,0),(1,1),(0,2)$.
However, the mixed term $(r,s)=(1,1)$ still integrates over the full square $(t,t')\in[0,T]^2$ and must be split into the two ordered triangles $t>t'$ and $t'<t$.
Writing only the probe-side operator placement for clarity, one finds
\begin{equation}
\label{eq:supp_k2_mixed}
\begin{aligned}
\int_0^T dt\int_0^T dt'\;
\tilde h(t')\,O_P\,\tilde h(t)
&=
\int_{t'>t} dt\,dt'\;\tilde h(t')\,O_P\,\tilde h(t)
\;+\;
\int_{t<t'} dt\,dt'\;\tilde h(t')\,O_P\,\tilde h(t)\\
&=
\int d_>t_{[2]}
\Big[
\tilde h(t_1)\,O_P\,\tilde h(t_2)
\;+\;
\tilde h(t_2)\,O_P\,\tilde h(t_1)
\Big].
\end{aligned}
\end{equation}
These are precisely the two mixed operator placements in which one insertion lies on each branch,
namely $\tilde h(t_1)\,O_P\,\tilde h(t_2)$ and $\tilde h(t_2)\,O_P\,\tilde h(t_1)$.
Together with the purely right-branch placement $O_P\,\tilde h(t_1)\tilde h(t_2)$ and the purely left-branch placement $\tilde h(t_1)\tilde h(t_2)\,O_P$,
they generate the four order-$2$ configurations.
The role of $\Pi_{l,k}$ in Eq.~\eqref{eq:supp_dyson_raw} is to encode this refinement for general $k$ after one has already merged all branch variables into the single ordered list $t_1\ge \cdots\ge t_k$.

It is useful to encode $\pi$ by a binary branch-assignment string
\begin{equation}
\bm{\pi}_P=(\nu_1,\ldots,\nu_k)\in\{0,1\}^k,
\end{equation}
where $\nu_j=1$ means that the insertion at time $t_j$ is placed on the anti-time-ordered block to the left of $O_P$, whereas $\nu_j=0$ means that it is placed on the time-ordered block to the right.
The first bit $\nu_1$ produces a trivial two-fold redundancy because $O_P$ is absent in  $\mathcal{C}$, so only the reduced label
\begin{equation}
\bm{\pi}_M:=(\nu_2,\ldots,\nu_k)\in\{0,1\}^{k-1}
\end{equation}
matters for the sign structure of the braketor decomposition below.
Whenever a specific shuffle $\pi\in\Pi_{l,k}$ is under discussion, we use the same symbol $\bm{\pi}_M$ for its associated reduced binary label.

\subsection{Nested braketors as a complete basis}
\label{subsec:supp_braketor_basis}
For ordered times $t_1\ge \cdots \ge t_k$, define the nested-braketor correlators
\begin{equation}
\label{eq:supp_braketor_def}
\mathcal M_{\vec p}^{\vec\mu}(t_1,\ldots,t_k)
:=
\frac{1}{2^{k-1}}\,
\widehat P_{\vec p}
\Big(
\expect{
\llbracket\cdots\llbracket
M_{m(p_1)}(t_1),M_{m(p_2)}(t_2)\rrbracket_{\mu_1},
M_{m(p_3)}(t_3)\rrbracket_{\mu_2}\cdots,
M_{m(p_k)}(t_k)\rrbracket_{\mu_{k-1}}
}_{\rho_M}
\Big),
\end{equation}
where $\vec\mu=(\mu_1,\ldots,\mu_{k-1})\in\{0,1\}^{k-1}$ and $\widehat P_{\vec p}$ is the same bookkeeping convention used in the main text.
Its role is purely to keep the external probe-channel tuple $\vec p=(p_1,\ldots,p_k)$ in a fixed canonical order when different operator orderings are compared.
It is not a permutation over matter labels; the matter labels enter only through the fixed assignment inside $M_{m(p_j)}$.
The operator $\widehat P_{\vec p}$ does not change the braketor algebra itself; it only applies the channel-slot relabelling needed to write every term with the same ordered tuple $\vec p$.
At $k=2$, for instance,
\begin{equation}
\widehat P_{(p_1,p_2)}
\expect{M_{m(p_2)}(t_2)M_{m(p_1)}(t_1)}
=
\expect{M_{m(p_1)}(t_2)M_{m(p_2)}(t_1)}.
\end{equation}
This is why Eq.~\eqref{eq:supp_k2_inversion} can be written using a single label pair $(p_1,p_2)$ even though the two operator orderings differ.
The same bookkeeping persists at higher order: $\widehat P_{\vec p}$ keeps the channel labels fixed while the time-ordered braketor structure carries the actual operator-ordering information.
At fixed order $k$, there are therefore $2^{k-1}$ a priori distinct sectors.

The reason these objects form the correct basis is the elementary pair of identities
\begin{equation}
\label{eq:supp_xy_decomp}
XY=\frac12\acomm{X}{Y}+\frac12\comm{X}{Y},
\qquad
YX=\frac12\acomm{X}{Y}-\frac12\comm{X}{Y}.
\end{equation}
Every time a new operator is inserted into an already ordered string, there are exactly two possibilities:
it enters through an anti-commutator or through a commutator.
Iterating Eq.~\eqref{eq:supp_xy_decomp} therefore generates a binary tree of depth $(k-1)$, which is precisely the origin of the $2^{k-1}$ sectors.

For $k=2$ one finds the familiar pair
\begin{equation}
\label{eq:supp_k2_braketors}
\begin{aligned}
\mathcal M_{p_1,p_2}^{(0)}(t_1,t_2)
&=
\frac12\,\widehat P_{(p_1,p_2)}\expect{\acomm{M_{m(p_1)}(t_1)}{M_{m(p_2)}(t_2)}}\\
&=
\frac12\Big[
\expect{M_{m(p_1)}(t_1)M_{m(p_2)}(t_2)}
\;+\;
\expect{M_{m(p_1)}(t_2)M_{m(p_2)}(t_1)}
\Big],\\
\mathcal M_{p_1,p_2}^{(1)}(t_1,t_2)
&=
\frac12\,\widehat P_{(p_1,p_2)}\expect{\comm{M_{m(p_1)}(t_1)}{M_{m(p_2)}(t_2)}}\\
&=
\frac12\Big[
\expect{M_{m(p_1)}(t_1)M_{m(p_2)}(t_2)}
\;-\;
\expect{M_{m(p_1)}(t_2)M_{m(p_2)}(t_1)}
\Big].
\end{aligned}
\end{equation}
where the second operator string on each line is the $\widehat P_{(p_1,p_2)}$-relabelled version of
$\expect{M_{m(p_2)}(t_2)M_{m(p_1)}(t_1)}$.
They invert the two possible operator orderings:
\begin{equation}
\label{eq:supp_k2_inversion}
\begin{aligned}
\expect{M_{m(p_1)}(t_1)M_{m(p_2)}(t_2)}
&=
\mathcal M_{p_1,p_2}^{(0)}(t_1,t_2)+
\mathcal M_{p_1,p_2}^{(1)}(t_1,t_2),\\
\expect{M_{m(p_1)}(t_2)M_{m(p_2)}(t_1)}
&=
\mathcal M_{p_1,p_2}^{(0)}(t_1,t_2)-
\mathcal M_{p_1,p_2}^{(1)}(t_1,t_2).
\end{aligned}
\end{equation}
This is the lowest-order manifestation of the distinction emphasized in the main text:
system-only response is tied to the commutator sector, whereas probe readout also resolves the anti-commutator sector.

For $k=3$, the four sectors are
\begin{equation}
\begin{aligned}
\mathcal M^{(0,0)}_{p_1,p_2,p_3}
&=
\frac14\,\widehat P_{(p_1,p_2,p_3)}
\expect{\acomm{\acomm{M_{m(p_1)}(t_1)}{M_{m(p_2)}(t_2)}}{M_{m(p_3)}(t_3)}},\\
\mathcal M^{(1,0)}_{p_1,p_2,p_3}
&=
\frac14\,\widehat P_{(p_1,p_2,p_3)}
\expect{\acomm{\comm{M_{m(p_1)}(t_1)}{M_{m(p_2)}(t_2)}}{M_{m(p_3)}(t_3)}},\\
\mathcal M^{(0,1)}_{p_1,p_2,p_3}
&=
\frac14\,\widehat P_{(p_1,p_2,p_3)}
\expect{\comm{\acomm{M_{m(p_1)}(t_1)}{M_{m(p_2)}(t_2)}}{M_{m(p_3)}(t_3)}},\\
\mathcal M^{(1,1)}_{p_1,p_2,p_3}
&=
\frac14\,\widehat P_{(p_1,p_2,p_3)}
\expect{\comm{\comm{M_{m(p_1)}(t_1)}{M_{m(p_2)}(t_2)}}{M_{m(p_3)}(t_3)}}.
\end{aligned}
\end{equation}
The ordered string with descending times is simply the sum of all four sectors ($t_1\geq t_2\geq t_3$),
\begin{equation}
\label{eq:supp_k3_direct}
\expect{M_{m(p_1)}(t_1)M_{m(p_2)}(t_2)M_{m(p_3)}(t_3)}
=
\sum_{\mu_1,\mu_2\in\{0,1\}}
\mathcal M^{(\mu_1,\mu_2)}_{p_1,p_2,p_3}(t_1,t_2,t_3),
\end{equation}
whereas permuted strings differ only by signs.
For example,
\begin{equation}
\label{eq:supp_k3_permuted}
\begin{aligned}
\expect{M_{m(p_1)}(t_1)M_{m(p_2)}(t_3)M_{m(p_3)}(t_2)}
&=
\mathcal M^{(0,0)}_{p_1,p_2,p_3}(t_1,t_2,t_3)
-\mathcal M^{(1,0)}_{p_1,p_2,p_3}(t_1,t_2,t_3)
\\
&\quad +\mathcal M^{(0,1)}_{p_1,p_2,p_3}(t_1,t_2,t_3)
-\mathcal M^{(1,1)}_{p_1,p_2,p_3}(t_1,t_2,t_3),
\end{aligned}
\end{equation}
where the shared arguments $(p_1,p_2,p_3;t_1,t_2,t_3)$ are understood.
This sign pattern is the concrete low-order example of the general rule encoded by $\bm{\pi}_M$.

The general statement is the following: for each fixed branch-assignment class $\bm{\pi}_M$,
\begin{equation}
\label{eq:supp_general_string_expansion}
\mathcal C_{\vec p,\pi}^{(k)}(t_{[k]})
=
\sum_{\vec\mu\in\{0,1\}^{k-1}}
(-1)^{\vec\mu\cdot \bm{\pi}_M}\,
\mathcal M_{\vec p}^{\vec\mu}(t_{[k]}).
\end{equation}
Equation~\eqref{eq:supp_general_string_expansion} is proved recursively by repeatedly applying Eq.~\eqref{eq:supp_xy_decomp}.
The only effect of moving one insertion from one branch to the other is to flip the sign of the commutator choice at the corresponding nesting step; anti-commutator choices are unaffected.
That is why the coefficient is exactly the binary phase $(-1)^{\vec\mu\cdot \bm{\pi}_M}$.
For example, if $\pi$ is such that
\begin{equation*}
\mathcal C_{\vec p,\pi}^{(k)}(t_{[k]})
=
\expect{M_{m(p_1)}(t_1)\cdots M_{m(p_k)}(t_k)},
\end{equation*}
then every insertion lies on the time-ordered block to the right and hence $\bm{\pi}_M=(0,\ldots,0)$.
Equation~\eqref{eq:supp_general_string_expansion} then reduces to the direct ordered-string sum with all positive signs.

\subsection{Control tensor and universal probe-readout formula}
\label{subsec:supp_control_tensor}
Substituting Eq.~\eqref{eq:supp_general_string_expansion} into the ordered Dyson series \eqref{eq:supp_dyson_raw} yields the universal expansion used in the main text:
\begin{equation}
\label{eq:supp_universal_readout}
\langle O_P(T)\rangle_{\rho_P}
=
\sum_{k\ge 0}\sum_{\vec p}\sum_{\vec\mu\in\{0,1\}^{k-1}}
(-i)^k
\int d_>t_{[k]}\;
\mathcal F_{O_P;\vec p;t_{[k]}}^{(k;\vec\mu)}[H_P]\;
\mathcal M_{\vec p}^{\vec\mu}(t_{[k]}),
\end{equation}
with the explicit control tensor
\begin{equation}
\label{eq:supp_control_tensor_def}
\mathcal F_{O_P;\vec p;t_{[k]}}^{(k;\vec\mu)}[H_P]
:=
\sum_{l=0}^{k}\sum_{\pi\in\Pi_{l,k}}
(-1)^{l+\vec\mu\cdot \bm{\pi}_M}\,
\mathcal G_{O_P;\vec p}^{(k,l,\pi)}(t_{[k]}).
\end{equation}
Equation~\eqref{eq:supp_universal_readout} is the precise mathematical form of the ``learning interface'' described in the main text.
All dependence on experimental design is isolated in $\mathcal F$, while all dependence on the unknown many-body system is isolated in the braketor correlators $\mathcal M_{\vec p}^{\vec\mu}$.
This is exactly what makes the framework operational:
the laboratory learns matter by engineering probe-side tensors that isolate desired sectors of the matter algebra.

\subsection{Minimal comparison with system-only response}
\label{subsec:supp_response_comparison}

A classical-probe or system-only protocol should not be read as a protocol without quantum matter.
Rather, it is the limit in which the external drive carries no independently coherent probe Hilbert space whose reduced readout is measured at the end.
The effective description is therefore a driven target-system response problem.

For comparison, consider a conventional system-only protocol with Hamiltonian
\begin{equation}
H(t_j)=H_0-\lambda(t_j)M_{m_j}(t_j),
\end{equation}
and some system observable $O_M$ measured at time $T$.
Here $m_j$ denotes the matter channel selected by the classical perturbation at the insertion time $t_j$.
Equivalently, the experimental schedule directly pairs each response-theory time slot $t_j$ with the driven matter operator $M_{m_j}$; in the single-channel case all $m_j$ are the same.
In the interaction picture with respect to $H_0$, the driven expectation value obeys
\begin{equation}
\label{eq:supp_response_series}
\langle O_M(T)\rangle_{\lambda}
=
\sum_{k\ge 0}
i^k
\int d_>t_{[k]}\;
\lambda(t_1)\cdots\lambda(t_k)\;
\expect{\comm{\cdots\comm{\comm{O_M(T)}{M_{m_1}(t_1)}}{M_{m_2}(t_2)}}{M_{m_k}(t_k)}}.
\end{equation}
In Eq.~\eqref{eq:supp_response_series}, the matter label is displayed directly under each operator as $m_j$, paired with the corresponding response-theory time $t_j$.
The explicit observable $O_M(T)$ is the standard response-theory convention.
For comparison with the all-matter notation used in the main-text Methods table, take the $(k-1)$th-order response, choose $O_M(T)=M_{m_1}(t_1)$ after renaming the observation time $T$ as $t_1$, and denote the perturbation insertions by $M_{m_2}(t_2),\ldots,M_{m_k}(t_k)$.
The same sector is then written as
\begin{equation}
\expect{\comm{\cdots\comm{\comm{M_{m_1}(t_1)}{M_{m_2}(t_2)}}{M_{m_3}(t_3)}}{M_{m_k}(t_k)}},
\end{equation}
with no change in physical content.
This is the system-only analogue of the channel tuple used above, but with a different bookkeeping role:
there $p_j$ labels a probe coupling channel, whose assigned matter operator is written explicitly as $M_{m(p_j)}$ once the map from probe channel to matter channel is fixed.
At each order there is only one sector:
the fully retarded nested commutator.
No forward--backward reduction occurs because the measured quantum system is the target itself rather than a traced-down probe subsystem.
Equation~\eqref{eq:supp_response_series} is therefore the closed-system counterpart of Eq.~\eqref{eq:supp_universal_readout}, and the difference between them is the exact algebraic content of quantum probe advantage.

\section{Formal statements and proofs}
\label{sec:supp_proofs}
This section restates the main structural results and proves them using the machinery of Sec.~\ref{sec:supp_general_algebra}.
The goal is to make two statements precise.
First, reduced probe dynamics does not merely repackage the single retarded response function:
at order $k$ it naturally produces a full binary family of operator sectors.
Second, once several probe settings are used, learning those sectors is an ordinary linear inverse problem.

There is one point of bookkeeping that is easy to miss.
After the state, channel tuple, and times are fixed, the correlators below are just complex numbers.
The dimension statements in this section are therefore not claims that a particular list of numbers is linearly independent.
They refer to the formal coordinate space of possible order-$k$ operator sectors before extra state-dependent symmetries, equilibrium relations, equal-time degeneracies, or special operator identities are imposed.
The measured numbers are coordinates in that space.

Fix a Dyson order $k$, a channel tuple $\vec p=(p_1,\ldots,p_k)$, and ordered times $t_1\ge\cdots\ge t_k$.
There are two useful coordinate systems for the same order-$k$ information.
The first is the braketor-sector coordinate system, whose entries are nested commutator/anti-commutator correlators.
We write its coordinate vector as
\begin{equation}
\label{eq:supp_sector_vector}
\mathbf M^{(k)}_{\vec p}(t_{[k]})
:=
\Big(
\mathcal M_{\vec p}^{\vec\mu}(t_{[k]})
\Big)_{\vec\mu\in\{0,1\}^{k-1}}
\in \mathbb C^{2^{k-1}}
\end{equation}
where the binary label $\vec\mu$ records the braketor choice at each nesting step:
$\mu_j=0$ denotes an anti-commutator and $\mu_j=1$ denotes a commutator.

The second coordinate system is closer to the raw Dyson expansion.
It is labelled by the reduced forward/backward branch assignment $\bm\beta\in\{0,1\}^{k-1}$.
For each such branch label, define the ordered branch-string correlator
\begin{equation}
\label{eq:supp_branch_vector}
\mathcal C^{(k)}_{\vec p,\bm\beta}(t_{[k]})
:=
\mathcal C^{(k)}_{\vec p,\pi}(t_{[k]})
\qquad
\text{for any }\pi\text{ with associated reduced label }\bm{\pi}_M=\bm\beta,
\end{equation}
which is well defined because Eq.~\eqref{eq:supp_general_string_expansion} depends on $\pi$ only through $\bm{\pi}_M$.
Collect these components into
\begin{equation}
\mathbf C^{(k)}_{\vec p}(t_{[k]})
:=
\Big(
\mathcal C^{(k)}_{\vec p,\bm\beta}(t_{[k]})
\Big)_{\bm\beta\in\{0,1\}^{k-1}}
\in \mathbb C^{2^{k-1}}.
\end{equation}
Thus $\mathbf C$ is the branch-string description and $\mathbf M$ is the sector description.
The next lemma says that these are not two different physical assumptions; they are two bases related by a discrete Fourier transform on the binary cube.

\begin{lemma}[Walsh--Hadamard relation between branch strings and sectors]
\label{lem:supp_hadamard}
For each fixed $k,\vec p,t_{[k]}$, the vectors $\mathbf C^{(k)}_{\vec p}$ and $\mathbf M^{(k)}_{\vec p}$ are related by
\begin{equation}
\label{eq:supp_hadamard_forward}
\mathbf C^{(k)}_{\vec p}
=
H_{k-1}\,
\mathbf M^{(k)}_{\vec p},
\qquad
(H_{k-1})_{\bm\beta,\vec\mu}:=
(-1)^{\bm\beta\cdot \vec\mu},
\end{equation}
where $H_{k-1}$ is the $(2^{k-1}\times 2^{k-1})$ Walsh--Hadamard matrix.
Rows of $H_{k-1}$ are labelled by branch strings $\bm\beta$, columns by braketor sectors $\vec\mu$.
The entry $(-1)^{\bm\beta\cdot\vec\mu}$ is the sign acquired by the sector $\vec\mu$ when the ordered operator string is moved into the branch pattern $\bm\beta$.
In particular,
\begin{equation}
\label{eq:supp_hadamard_inverse}
\mathbf M^{(k)}_{\vec p}
=
2^{-(k-1)}\,H_{k-1}\,
\mathbf C^{(k)}_{\vec p}.
\end{equation}
Thus either list determines the other.
No measurement assumption enters here; this is only the algebraic change of coordinates between ordered strings and nested braketors.
\end{lemma}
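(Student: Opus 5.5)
The forward relation \eqref{eq:supp_hadamard_forward} is a repackaging of Eq.~\eqref{eq:supp_general_string_expansion}, and the inverse \eqref{eq:supp_hadamard_inverse} follows from the orthogonality of Walsh--Hadamard matrices. The only substantive content is therefore Eq.~\eqref{eq:supp_general_string_expansion} itself, which the paper asserts "recursively". I would give that recursion explicitly by induction on $k$, then read off the matrix form.

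\textbf{Step 1 (induction for the string expansion).} Fix $\vec p$ and $t_1\ge\cdots\ge t_k$, and write $A_j:=M_{m(p_j)}(t_j)$ (relabelled through $\widehat P_{\vec p}$). Define the nested braketor operators recursively:
\begin{equation*}
B^{\vec\mu}_{k}:=\llbracket B^{(\mu_1,\ldots,\mu_{k-2})}_{k-1},A_k\rrbracket_{\mu_{k-1}},\qquad B_1:=A_1 .
\end{equation*}
The induction claim is an operator identity. Let $W_{\bm\beta}$ be the ordered product produced by branch pattern $\bm\beta$. Because the branch insertions nest from the outside in, $W_{\bm\beta}$ is built by appending $A_k$ to $W_{(\beta_2,\ldots,\beta_{k-1})}$ on the right if $\beta_k=0$ and on the left if $\beta_k=1$. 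The claim is
\begin{equation*}
W_{\bm\beta}=2^{-(k-1)}\sum_{\vec\mu}(-1)^{\bm\beta\cdot\vec\mu}B^{\vec\mu}_k .
\end{equation*}
The base case $k=2$ is Eq.~\eqref{eq:supp_xy_decomp}. For the inductive step, apply Eq.~\eqref{eq:supp_xy_decomp} with $X=W$ and $Y=A_k$:
\begin{equation*}
XA_k=\tfrac12(\{X,A_k\}+[X,A_k]),\qquad A_kX=\tfrac12(\{X,A_k\}-[X,A_k]).
\end{equation*}
Then use bilinearity of $\llbracket\cdot,A_k\rrbracket_\mu$ to push it through the sum given by the induction hypothesis. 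The new bit contributes the factor $(-1)^{\beta_k\mu_{k-1}}$, which completes the dot product. Taking $\langle\cdot\rangle_{\rho_M}$, together with the $2^{-(k-1)}$ absorbed in the definition of $\mathcal M$, gives Eq.~\eqref{eq:supp_general_string_expansion}.

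\textbf{Step 2 (matching conventions).} The main obstacle is bookkeeping, not algebra. I must check that the ordering of $\mathcal C_{\vec p,\pi}$ produced by $\pi\in\Pi_{l,k}$ is exactly the left/right growth pattern above, with $\bm\pi_M$ equal to its reduced bits. The left block has physical times increasing left-to-right and the right block has them decreasing, so reading the string from the outside inward yields times $t_k,t_{k-1},\ldots$. Peeling off the earliest time $t_k$ therefore places it at the left or right end according to $\nu_k$, and the same holds recursively. The first bit $\nu_1$ is irrelevant because a single operator has no side. This also shows $\mathcal C^{(k)}_{\vec p,\bm\beta}$ is well defined, and that $\widehat P_{\vec p}$ merely keeps the channel labels attached to the slots. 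I would verify the $k=3$ sign pattern \eqref{eq:supp_k3_permuted} as a sanity check.

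\textbf{Step 3 (matrix form and inverse).} Componentwise, Eq.~\eqref{eq:supp_general_string_expansion} is $\mathbf C=H_{k-1}\mathbf M$. Since $H_{k-1}=H_1^{\otimes(k-1)}$ with $H_1^2=2I$, we get $H_{k-1}^2=2^{k-1}I$. Equivalently,
\begin{equation*}
\sum_{\bm\beta}(-1)^{\bm\beta\cdot(\vec\mu\oplus\vec\mu')}=2^{k-1}\delta_{\vec\mu\vec\mu'} .
\end{equation*}
Hence $H_{k-1}$ is invertible with $H_{k-1}^{-1}=2^{-(k-1)}H_{k-1}$, which gives \eqref{eq:supp_hadamard_inverse} and mutual determination of the two lists.
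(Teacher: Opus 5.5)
Your proposal is correct and takes the paper's route. The paper's proof is exactly your Step~3: read Eq.~\eqref{eq:supp_general_string_expansion} componentwise and invert with $H_{k-1}^2=2^{k-1}\mathbb{I}$. Your Steps~1--2 spell out the outside-in recursion via Eq.~\eqref{eq:supp_xy_decomp} that the paper only asserts, and they give a direct, non-circular reason why the word depends on $\pi$ only through $\bm\pi_M$.

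One caution: your sanity check against Eq.~\eqref{eq:supp_k3_permuted} will fail as printed. The time order $(t_1,t_3,t_2)$ puts the earliest insertion in the middle, so it is not a branch string, nor, as a formal word, in the span of the four nested braketors. The quoted sign pattern $(-1)^{\mu_1}$ is what your recursion assigns to $\bm\beta=(1,0)$, i.e.\ to $\langle M_{m(p_1)}(t_2)M_{m(p_2)}(t_1)M_{m(p_3)}(t_3)\rangle$.
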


\begin{proof}
Equation~\eqref{eq:supp_general_string_expansion} says exactly that
\begin{equation}
\mathcal C^{(k)}_{\vec p,\bm\beta}(t_{[k]})
=
\sum_{\vec\mu\in\{0,1\}^{k-1}}
(-1)^{\bm\beta\cdot\vec\mu}\,
\mathcal M_{\vec p}^{\vec\mu}(t_{[k]}),
\end{equation}
which is the component form of Eq.~\eqref{eq:supp_hadamard_forward}.
To invert it, use the standard Walsh--Hadamard orthogonality relation
\begin{equation}
\sum_{\bm\beta\in\{0,1\}^{k-1}}
(-1)^{\bm\beta\cdot(\vec\mu+\vec\nu)}
=
2^{k-1}\delta_{\vec\mu,\vec\nu},
\end{equation}
where the addition in the exponent is modulo two.
This relation says that two distinct binary sign characters have zero overlap when summed over all branch labels, while a character has norm squared $2^{k-1}$.
Equivalently,
\begin{equation}
H_{k-1}^2=2^{k-1}\,\mathbb I.
\end{equation}
Multiplying Eq.~\eqref{eq:supp_hadamard_forward} by $2^{-(k-1)}H_{k-1}$ gives Eq.~\eqref{eq:supp_hadamard_inverse}.
\end{proof}

\begin{proposition}[The order-$k$ probe algebra is genuinely $2^{k-1}$-dimensional]
\label{prop:supp_full_sector_space}
Fix $k,\vec p,t_{[k]}$ and let
\begin{equation}
\mathsf V_k(\vec p,t_{[k]})
:=
\Big\{
(x_{\bm\beta})_{\bm\beta\in\{0,1\}^{k-1}}
:x_{\bm\beta}\in\mathbb C
\Big\}
\cong \mathbb C^{2^{k-1}}
\end{equation}
be the formal branch-coordinate space for order-$k$ matter strings.
The branch-string coordinates $\mathbf C^{(k)}_{\vec p}$ and the braketor-sector coordinates $\mathbf M^{(k)}_{\vec p}$ are related by an invertible change of coordinates on this space.
Therefore
\begin{equation}
\dim \mathsf V_k(\vec p,t_{[k]})=2^{k-1}.
\end{equation}
Equivalently, before any extra state-dependent symmetry or equilibrium constraint is imposed, the order-$k$ reduced probe signal depends on matter through a full $2^{k-1}$-component sector vector.
\end{proposition}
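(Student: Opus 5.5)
The plan is to obtain Proposition~\ref{prop:supp_full_sector_space} almost directly from Lemma~\ref{lem:supp_hadamard}, since the proposition is a coordinate statement rather than a claim about particular numerical values. The first step is to fix $k$, $\vec p$ and $t_{[k]}$. I would then regard $\mathsf V_k(\vec p,t_{[k]})$ as the formal space $\mathbb C^{\{0,1\}^{k-1}}$, indexed by reduced branch labels $\bm\beta$, and the sector space as $\mathbb C^{\{0,1\}^{k-1}}$ indexed by $\vec\mu$. The dimension count $2^{k-1}$ is then immediate, because $|\{0,1\}^{k-1}|=2^{k-1}$ and each coordinate ranges independently over $\mathbb C$.

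The substantive step is to show that the linear map $\mathbf M\mapsto \mathbf C=H_{k-1}\mathbf M$ of Eq.~\eqref{eq:supp_hadamard_forward} is a bijection. To do this I will invoke the orthogonality relation from the proof of Lemma~\ref{lem:supp_hadamard}, namely $H_{k-1}^2=2^{k-1}\mathbb I$. This gives $\det H_{k-1}\neq 0$ and the explicit inverse $2^{-(k-1)}H_{k-1}$. As an independent check, one can use the tensor-product structure $H_{k-1}=H_1^{\otimes(k-1)}$ with $H_1=\begin{pmatrix}1&1\\1&-1\end{pmatrix}$, so that $\det H_{k-1}\neq0$ follows from $\det H_1=-2$. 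Because the change of coordinates is invertible, no sector coordinate is a fixed linear combination of the others. Hence the sector vector genuinely carries $2^{k-1}$ independent components.

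The final step connects this to the phrase "the order-$k$ reduced probe signal depends on matter through a full $2^{k-1}$-component sector vector". For this I would cite Eq.~\eqref{eq:supp_universal_readout}. There the order-$k$ contribution is $\sum_{\vec\mu}\mathcal F^{(k;\vec\mu)}\mathcal M^{\vec\mu}_{\vec p}$, so matter enters only through $\mathbf M^{(k)}_{\vec p}$, which is linear and without a priori constraints.

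The main obstacle is interpretive rather than computational. One must make sure that the $\bm\beta$-labelled strings $\mathcal C^{(k)}_{\vec p,\bm\beta}$ are well defined, meaning they are independent of the representative $\pi$. One must also make sure that all $2^{k-1}$ labels $\bm\beta$ actually arise from some $\pi\in\Pi_{l,k}$ for some $l$. For the first point I would cite Eq.~\eqref{eq:supp_general_string_expansion}. For the second I would note that any binary assignment $(\nu_1,\ldots,\nu_k)$ to left and right branches defines a valid shuffle in $\Pi_{l,k}$ with $l=\sum_j\nu_j$, so every reduced label $(\nu_2,\ldots,\nu_k)$ occurs. Finally, I would stress the caveat already made in the text. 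Specific states, KMS relations, or equal-time degeneracies may impose additional relations among the numerical values. These do not reduce the dimension of the formal coordinate space, which is all the proposition asserts.
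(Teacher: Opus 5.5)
Your proposal is correct and follows the paper's proof essentially step for step: invertibility of $H_{k-1}$ from Lemma~\ref{lem:supp_hadamard} (via $H_{k-1}^2=2^{k-1}\mathbb I$) gives the change of coordinates, the dimension is just the cardinality of $\{0,1\}^{k-1}$, and Eq.~\eqref{eq:supp_universal_readout} supplies the operational reading. Your extra checks are valid details that the paper leaves implicit: the factorization $H_{k-1}=H_1^{\otimes(k-1)}$, and the fact that every reduced label $\bm\beta$ is realized by some shuffle in $\Pi_{l,k}$. Well-definedness of $\mathcal C^{(k)}_{\vec p,\bm\beta}$ can in fact be seen directly, rather than by appeal to Eq.~\eqref{eq:supp_general_string_expansion}: the two representatives that differ only in $\nu_1$ give literally the same permutation sequence, because $t_1$ sits at the junction of the two branches either way.
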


\begin{proof}
Lemma~\ref{lem:supp_hadamard} provides an explicit invertible change of coordinates between the branch-string vector $\mathbf C^{(k)}_{\vec p}$ and the nested-braketor vector $\mathbf M^{(k)}_{\vec p}$.
Indeed,
\begin{equation}
\mathbf C^{(k)}_{\vec p}=H_{k-1}\mathbf M^{(k)}_{\vec p},
\qquad
\mathbf M^{(k)}_{\vec p}=2^{-(k-1)}H_{k-1}\mathbf C^{(k)}_{\vec p},
\end{equation}
with $H_{k-1}$ invertible.
Hence passing from branch strings to braketor sectors is a change of coordinates, not a projection.
Since the branch-coordinate space has one coordinate for each reduced branch label $\bm\beta\in\{0,1\}^{k-1}$, it has dimension $2^{k-1}$, and the braketor-sector coordinates give an equivalent description of the same space.

The final sentence is just the operational reading of this coordinate-space statement.
Equation~\eqref{eq:supp_universal_readout} shows that every order-$k$ probe contribution is a linear functional of $\mathbf M^{(k)}_{\vec p}$.
Because $\mathbf M^{(k)}_{\vec p}$ supplies coordinates on the full probe-accessible order-$k$ matter space, there is no universal reduction to a smaller sector list unless one imposes additional constraints on the target state or dynamics.
Such constraints can occur in special states or limits, but they are not part of the general probe algebra.
\end{proof}

We now pass from the algebraic coordinate space to experimentally generated data.
Consider a family of probe settings indexed by $\alpha$.
Each setting $\alpha$ means a complete probe-side choice:
probe control history, initial probe state, and final probe observable.
For the purposes of the rank argument, we focus on a fixed ordered time sample $t_{[k]}$ and a fixed channel tuple $\vec p$.
Equivalently, one may regard the equations below as the fixed-time kernel inside the full time integral, or as the object obtained after window localization or deconvolution.
This is why the following vectors carry an explicit $t_{[k]}$ dependence even though the experimentally recorded signal is ultimately integrated over time.
For fixed order $k,\vec p,t_{[k]}$, define the order-$k$ measured data by
\begin{equation}
\mathbf d^{(k)}_{\vec p}(t_{[k]})
:=
\big(d^{(k)}_{\alpha;\vec p}(t_{[k]})\big)_\alpha,
\end{equation}
where each component is the order-$k$ contribution to the corresponding probe signal.
The index $\alpha$ labels rows of the eventual linear system:
changing $\alpha$ changes only the probe-side control tensor, not the unknown matter correlators.

Grouping Eq.~\eqref{eq:supp_dyson_raw} by the reduced branch label $\bm\beta=\bm{\pi}_M$ gives
\begin{equation}
\label{eq:supp_data_branch}
d^{(k)}_{\alpha;\vec p}
=
\sum_{\bm\beta\in\{0,1\}^{k-1}}
G^{(k)}_{\alpha;\vec p,\bm\beta}\,
\mathcal C^{(k)}_{\vec p,\bm\beta},
\end{equation}
with
\begin{equation}
\label{eq:supp_G_matrix_def}
G^{(k)}_{\alpha;\vec p,\bm\beta}
:=
\sum_{\substack{l,\pi\\ \bm{\pi}_M=\bm\beta}}
(-1)^l\,
\mathcal G^{(k,l,\pi)}_{\alpha;O_P,\vec p}(t_{[k]}).
\end{equation}
Here $G^{(k)}_{\alpha;\vec p,\bm\beta}$ is known once the probe setting is chosen.
It collects all forward/backward Dyson terms that lead to the same matter branch string $\bm\beta$.
Thus $G^{(k)}_{\vec p}$ is the control matrix written in the branch-string coordinate system:
its rows are experimental settings $\alpha$, and its columns are branch labels $\bm\beta$.

Equivalently, using Lemma~\ref{lem:supp_hadamard},
\begin{equation}
\label{eq:supp_data_sector}
d^{(k)}_{\alpha;\vec p}
=
\sum_{\vec\mu\in\{0,1\}^{k-1}}
F^{(k)}_{\alpha;\vec p,\vec\mu}\,
\mathcal M_{\vec p}^{\vec\mu},
\qquad
F^{(k)}_{\alpha;\vec p,\vec\mu}
=
\sum_{\bm\beta}
G^{(k)}_{\alpha;\vec p,\bm\beta}\,
(-1)^{\bm\beta\cdot\vec\mu}.
\end{equation}
The coefficients $F^{(k)}_{\alpha;\vec p,\vec\mu}$ are the same probe settings rewritten in the braketor-sector coordinate system.
They say how strongly setting $\alpha$ weights each nested commutator/anti-commutator sector $\vec\mu$.
In matrix form,
\begin{equation}
\label{eq:supp_F_equals_GH}
\mathbf d^{(k)}_{\vec p}
=
G^{(k)}_{\vec p}\,
\mathbf C^{(k)}_{\vec p}
=
F^{(k)}_{\vec p}\,
\mathbf M^{(k)}_{\vec p},
\qquad
F^{(k)}_{\vec p}=G^{(k)}_{\vec p}\,H_{k-1}.
\end{equation}
This is the central linear-algebra object of the theorem.
The unknown vector is $\mathbf M^{(k)}_{\vec p}$; the measured vector is $\mathbf d^{(k)}_{\vec p}$; and the matrix $F^{(k)}_{\vec p}$ is engineered by probe design.

\begin{lemma}[Equivalent operational resolution criteria]
\label{lem:supp_rank_equiv}
For any family of probe settings at fixed $k,\vec p,t_{[k]}$,
\begin{equation}
\rank F^{(k)}_{\vec p}=\rank G^{(k)}_{\vec p}.
\end{equation}
Therefore the probe family resolves the full braketor sector space if and only if it resolves the full branch-string space.
\end{lemma}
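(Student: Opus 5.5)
The argument only needs Eq.~\eqref{eq:supp_F_equals_GH}, which gives the factorisation $F^{(k)}_{\vec p}=G^{(k)}_{\vec p}\,H_{k-1}$, together with the invertibility of the Walsh--Hadamard matrix established in Lemma~\ref{lem:supp_hadamard}. The claim then reduces to the elementary fact that right-multiplying a matrix by an invertible square matrix leaves its rank unchanged. No property of the probe settings $\alpha$ is used beyond the fact that both $F$ and $G$ are finite matrices with the same row index set.

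First I will fix $k,\vec p,t_{[k]}$ and a family of settings $\{\alpha\}$ with $N$ rows. Then $G^{(k)}_{\vec p}\in\mathbb C^{N\times 2^{k-1}}$ and $H_{k-1}\in\mathbb C^{2^{k-1}\times 2^{k-1}}$. Lemma~\ref{lem:supp_hadamard} gives $H_{k-1}^2=2^{k-1}\mathbb I$, so $H_{k-1}^{-1}=2^{-(k-1)}H_{k-1}$. The two bounds are then:
\begin{itemize}
\item $\rank F=\rank(GH)\le\rank G$, since the column space of $GH$ lies inside that of $G$;
\item $G=2^{-(k-1)}F H_{k-1}$, so $\rank G\le\rank F$ by the same reasoning.
\end{itemize}
Together these give equality. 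Equivalently, the image $F(\mathbb C^{2^{k-1}})=G(H\mathbb C^{2^{k-1}})=G(\mathbb C^{2^{k-1}})$, because $H$ is a bijection.

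For the ``therefore'' clause, I read ``resolves the full sector space'' as injectivity of the linear map $\mathbf M\mapsto F\mathbf M$, i.e.\ $\rank F=2^{k-1}$. This is what is needed to recover $\mathbf M^{(k)}_{\vec p}$ uniquely from $\mathbf d^{(k)}_{\vec p}$ via Eq.~\eqref{eq:supp_data_sector}. Likewise, ``resolves the full branch-string space'' means $\rank G=2^{k-1}$. The rank equality makes these two conditions equivalent.

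I would also note explicitly that the kernels correspond: $\ker F=H_{k-1}^{-1}\ker G$. So any unresolved direction in one coordinate system is precisely the Walsh--Hadamard image of an unresolved direction in the other. This is the operational content of the lemma.

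There is no real obstacle here. The only point needing care is that $G^{(k)}_{\vec p}$ is well defined on reduced labels $\bm\beta$, which requires that grouping Dyson terms by $\bm{\pi}_M$ is legitimate. That follows from Eq.~\eqref{eq:supp_general_string_expansion}, which depends on $\pi$ only through $\bm{\pi}_M$, as already noted in Eq.~\eqref{eq:supp_branch_vector}. I will cite this rather than re-derive it.
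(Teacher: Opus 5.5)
Your proposal is correct and follows the paper's own proof: both rest on the factorisation $F^{(k)}_{\vec p}=G^{(k)}_{\vec p}H_{k-1}$ and the invertibility of $H_{k-1}$ from Lemma~\ref{lem:supp_hadamard}, so right-multiplication cannot change the rank. Your two-sided inequality argument and the kernel correspondence $\ker F=H_{k-1}^{-1}\ker G$ are accurate elaborations of the same reasoning rather than a different route.
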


\begin{proof}
Equation~\eqref{eq:supp_F_equals_GH} gives
\begin{equation}
F^{(k)}_{\vec p}=G^{(k)}_{\vec p}H_{k-1}.
\end{equation}
By Lemma~\ref{lem:supp_hadamard}, $H_{k-1}$ is invertible.
Right multiplication by an invertible matrix only changes the coordinate basis of the columns.
It cannot create or remove independent directions in the column space, so
\begin{equation}
\rank F^{(k)}_{\vec p}=\rank G^{(k)}_{\vec p}.
\end{equation}
Operationally, this means that it does not matter whether one designs probe settings to distinguish branch strings first and then converts them to braketors, or directly designs them to distinguish braketor sectors.
The rank criterion is the same.
\end{proof}

\begin{theorem}[Quantum probe advantage: algebraic accessibility and operational reconstruction]
\label{thm:supp_probe_adv}
Fix Dyson order $k$.
At this order, reduced probe readout is a linear inverse problem on the full $2^{k-1}$-dimensional sector space of Proposition~\ref{prop:supp_full_sector_space}, whereas system-only response is a linear inverse problem on the single fully retarded sector.
There are two separate meanings of this statement.
The algebraic accessibility statement says what the probe signal can depend on in principle.
The reconstruction statement says when a chosen collection of settings contains enough independent rows to recover the whole vector.
More explicitly, for any chosen family of probe settings one has
\begin{equation}
\mathbf d^{(k)}_{\vec p}
=
F^{(k)}_{\vec p}\,
\mathbf M^{(k)}_{\vec p},
\end{equation}
where $\mathbf d^{(k)}_{\vec p}$ is the data vector whose component for probe setting $\alpha$ is the measured order-$k$ contribution $d^{(k)}_{\alpha;\vec p}$ at the fixed times $t_{[k]}$, and $\mathbf M^{(k)}_{\vec p}\in\mathbb C^{2^{k-1}}$ is the unknown sector vector.
If, for the target order-$k$ channel tuple and times of interest, the associated control matrix satisfies
\begin{equation}
\rank F^{(k)}_{\vec p}=2^{k-1}
\qquad
\text{equivalently}\qquad
\rank G^{(k)}_{\vec p}=2^{k-1},
\end{equation}
then the full order-$k$ braketor sector vector $\mathbf M^{(k)}_{\vec p}$ is uniquely recoverable from probe data.
If the rank is smaller, the unresolved sectors are precisely the directions in the null space of $F^{(k)}_{\vec p}$.
\end{theorem}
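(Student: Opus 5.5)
The proof is essentially linear algebra built on the coordinate results already established. We proceed in three steps: derive the linear model, prove the reconstruction criterion, and contrast with system-only response.

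\emph{Step 1: the linear model.} Fix $k$, $\vec p$ and the ordered times $t_{[k]}$, and isolate the order-$k$ term of the Dyson expansion \eqref{eq:supp_dyson_raw}. In the windowed or deconvolved setting, this fixed-time kernel is exactly the order-$k$ datum. For a given probe setting $\alpha$, group the terms of \eqref{eq:supp_dyson_raw} by their reduced branch label $\bm\pi_M=\bm\beta$. This is legitimate because, by \eqref{eq:supp_general_string_expansion}, the matter string $\mathcal C^{(k)}_{\vec p,\pi}$ depends on $\pi$ only through $\bm\pi_M$; hence the probe-side factors $(-1)^l\mathcal G^{(k,l,\pi)}$ with the same $\bm\beta$ can be summed. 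The result is \eqref{eq:supp_data_branch} with the coefficient matrix $G^{(k)}_{\vec p}$ of \eqref{eq:supp_G_matrix_def}, so that $\mathbf d^{(k)}_{\vec p}=G^{(k)}_{\vec p}\mathbf C^{(k)}_{\vec p}$. Substituting $\mathbf C^{(k)}_{\vec p}=H_{k-1}\mathbf M^{(k)}_{\vec p}$ from Lemma~\ref{lem:supp_hadamard} gives $\mathbf d^{(k)}_{\vec p}=F^{(k)}_{\vec p}\mathbf M^{(k)}_{\vec p}$ with $F^{(k)}_{\vec p}=G^{(k)}_{\vec p}H_{k-1}$. By Proposition~\ref{prop:supp_full_sector_space}, the unknown $\mathbf M^{(k)}_{\vec p}$ ranges over the full $2^{k-1}$-dimensional coordinate space, which establishes the algebraic accessibility half of the statement.

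\emph{Step 2: reconstruction.} The rank conditions on $F^{(k)}_{\vec p}$ and $G^{(k)}_{\vec p}$ are equivalent by Lemma~\ref{lem:supp_rank_equiv}. If $\rank F^{(k)}_{\vec p}=2^{k-1}$, then $F^{(k)}_{\vec p}$ (size $|\alpha|\times 2^{k-1}$) has trivial kernel. It therefore has a left inverse, for instance the Moore--Penrose pseudoinverse $F^+=(F^\dagger F)^{-1}F^\dagger$, and so $\mathbf M^{(k)}_{\vec p}=F^+\mathbf d^{(k)}_{\vec p}$ is uniquely recovered. If the rank is smaller, two sector vectors produce identical data if and only if their difference lies in $\ker F^{(k)}_{\vec p}$. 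The data therefore determine $\mathbf M^{(k)}_{\vec p}$ exactly modulo the null space, and the unresolved directions are precisely those null-space directions.

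\emph{Step 3: the system-only contrast.} Start from the response series \eqref{eq:supp_response_series}. At each order, the coefficient of $\lambda(t_1)\cdots\lambda(t_k)$ is a single fully retarded nested commutator. After the relabelling described below that equation, this is the $\vec\mu=(1,\ldots,1)$ braketor up to the normalization $2^{k-1}$. Varying the drive $\lambda$ only rescales this one scalar, so every system-only data vector is $\mathbf d=f\,e_{\vec 1}^{T}\mathbf M$ for some column $f$. The corresponding control matrix has rank at most $1$, and every other sector lies in its null space.

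\emph{Main obstacle.} The delicate point is Step 1: the grouping by $\bm\beta$ must be well defined, and the sign bookkeeping $(-1)^{l}$ versus $(-1)^{\vec\mu\cdot\bm\pi_M}$ in \eqref{eq:supp_control_tensor_def} must be handled carefully. This rests on the recursive proof of \eqref{eq:supp_general_string_expansion}, which we take as given. A secondary obstacle is that the theorem makes no claim that a full-rank family exists. We therefore state this only as a sufficient condition and, if desired, remark that the windowed constructions of Algorithm~1 supply one row per $\bm\beta$ with nonzero calibration factor $G$. Such rows make $G^{(k)}_{\vec p}$ diagonalizable to full rank.
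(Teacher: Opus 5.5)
Your proposal is correct and follows the paper's proof essentially step for step. Like the paper, you combine the linear model $\mathbf d^{(k)}_{\vec p}=G^{(k)}_{\vec p}H_{k-1}\mathbf M^{(k)}_{\vec p}$ with Proposition~\ref{prop:supp_full_sector_space}, use full column rank to get injectivity, invoke Lemma~\ref{lem:supp_rank_equiv} for the $F$/$G$ equivalence, and read off the single fully retarded sector from Eq.~\eqref{eq:supp_response_series}. You are slightly more explicit than the paper on the null-space clause and the rank-one system-only map. In your closing remark, the ideal windowed rows make $G^{(k)}_{\vec p}$ literally diagonal with nonzero calibration entries (not merely ``diagonalizable''), which is what gives full rank.
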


\begin{proof}
The first claim is the combination of Proposition~\ref{prop:supp_full_sector_space} with Eq.~\eqref{eq:supp_F_equals_GH}.
The proposition identifies the relevant matter coordinate space as the full $2^{k-1}$-dimensional sector space, and Eq.~\eqref{eq:supp_F_equals_GH} shows that a chosen probe family acts on that space through the linear map $F^{(k)}_{\vec p}$.
Thus the order-$k$ probe readout is not a single response coefficient; it is a set of linear measurements of a $2^{k-1}$-component vector.

If $\rank F^{(k)}_{\vec p}=2^{k-1}$, then $F^{(k)}_{\vec p}$ has full column rank on the $2^{k-1}$-dimensional sector space.
Equivalently, the chosen probe settings provide enough independent linear combinations to separate all sector coordinates.
Hence the linear map
\begin{equation}
\mathbf M^{(k)}_{\vec p}\mapsto \mathbf d^{(k)}_{\vec p}
\end{equation}
is injective, so $\mathbf M^{(k)}_{\vec p}$ is uniquely determined by the measured data.

By Lemma~\ref{lem:supp_rank_equiv}, the same criterion may equivalently be checked in the branch-string basis.
Operationally, this means that the probe family can resolve the full $2^{k-1}$-dimensional sector space exactly when its probe-side control tensors span a $2^{k-1}$-dimensional subspace.

For system-only response, no analogous enlargement exists.
Equation~\eqref{eq:supp_response_series} shows that the order-$k$ signal depends only on
\begin{equation}
\expect{\comm{\cdots\comm{\comm{O_M(T)}{M_{m_1}(t_1)}}{M_{m_2}(t_2)}}{M_{m_k}(t_k)}},
\end{equation}
namely the single fully retarded nested-commutator sector.
Therefore system-only response is one-dimensional in sector space, whereas reduced probe readout is formulated on the full $2^{k-1}$-dimensional space and a suitably chosen probe family can resolve that full space.
This is the precise sense in which the advantage is algebraic rather than merely a matter of signal sensitivity.
\end{proof}

\begin{remark}
\label{rem:supp_can_learn}
The theorem makes precise the phrase ``a probe can learn $2^{k-1}$ sectors'' used in the main text.
The statement has two parts.
The algebraic enlargement is unconditional: reduced probe dynamics lives on the full $2^{k-1}$-dimensional sector space.
The reconstruction claim is conditional: one must still choose a tomographically complete family of probe settings so that the corresponding control matrix has full column rank.
Thus ``can learn'' does not mean that one fixed probe setting isolates all sectors automatically; it means that the probe framework turns the problem into a full-space inverse problem that can, in principle, be made injective by control design.
\end{remark}

\begin{corollary}[Probe count tracks correlator order]
\label{cor:supp_probe_count}
Suppose the target quantity depends on matter correlators up to order $K$.
Then $K$ coherently controlled probe degrees of freedom are sufficient in the worst case.
In particular, the required probe overhead scales as $\mathcal O(K)$, independent of the size of the many-body system.
\end{corollary}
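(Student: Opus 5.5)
The plan is to build, for each target order $k\le K$, an explicit windowed protocol with direct addressing. In each run, one probe degree of freedom carries each probe-channel slot of the order-$k$ string. Theorem~\ref{thm:supp_probe_adv} then certifies that the order-$k$ sector vector $\mathbf M^{(k)}_{\vec p}$ can be recovered. Fix a target tuple of matter labels $(m_1,\ldots,m_k)$ and ordered times $\tau_1\ge\cdots\ge\tau_k$. We assign $k$ probe qubits $p_1,\ldots,p_k$ and take the run map $m_\eta(p_j)=m_j$. Each coupling $\tilde h_{p_j}=f_\sigma(t-\tau_{\eta(j)})Z_{p_j}$ is active only in a narrow window, and all other channels are switched off. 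Because each slot lives on a distinct qubit, the order-$k$ Dyson term that is multilinear in all $k$ windows can be isolated. Two routes do this: scaling the coupling strength of each channel separately, or phase cycling the sign of each $h_{p_j}$ and keeping the component odd in every sign. The isolated term has exactly the form of Eq.~\eqref{eq:methods_windowed_readout}, namely a known calibration factor times a linear combination of branch strings $\mathcal C^{(k)}_{\vec p,\bm\beta}$.

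The central step is to show that $\rank G^{(k)}_{\vec p}=2^{k-1}$ can be achieved with these $k$ qubits. By Lemma~\ref{lem:supp_rank_equiv}, this is equivalent to $\rank F^{(k)}_{\vec p}=2^{k-1}$. I would choose product probe inputs $\rho_P=\bigotimes_j\rho_j$ and product observables $O_P=\bigotimes_j O_j$. The probe trace $\mathcal G$ then factorizes over qubits, since each qubit $j$ appears in only one insertion. Its factor is $\Tr[Z O_j\rho_j]$ if the insertion is on the right branch and $\Tr[O_j Z\rho_j]$ if it is on the left. Writing $\Tr[\{Z,O_j\}\rho_j]/2$ and $\Tr[[Z,O_j]\rho_j]/2$ turns each qubit into a selector. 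Choosing $O_j,\rho_j$ so that only the commutator part survives (e.g.\ $\rho_j=\ket{+}\bra{+}$, $O_j=Y$), or only the anti-commutator part survives (e.g.\ $O_j=X$), fixes the sign $(\pm1)^{\nu_j}$ attached to branch bit $\nu_j$. Summed over the branch assignments, a product setting therefore yields a single Walsh character $\prod_j(\pm1)^{\nu_j}$. The first bit $\nu_1$ only gives the overall redundancy already noted for $\bm\pi_M$. Ranging over the $2^{k-1}$ choices on qubits $2,\ldots,k$ gives $G^{(k)}_{\vec p}\propto H_{k-1}$ up to nonzero diagonal calibration factors, which has full rank. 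In fact each setting then reads out one braketor sector $\mathcal M^{\vec\mu}_{\vec p}$ directly.

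It remains to aggregate over orders and targets. For every $k\le K$ the same construction uses at most $k\le K$ probes per run, and unused probes stay decoupled. So a register of $K$ probes suffices for all required correlators, and no quantity depending on system size enters the count, since the matter operators $M_{m(p)}$ are addressed rather than enumerated. The number of settings may grow with the number of target tuples, but that is a separate count (the next corollary). The probe count is $K=\mathcal O(K)$, as claimed.

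The main obstacle I expect is the sign and factorization bookkeeping in the second step. I must verify that the per-qubit commutator/anti-commutator selector matches the convention of $\bm\pi_M$ and $\widehat P_{\vec p}$ exactly, so that the resulting matrix is $H_{k-1}$ and not a degenerate combination. I would check this first at $k=2$ against Eq.~\eqref{eq:supp_k2_inversion}, then argue inductively on $k$ using Eq.~\eqref{eq:supp_xy_decomp}. A secondary technical point is that isolating the multilinear order-$k$ term is controlled only up to finite-width corrections, which I would treat as in Supplementary Sec.~S3.
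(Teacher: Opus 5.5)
Your probe-count construction is the paper's: one directly addressed probe per insertion slot, each coupled in its own window to $M_{m_j}$, so the register size is set by the correlator order and not by the system. The paper's proof stops at that slot count and asserts that it ``provides a sufficient implementation''. Whether the resulting data actually resolve all $2^{k-1}$ sectors is left to Theorem~\ref{thm:supp_probe_adv} and Remark~\ref{rem:supp_can_learn}, which only say that full rank can in principle be arranged by control design.

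Your second step supplies that certificate explicitly, and by a different mechanism from the paper's constructive Sec.~\ref{sec:supp_constructive_design}. The paper permutes the window pattern $\eta$ to select branch-string columns. You instead keep a single window pattern and generate all rows through the probe-setting label $\alpha$ (product preparations and product readouts). This makes $F^{(k)}_{\vec p}$ diagonal, so each setting returns one sector. What this buys is a checkable sufficiency proof with exactly $k$ qubits and $2^{k-1}$ settings per target. It also never requires the windows to decide on which side of $O_P$ an insertion lands.

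Two details of that step need repair.

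\emph{The anticommutator selector.} $O_j=X$ does not work: $\{Z,X\}=0$ for every state, and on $\ket{+}$ the commutator part, proportional to $\langle Y\rangle$, also vanishes, so that qubit kills the signal. Any $O_j$ anticommuting with $Z$ is commutator-type. Use $O_j=Z$ on $\ket{+}$ (or $O_j=\mathbb{I}$ on $\ket{0}$) instead.

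\emph{Qubit $1$ is not free.} Your bookkeeping worry disappears if you compute the multilinear term directly as
$\Tr\big[(O_P\otimes\mathbb{I})\,\mathrm{ad}_{H_1}\cdots\mathrm{ad}_{H_k}(\rho_P\otimes\rho_M)\big]$, with $H_j=Z_j\otimes M_{m_j}(\tau_j)$. Then apply the identity $[Z_j\otimes M,x\otimes y]=\tfrac12\big([Z_j,x]\otimes\{M,y\}+\{Z_j,x\}\otimes[M,y]\big)$. It shows that a probe commutator pairs with a matter anticommutator and vice versa. Trace cyclicity then annihilates the matter commutator at the latest time $\tau_1$. Hence qubit $1$ must be a probe-commutator selector (e.g.\ $\ket{+}$ with $O_1=Y$); otherwise every setting reads zero.

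With that choice, take a setting whose probe-anticommutator selectors sit on $S\subseteq\{2,\ldots,k\}$. It returns a known nonzero multiple of $\langle\llbracket\cdots\llbracket M_{m_1}(\tau_1),M_{m_2}(\tau_2)\rrbracket_{\mu_1}\cdots,M_{m_k}(\tau_k)\rrbracket_{\mu_{k-1}}\rangle$, with $\mu_{j-1}=1$ exactly for $j\in S$. This is obtained directly, without any $\widehat P_{\vec p}$ relabelling. Your left/right trace factors are also swapped relative to $\mathcal G$ in Eq.~\eqref{eq:supp_dyson_raw}, but this is harmless for the rank count.
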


\begin{proof}
Consider a target $K$-point sector
\begin{equation}
\mathcal M_{p_1,\ldots,p_K}^{\vec\mu}(t_1,\ldots,t_K),
\end{equation}
specified by probe channels $(p_1,\ldots,p_K)$, the fixed assignment $p_j\mapsto m(p_j)$, and times $(t_1,\ldots,t_K)$.
By Eq.~\eqref{eq:supp_universal_readout}, realizing this sector requires $K$ operator insertions along the probe--matter interaction history.
An explicit worst-case construction is therefore to use $K$ probe degrees of freedom, one for each insertion slot, and couple probe $p_j$ to $M_{m(p_j)}$ at time $t_j$.
This already provides a sufficient implementation for every order-$K$ sector.

This bound is generally not tight.
The tuple $(p_1,\ldots,p_K)$ specifies the addressed interaction slots needed by the target $K$-point sector.
In the direct-addressing convention used in the main text, these slots are carried by distinct physical probes, so channel count and probe count coincide.
The present corollary is a worst-case sufficiency statement: it also allows hardware in which a physical probe can be reused across non-overlapping time windows.
For example, in the single-probe dephasing setting at $K=2$ the same channel addresses the same matter operator at both insertions, so one probe already accesses the two-point sector.
More generally, if the target insertions occur at distinct times and the interface can be retuned between windows, the same physical probe may implement different addressed slots in successive windows.
The genuine worst case is when distinct addressed operators must be coupled simultaneously, for instance when $t_1=\cdots=t_K$.
Then the insertion slots cannot be recycled into fewer physical probes, and $K$ probes are operationally sufficient.

The important point is that this worst-case resource count depends only on the correlator order.
The size of the many-body system affects which matter operators are addressed and how many experimental settings are required, but it does not force the probe register to scale with the Hilbert-space dimension of the target.
Hence the probe overhead is $\mathcal O(K)$.
\end{proof}

\begin{corollary}[Experimental-setting count for interacting entropy learning]
\label{cor:supp_setting_count}
Assume the entropy of a subsystem $A$ is reconstructed perturbatively up to order $n$ from correlators of a local operator basis on $A$ whose cardinality scales as $\mathcal O(|A|)$.
Then, for fixed perturbative order $n$, at most $\mathcal O(|A|^n)$ distinct experimental settings are needed, and each setting can be implemented with $\mathcal O(n)$ coherently controlled probes.
\end{corollary}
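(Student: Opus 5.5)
The plan is a counting argument layered on Corollary~\ref{cor:supp_probe_count} and Theorem~\ref{thm:supp_probe_adv}. First, fix the local operator basis $\{R_i\}_{i\in\mathcal B_A}$ on $A$ with $|\mathcal B_A|=\mathcal O(|A|)$. By the entropy construction of Methods (Sec.~\ref{sec:methods_entropy}), the truncated expansion $S^{(0)}+\sum_{m\le n}\delta S^{(m)}$ is a linear functional, with geometry-only kernels $\mathcal K_m$, of the equal-time correlators $\langle R_{i_1}\cdots R_{i_m}\rangle$ with $m\le n$. The Gaussian term $S^{(0)}$ needs only $m\le 2$ (the covariance matrix $V_A$). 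The reconstruction target is therefore the finite set of index tuples $(i_1,\ldots,i_m)\in\mathcal B_A^m$, $m\le n$.

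Second, count these tuples. Their number is $\sum_{m=1}^n|\mathcal B_A|^m=\mathcal O(|A|^n)$ for fixed $n$, because the geometric sum is dominated by its top term. Each tuple fixes a channel-to-matter map $p_j\mapsto m_\eta(p_j)$ with $M_{m_\eta(p_j)}=R_{i_j}$, together with a window pattern, i.e.\ one hardware setting $\eta$ in the sense of Algorithm~1. For each tuple I then invoke the windowed readout \eqref{eq:methods_windowed_readout}. The ordered strings $\mathcal C^{(m)}_{\vec p,\bm\nu}$ for all $2^{m-1}$ branch labels $\bm\nu$ are recovered from a full-rank family of probe settings, by Theorem~\ref{thm:supp_probe_adv} together with Lemma~\ref{lem:supp_rank_equiv}. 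The braketors, and hence the desired plain string $\sum_{\vec\mu}\mathcal M^{\vec\mu}$, then follow via Lemma~\ref{lem:supp_hadamard}. The extra factor from branch labels and probe-setting rows is at most $2^{n-1}$ times a constant. Since $n$ is fixed, this does not change the $\mathcal O(|A|^n)$ scaling. Symmetries such as permuting equal-time factors can only reduce the count, which is why the statement says ``at most''.

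Third, the probe count per setting. Each setting realizes an order-$m\le n$ correlator. Corollary~\ref{cor:supp_probe_count} then gives $m\le n$ directly-addressed probes, one per insertion slot and each coupled to $R_{i_j}$ supported in $A$. This is the worst case $t_1=\cdots=t_m$ that the equal-time entropy data require, so $\mathcal O(n)$ probes suffice, independent of the size of the full matter system.

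The main obstacle I expect is the equal-time limit. The windowed protocol assumes ordered times, and at coincident times the branch strings degenerate, so $G^{(m)}$ may lose rank. I would first handle this by taking the limit $\tau_j\to\tau$ of strictly ordered windows, using continuity of the correlators, with simultaneous coupling of distinct probes as the fallback. Either choice keeps the setting count unchanged. A second, softer point is that the locality assumption (kernels supported in $A$, basis of size $\mathcal O(|A|)$) must be taken as a hypothesis rather than proved here.
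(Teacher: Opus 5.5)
Your proposal is correct and follows essentially the same route as the paper. You count the $\sum_{m\le n}\mathcal O(|A|^m)=\mathcal O(|A|^n)$ operator tuples, absorb the per-tuple overhead of a full-rank family of probe settings into an $n$-dependent constant (it depends only on the $2^{m-1}$-dimensional sector space, via Theorem~\ref{thm:supp_probe_adv}), and invoke Corollary~\ref{cor:supp_probe_count} for $\mathcal O(n)$ probes per setting. Your explicit Walsh--Hadamard step and your limit of strictly ordered windows for the equal-time data go slightly beyond the paper, which leaves both implicit; one small correction is that the entropy is not literally a linear functional of these correlators (the Gaussian term is spectral in $V_A$ and the leading correction is quadratic in the cumulants), but this is harmless because only the set of required correlators enters the count.
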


\begin{proof}
Under the stated locality assumption, the perturbative expansion up to order $n$ involves operator tuples of lengths $m=1,\ldots,n$ drawn from a basis of size $\mathcal O(|A|)$.
Hence the number of candidate tuples is bounded by
\begin{equation}
\sum_{m=1}^{n}\mathcal O(|A|^m)
=
\mathcal O(|A|^n).
\end{equation}
Each tuple labels one family of braketor sectors to be learned.
For a tuple of length $m\le n$, resolving those sectors requires a tomographically complete family of probe settings on the corresponding $m$-point sector space.
By Theorem~\ref{thm:supp_probe_adv} and Remark~\ref{rem:supp_can_learn}, this tomography overhead depends only on the sector dimension at order $m$, hence only on $m$ and not on $|A|$.
For fixed perturbative order $n$, it is therefore an overall $n$-dependent constant that does not affect the asymptotic $\mathcal O(|A|^n)$ scaling.

By Corollary~\ref{cor:supp_probe_count}, each such setting can be implemented with at most $\mathcal O(m)\subseteq \mathcal O(n)$ coherently controlled probes.
Therefore the total number of experimental settings scales as $\mathcal O(|A|^n)$ for fixed $n$, with probe count $\mathcal O(n)$ per setting.
\end{proof}

\section{Constructive control design for correlator-selective learning}
\label{sec:supp_constructive_design}
The Methods section mentioned both always-on and windowed reconstruction regimes, but the constructive design in this section focuses on the windowed case.
The always-on or broad-window setting, illustrated in Fig.~\ref{fig:supp_window_pm}(a), is different:
the interaction remains active across the target interval, and the probe signal contains a genuine time convolution over possible insertion times and channel placements.
Such data can still be deconvolved, but the kernels are less sparse, as in the DISCO framework.

We now turn to the windowed construction shown in Fig.~\ref{fig:supp_window_pm}(b,c).
Here the interaction is turned on only in short windows around chosen target times, so individual ordered correlator samples can be isolated directly.
The design principle is to localize the probe--matter coupling near the target times and then vary the setting-dependent probe--matter connection so that different ordered strings appear with distinguishable weights.

Fix a probe-channel tuple $\vec p=(p_1,\ldots,p_k)$, a target matter-operator tuple $\vec m=(m_1,\ldots,m_k)$, and target times
\begin{equation}
\bar\tau=(\tau_1,\ldots,\tau_k),
\qquad
\tau_1\geq\tau_2\geq\cdots\geq\tau_k.
\end{equation}
Our ultimate targets are the generalized braketor sectors
\(
\mathcal M_{\vec p}^{(\vec{\mu})}(\bar\tau)
\),
with $\vec{\mu}\in\{0,1\}^{k-1}$ as defined earlier in the Supplementary Material.
Here $\vec p$ records the probe slots, while $\vec m$ records the matter operators to be placed into those slots by the setting-dependent control.
To explain how these quantities are accessed experimentally, it is useful to first write the idealized short-window limit.
Let
\begin{equation}
\Pi_k:=\bigsqcup_{l=0}^{k}\Pi_{l,k},
\qquad
|\Pi_k|=2^k,
\end{equation}
denote the full family of branch-assignment permutations appearing in the ordered Dyson expansion.
In this constructive setting, unlike the fixed-channel convention used in the main text, the channel-to-matter map may be changed between experimental settings.
For each fixed setting $\eta\in\Pi_k$, choose the controlled probe-side operators by $\tilde h_{p_j}^{[\eta]}(t)=\delta(t-\tau_{\eta(j)})\tilde h_{p_j}(t)$ for the selected channels and $\tilde h_p^{[\eta]}(t)=0$ for unused channels, and choose a run-specific but time-independent matter assignment $m_\eta(p)$ so that $m_\eta(p_j)=m_j$.
This means probe slot $p_j$ is coupled to $M_{m_j}$ during its active window centred at $\tau_{\eta(j)}$.
The experimentally chosen window pattern (cf Eq.~\eqref{eq:toggled_PM_H}) is then
\begin{equation}
\label{eq:supp_constructive_HI}
H_I^{[\eta]}(t)
=
\sum_{p\in P}
\tilde h_p^{[\eta]}(t)\otimes M_{m_\eta(p)}(t).
\end{equation}
Here $\eta$ labels the whole experimental setting: it changes the activation-window schedule and, when needed, the fixed channel-to-matter assignment used for that run.
Finite-width windows are obtained by replacing the delta envelope with $f_\sigma(t-\tau_{\eta(j)})$, with the present formula recovered as $\sigma\to0$.
Notice that we have already used the fact that $k$ probes to learn $k$-point correlators are sufficient.
In the narrow-window limit, the order-$k$ contribution to probe readout is dominated by operator insertions at the selected times.
Although the desired outputs are the sectors $\mathcal M_{\vec p}^{(\vec{\mu})}(\bar\tau)$, the window-selection mechanism acts one level earlier in the expansion:
the Dyson series is organized by explicit branch assignments and therefore isolates ordered strings $\mathcal C_{\vec p,\bm\beta}^{(k)}(\bar\tau)$ directly.
The target sectors $\mathcal M$ are then obtained from the component relation between ordered strings and braketor sectors, Eq.~\eqref{eq:supp_general_string_expansion}, after the required set of $\mathcal C$ strings has been learned.
Operationally, the window pattern selects which ordered sample enters the signal, while the passage from $\mathcal C$ to the target $\mathcal M$ is a subsequent algebraic recombination.

Substituting Eq.~\eqref{eq:supp_constructive_HI} into Eq.~\eqref{eq:supp_dyson_raw}, every contribution carries a product of delta functions that enforces the chosen timing assignment.
Because the integration domain already satisfies $t_1\ge\cdots\ge t_k$ and the target centres satisfy $\tau_1\ge\cdots\ge\tau_k$, the surviving terms are not characterised by the full raw permutation $\pi$ itself, but by its reduced matter label $\bm{\pi}_M$.
Likewise, the experimentally chosen window pattern $\eta\in\Pi_k$ has an associated reduced label $\bm{\eta}_M$.
Only terms with
\begin{equation}
\bm{\pi}_M=\bm{\eta}_M
\end{equation}
survive, while the others vanish.
Thus the original sum over the $2^k$ raw branch assignments collapses onto the single reduced equivalence class selected by $\bm{\eta}_M$.
Since $\mathcal C_{\vec p,\pi}^{(k)}$ depends only on the reduced label, the common surviving matter object is simply
\begin{equation}
\mathcal C_{\vec p,\bm{\eta}_M}^{(k)}(\bar\tau).
\end{equation}
Correspondingly, for a probe setting $\alpha$ the order-$k$ readout reduces schematically to
\begin{equation}
\label{eq:supp_constructive_branch}
\langle O_P(T)\rangle^{(k)}_{\alpha,[\eta]}
\approx
(-i)^k\,
G^{(k)}_{\alpha;\vec p,\bm{\eta}_M}(\bar\tau)\,
\mathcal C_{\vec p,\bm{\eta}_M}^{(k)}(\bar\tau),
\end{equation}
up to corrections that vanish as the window width is reduced.
Here
\(
G^{(k)}_{\alpha;\vec p,\bm{\eta}_M}(\bar\tau)
\)
is the same branch-string control-matrix element defined in Eq.~\eqref{eq:supp_G_matrix_def}, evaluated for the windowed setting $\eta$ and the chosen probe setting $\alpha$.
Equivalently, it is obtained by summing the elementary probe-side traces $\mathcal G$ over the raw branch assignments whose reduced label is $\bm{\eta}_M$, with the window envelopes evaluated near $\bar\tau$.

The two labels serve different purposes: $\eta$ routes the selected probe slots to target times and matter operators, thereby choosing the branch-string column, whereas $\alpha$ changes only the probe preparation, control, and readout, thereby supplying rows of the calibrated linear system.
Thus $G$ is known once the experimental controls are chosen; it is not an additional matter correlator.
In the ideal delta-window limit all other reduced labels have zero coefficient; for finite windows they produce controlled leakage terms that vanish as the window width is reduced.
The important point is that the window pattern $\eta$ does not merely localise the integrals near $\bar\tau$.
It selects the reduced ordered string labelled by $\bm{\eta}_M$.

The reconstruction logic is therefore straightforward.
For fixed $(\vec p,\vec m,\bar\tau)$, one first learns the ordered branch strings $\mathcal C_{\vec p,\bm\beta}^{(k)}(\bar\tau)$ by repeating the experiment with different choices of $\eta$.
At the raw level there are $2^k$ window assignments, in one-to-one correspondence with the $2^k$ branch placements in $\Pi_k$.
What matters for the matter correlators, however, is only the reduced label $\bm\beta=\bm{\eta}_M\in\{0,1\}^{k-1}$.
After quotienting by this first-bit redundancy, one is left with the $2^{k-1}$ algebraically distinct ordered strings discussed in Sec.~\ref{subsec:supp_probe_expansion}.
Once those ordered strings are known, the desired braketor sectors follow by inverting the component relation in Eq.~\eqref{eq:supp_general_string_expansion}.
In this sense, the setting-dependent probe--matter assignment chooses \emph{which} correlator sample is interrogated, while the probe readout and control coefficients determine \emph{which linear combination} of the admissible orderings is observed in a given run.

The simplest example is $k=2$.
Take two probe-channel slots $p_1,p_2$ and two target matter operators $M_{m_1},M_{m_2}$.
At the raw level there are four window assignments $\eta$, matching the four possible branch placements of the two insertions.
But at the matter level there are only two reduced classes, $\bm{\eta}_M=0,1$.
Figure~\ref{fig:supp_window_pm}(b,c) shows the two settings that matter after this reduction.
In panel (b), choose the window schedule with $p_1$ active near $\tau_1$ and $p_2$ active near $\tau_2$, together with the fixed run map $m_{\eta_0}(p_1)=m_1$ and $m_{\eta_0}(p_2)=m_2$.
This realizes the class $\bm{\eta}_M=0$ and gives
\begin{equation}
\label{eq:supp_constructive_k2_strings}
\mathcal C_{(p_1,p_2),(0)}^{(2)}(\tau_1,\tau_2)
\;=\;
\expect{M_{m_1}(\tau_1)M_{m_2}(\tau_2)}.
\end{equation}
In panel (c), choose the exchanged window schedule with $p_1$ active near $\tau_2$ and $p_2$ active near $\tau_1$, together with the fixed run map $m_{\eta_1}(p_1)=m_1$ and $m_{\eta_1}(p_2)=m_2$.
This realizes the class $\bm{\eta}_M=1$; in the canonical $\widehat P_{(p_1,p_2)}$ bookkeeping it gives
\begin{equation}
\mathcal C_{(p_1,p_2),(1)}^{(2)}(\tau_1,\tau_2)
\;=\;
\expect{M_{m_1}(\tau_2)M_{m_2}(\tau_1)},
\end{equation}
where the external probe tuple remains $(p_1,p_2)$ and the exchanged time arguments record which activation window each slot occupied.
Thus the two order-$2$ strings are resolved by two different reduced timing classes, even though each class contains two raw branch assignments.
Taking symmetric and antisymmetric combinations gives the same $\widehat P_{(p_1,p_2)}$-defined braketor sectors as in Eq.~\eqref{eq:supp_k2_braketors}:
\begin{equation}
\mathcal M_{p_1,p_2}^{(0)}(\tau_1,\tau_2)
=
\frac{
\mathcal C_{(p_1,p_2),(0)}^{(2)}(\tau_1,\tau_2)
+
\mathcal C_{(p_1,p_2),(1)}^{(2)}(\tau_1,\tau_2)
}{2},
\end{equation}
\begin{equation}
\mathcal M_{p_1,p_2}^{(1)}(\tau_1,\tau_2)
=
\frac{
\mathcal C_{(p_1,p_2),(0)}^{(2)}(\tau_1,\tau_2)
-
\mathcal C_{(p_1,p_2),(1)}^{(2)}(\tau_1,\tau_2)
}{2}.
\end{equation}
This is the operational content of the control-design problem at the lowest nontrivial order:
the probe should distinguish the two admissible time-to-matter assignments, rather than merely sample the same two-point function more accurately.

At higher order the same principle persists.
One uses time-localized couplings to target the chosen ordered times and then varies the setting-dependent probe--matter assignment until the resulting data distinguish the relevant ordered strings.
That is the practical meaning of ``deconvolving'' the dynamics in the present setting.

\begin{figure*}[ht]
\centering
\includegraphics[width=1.0\textwidth]{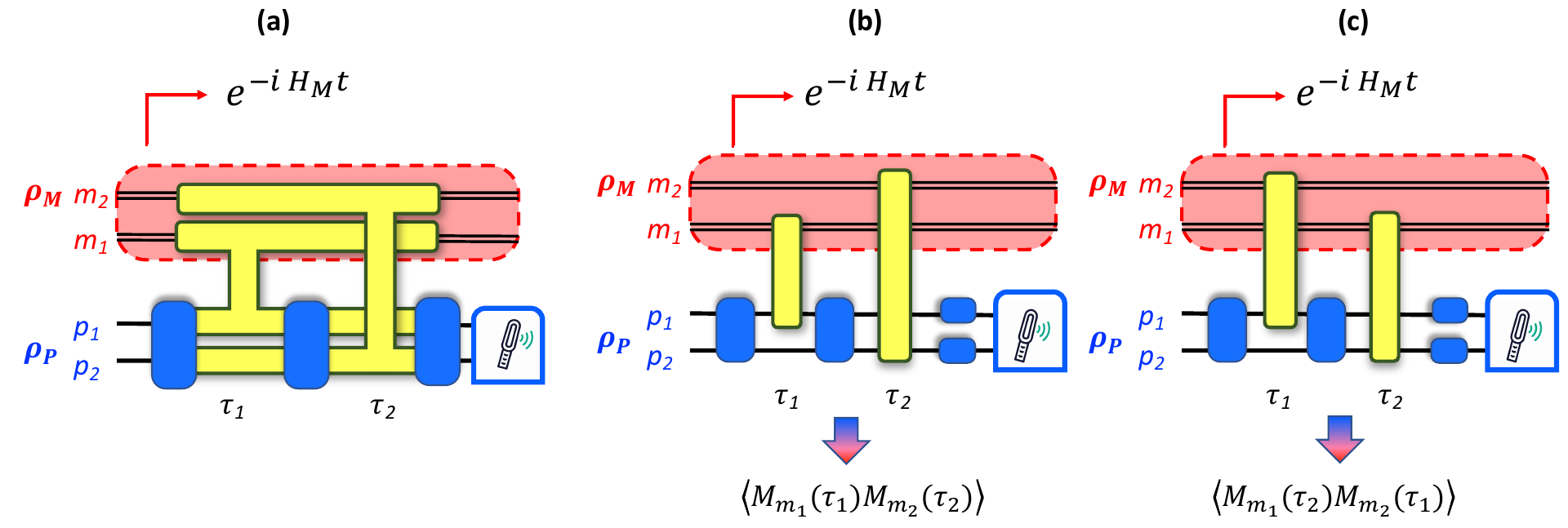}
\caption{\textbf{Always-on convolution versus windowed probe--matter control.}
(a) In an always-on or broad-window implementation, the probe signal contains a convolution over insertion times and channel placements, so the ordered matter strings must be separated by deconvolution.
(b) Window setting $\eta_0$ activates $p_1$ near $\tau_1$ and $p_2$ near $\tau_2$, with fixed run map $m_{\eta_0}(p_1)=m_1$ and $m_{\eta_0}(p_2)=m_2$; its reduced label is $\bm\eta_M=0$, and it isolates $\mathcal C_{(p_1,p_2),(0)}^{(2)}=\langle M_{m_1}(\tau_1)M_{m_2}(\tau_2)\rangle$.
(c) Window setting $\eta_1$ activates $p_1$ near $\tau_2$ and $p_2$ near $\tau_1$, with fixed run map $m_{\eta_1}(p_1)=m_1$ and $m_{\eta_1}(p_2)=m_2$; its reduced label is $\bm\eta_M=1$, and in the canonical $\widehat P_{(p_1,p_2)}$ bookkeeping it isolates $\mathcal C_{(p_1,p_2),(1)}^{(2)}=\langle M_{m_1}(\tau_2)M_{m_2}(\tau_1)\rangle$.
The desired braketor sectors are then obtained by the symmetric and antisymmetric recombinations given above.}
\label{fig:supp_window_pm}
\end{figure*}

\section{From probe correlators to entropy reconstruction}
\label{sec:supp_entropy_bridge}
The main text emphasizes that a probe does not ``measure entropy'' in one opaque step.
Rather, it measures correlators of suitable matter observables, and the entropy is subsequently reconstructed from those correlators by classical postprocessing.
This section makes that logic explicit before turning to concrete examples.

Let $A$ be a subsystem of the many-body target and let $\{R_i\}_{i\in A}$ denote a convenient operator basis on $A$.
Any quantity determined by the reduced density matrix $\rho_A$ can, in principle, be written as a functional of equal-time correlators of this basis.
Schematically,
\begin{equation}
\label{eq:supp_entropy_functional}
S(\rho_A)
=
\mathcal S\!\left[
\Big\{
\expect{R_{i_1}\cdots R_{i_m}}
\Big\}_{m\ge 1}
\right].
\end{equation}
The difficulty is therefore not conceptual but operational:
which of these correlators can be learned in the laboratory?

The answer provided by Sec.~\ref{sec:supp_general_algebra} is that ordered strings of the operators $R_i$ decompose into the nested-braketor sectors accessible to probe readout.
In the notation of Sec.~\ref{sec:supp_general_algebra}, one simply specializes the generic matter operators to the entropy basis,
\begin{equation}
M_{m(p_j)}\longrightarrow R_{i_j}.
\end{equation}
At equal time one simply takes the limit $t_1=\cdots=t_m=t$ in Eq.~\eqref{eq:supp_general_string_expansion},
\begin{equation}
\label{eq:supp_equal_time_decomp}
\expect{R_{i_1}(t)\cdots R_{i_m}(t)}
=
\sum_{\vec\mu\in\{0,1\}^{m-1}}
\mathcal M_{\vec i}^{\vec\mu}(t,\ldots,t),
\end{equation}
because the operator ordering on the left-hand side is already the reference ordering used in the definition of the nested braketors.
If one permutes the equal-time operator string, the corresponding coefficients become $\pm 1$ according to the same sign rule as Eq.~\eqref{eq:supp_general_string_expansion}.
Thus the entropy problem reduces to two steps:
\begin{enumerate}
  \item identify which equal-time correlators actually determine the entropy in the class of states under study;
  \item design probe protocols that learn the corresponding braketor sectors.
\end{enumerate}

For quasi-free matter, only second-order data are needed.
For interacting matter, higher connected correlators enter.
This is why the entropy problem fits naturally into the same operator-ordering framework as the rest of the paper.

\section{ Quasi-free entropy learning: bosons and fermions}
\label{sec:supp_entropy_gaussian}
This section expands on the Gaussian warm-up in the main text and retains the two instructive Gaussian bosonic and fermionic examples.
The message is the same in both cases:
once the relevant reduced state is Gaussian, the entropy becomes a spectral functional of low-order data on $A$.
The probe advantage then lies in learning precisely those low-order data that do not live in the retarded-response sector.

\subsection{Quasi-free bosons}
\label{subsec:supp_entropy_boson}
For bosons, the entropy-relevant operators can be chosen to be the linear quadratures themselves.
Let
\begin{equation}
R_A=(q_1,p_1,\ldots,q_{|A|},p_{|A|})^{\mathsf T}
\end{equation}
denote the vector of canonical quadratures restricted to the subsystem $A$.
In the language of Sec.~\ref{sec:supp_general_algebra}, this means that the generic matter operators are now specialised to
\begin{equation}
M_{m(p_j)}\longrightarrow R_{A,i_j}.
\end{equation}
At second order, the two equal-time bosonic sectors are simply
\begin{equation}
\mathcal M^{(0)}_{i,j}(t,t)
=
\frac12\expect{\acomm{R_{A,i}(t)}{R_{A,j}(t)}},
\qquad
\mathcal M^{(1)}_{i,j}(t,t)
=
\frac12\expect{\comm{R_{A,i}(t)}{R_{A,j}(t)}}.
\end{equation}
The second quantity is fixed by the canonical algebra,
\begin{equation}
\mathcal M^{(1)}_{i,j}(t,t)=\frac{i (\Omega_A)_{ij}}{2},
\end{equation}
so the state-dependent information sits entirely in the symmetrized sector $\mathcal M^{(0)}$.
The covariance matrix is therefore
\begin{equation}
\label{eq:supp_boson_covariance}
(V_A)_{ij}
:=
\mathcal M^{(0)}_{i,j}(t,t)-\expect{R_{A,i}(t)}\expect{R_{A,j}(t)},
\qquad
\Delta R_{A,i}:=R_{A,i}-\expect{R_{A,i}}.
\end{equation}
For a Gaussian reduced state, $V_A$ determines $\rho_{A,G}$ completely up to displacements, which do not affect the entropy.
Its symplectic eigenvalues $\{\nu_\alpha\}$ are obtained from
\begin{equation}
\label{eq:supp_boson_symplectic}
\mathrm{spec}(i\Omega_A V_A)=\{\pm \nu_\alpha\}_{\alpha=1}^{|A|},
\qquad
\Omega_A
=
\bigoplus_{\alpha=1}^{|A|}
\begin{pmatrix}
0 & 1\\
-1 & 0
\end{pmatrix}.
\end{equation}
The von Neumann entropy is then
\begin{equation}
\label{eq:supp_boson_entropy}
S(\rho_{A,G})
=
\sum_{\alpha=1}^{|A|}
\left[
(\bar n_\alpha+1)\ln(\bar n_\alpha+1)-\bar n_\alpha\ln \bar n_\alpha
\right],
\qquad
\bar n_\alpha={\nu_\alpha}-\frac12.
\end{equation}

This is the precise operational content of the bosonic example: the probe does not need to learn an entropy functional directly but to learn the equal-time anti-commutator sector $\mathcal M^{(0)}_{i,j}(t,t)$ for a basis of quadratures on $A$, together with the first moments.
The commutator sector is already known from the canonical symplectic form and carries no state-specific entropy information.
Once $V_A$ is reconstructed, the remaining steps are classical:
compute the symplectic eigenvalues of $i\Omega_A V_A$ and insert them into Eq.~\eqref{eq:supp_boson_entropy}.

\subsection{Quasi-free fermions}
\label{subsec:supp_entropy_fermion}
Correlations of the quasi-free fermions are most transparently written in terms of creation and annihilation operators on $A$.
Let
\begin{equation}
C_{ij}:=\expect{c_i^\dagger c_j},
\qquad
F_{ij}:=\expect{c_i c_j}.
\end{equation}
The pair $(C_A,F_A)$ determines a general quasi-free reduced state on $A$.
In the number-conserving case $F_A=0$, the entropy is the spectral functional
\begin{equation}
\label{eq:supp_fermion_entropy}
S(\rho_{A,G})
=
-\Tr\!\left[
C_A\ln C_A+(I-C_A)\ln(I-C_A)
\right]
=
-\sum_\alpha
\left[
\eta_\alpha\ln\eta_\alpha+(1-\eta_\alpha)\ln(1-\eta_\alpha)
\right],
\end{equation}
where $\{\eta_\alpha\}$ are the eigenvalues of $C_A$.
If pairing is present, one simply replaces $C_A$ by the corresponding Nambu covariance matrix; the entropy remains a spectral functional of the quasi-free data.

The subtlety is operational.
Equal-time anti-commutators of \emph{linear} fermionic operators are fixed by the canonical anticommutation relations and therefore carry no state-specific information.
To recover useful data, the probe must couple to \emph{bilinear} observables.
A convenient Hermitian basis is
\begin{equation}
\label{eq:supp_fermion_bilinears}
N_i:=c_i^\dagger c_i,
\qquad
X_{ij}:=c_i^\dagger c_j+c_j^\dagger c_i,
\qquad
Y_{ij}:=-i(c_i^\dagger c_j-c_j^\dagger c_i),
\end{equation}
supplemented, when pairing is relevant, by Hermitian combinations of $c_i c_j$ and $c_i^\dagger c_j^\dagger$.
Their expectation values determine the one-body data:
\begin{equation}
\expect{N_i}=C_{ii},
\qquad
\expect{X_{ij}}=2\,\mathrm{Re}\,C_{ij},
\qquad
\expect{Y_{ij}}=2\,\mathrm{Im}\,C_{ij}.
\end{equation}
Thus the quasi-free fermionic entropy problem is still a probe-learning problem, but the correct matter operators are bilinears rather than linear fields.

The reader should therefore keep the following distinction in mind.
For bosons, the entropy-relevant covariance data already live in the symmetrised sector of linear observables.
For fermions, the same idea holds, but the natural observables are bilinear because the linear equal-time anti-commutator is state-independent.
Once this is recognised, the rest of the reconstruction again follows the same pattern: learn the reduced quasi-free data on $A$, then evaluate the known entropy functional.

\section{Interacting entropy learning}
\label{sec:supp_entropy_all_order}
\subsection{Interacting entropy around a Gaussian reference}
\label{sec:supp_entropy_interacting}
The interacting problem differs from the quasi-free warm-up in only one essential way:
the reduced state $\rho_A$ is no longer determined by two-point data.
One must therefore organize the entropy around a reference state that captures the Gaussian part exactly and leaves the genuinely interacting information in higher connected correlators.

Let $\rho_{A,G}$ be the Gaussian or quasi-free reference state that matches the one- and two-point data of the true reduced state $\rho_A$ on $A$.
Define
\begin{equation}
K_{A,G}:=-\ln \rho_{A,G},
\qquad
K_A:=-\ln \rho_A,
\qquad
\delta K_A:=K_A-K_{A,G}
\end{equation}
where operators $K_{A,G}$ and $K_A$ are often called modular Hamiltonians (or the so-called entanglement Hamiltonians).
To connect the reference state and the physical state without introducing a second, competing notation, we use the single interpolation
\begin{equation}
\label{eq:supp_K_lambda}
K(\lambda)
:=
K_{A,G}+\lambda\,\delta K_A,
\qquad
0\le \lambda\le 1.
\end{equation}
The corresponding normalized state is
\begin{equation}
\label{eq:supp_rho_lambda}
\rho(\lambda)
:=
\frac{e^{-K(\lambda)}}{Z(\lambda)},
\qquad
Z(\lambda):=\Tr(e^{-K(\lambda)}),
\end{equation}
so that
\begin{equation}
\rho(0)=\rho_{A,G},
\qquad
\rho(1)=\rho_A.
\end{equation}
The entropy is therefore expanded along this single interpolation as
\begin{equation}
\label{eq:supp_entropy_series}
S(\rho_A)=S(\rho_{A,G})+\sum_{n\ge 1}\delta S^{(n)}.
\end{equation}
The point of this representation is that the Gaussian part is already handled by Sec.~\ref{sec:supp_entropy_gaussian}, while the interaction corrections are isolated in the single modular perturbation $\delta K_A$.

The practical simplification comes from matching the one- and two-point data.
Because $K_{A,G}$ is itself Gaussian or quasi-free, it is at most quadratic in the canonical operator basis on $A$.
Once the Gaussian reference has absorbed the constant, linear, and quadratic parts of the modular Hamiltonian, the leading non-Gaussian contribution to $\delta K_A$ starts at quartic order in the natural operator basis on $A$.
Using Wick or quasi-free normal ordering with respect to $\rho_{A,G}$, one writes
\begin{equation}
\label{eq:supp_deltaK_expansion}
\delta K_A
=
\sum_{m\ge m_0}
\frac{1}{m!}
\sum_{i_1,\ldots,i_m}
g^{(m)}_{i_1\cdots i_m}\;
:R_{i_1}\cdots R_{i_m}:_G,
\end{equation}
with $m_0=4$ in the examples below.
Equation~\eqref{eq:supp_deltaK_expansion} makes the logic of the main text precise:
interacting entropy is governed by higher connected correlators of local operators on $A$, and these are exactly the many-body objects that the probe-learning framework is built to access.

\subsection{General all-order formula}
\label{subsec:supp_all_order_general}

The following sections continue from the setup above.
No new modular Hamiltonian is introduced:
the only perturbation parameter is the bookkeeping variable $\lambda$ in Eq.~\eqref{eq:supp_K_lambda}, and the physical target state is recovered at $\lambda=1$.
We first derive the general all-order formula and then extract the leading quartic correction relevant for the present paper.

Define modular-time evolution with respect to the Gaussian reference,
\begin{equation}
\mathcal O(\tau):=e^{\tau K_{A,G}}\,\mathcal O\,e^{-\tau K_{A,G}},
\qquad
0\le \tau\le 1,
\end{equation}
and let $T_\tau$ denote ordering in $\tau$.
Because $e^{-K_{A,G}}=\rho_{A,G}$ and $\Tr(\rho_{A,G})=1$, the Duhamel factorization gives
\begin{equation}
\label{eq:supp_duhamel_exact}
e^{-K(\lambda)}
=
\rho_{A,G}\,
T_\tau
\exp\!\left[
-\lambda\int_0^1 d\tau\,\delta K_A(\tau)
\right].
\end{equation}
Hence
\begin{equation}
\label{eq:supp_Z_ratio}
Z(\lambda)
=
\Big\langle
T_\tau
\exp\!\left[
-\lambda\int_0^1 d\tau\,\delta K_A(\tau)
\right]
\Big\rangle_G,
\qquad
\langle \cdot\rangle_G:=\Tr(\rho_{A,G}\,\cdot),
\end{equation}
with $Z(0)=1$.
The linked-cluster expansion then yields
\begin{equation}
\label{eq:supp_logZ_cumulant}
\ln Z(\lambda)
=
\sum_{n=1}^\infty
\frac{(-\lambda)^n}{n!}\,
C_n[\delta K_A],
\end{equation}
with
\begin{equation}
\label{eq:supp_Cn_def}
C_n[\delta K_A]
:=
\int_0^1 d\tau_1\cdots d\tau_n\;
\Big\langle
T_\tau \delta K_A(\tau_1)\cdots \delta K_A(\tau_n)
\Big\rangle_{G,c}.
\end{equation}
Using $\partial_\lambda \ln Z(\lambda)=-\langle \delta K_A\rangle_\lambda$, the entropy may be rewritten as
\begin{equation}
\label{eq:supp_entropy_rewrite}
S(\lambda)
:=
-\Tr\!\big[\rho(\lambda)\ln \rho(\lambda)\big]
=
\Tr\!\big[\rho(\lambda)K_{A,G}\big]
+\ln Z(\lambda)
-\lambda\frac{d}{d\lambda}\ln Z(\lambda).
\end{equation}
Expanding the normalized expectation of $K_{A,G}$ gives
\begin{equation}
\label{eq:supp_CnK_def}
C_n[K_{A,G};\delta K_A]
:=
\int_0^1 d\tau_1\cdots d\tau_n\;
\Big\langle
T_\tau K_{A,G}\,\delta K_A(\tau_1)\cdots \delta K_A(\tau_n)
\Big\rangle_{G,c},
\end{equation}
through
\begin{equation}
\label{eq:supp_KAG_exp}
\Tr\!\big[\rho(\lambda)K_{A,G}\big]
=
S(\rho_{A,G})
+
\sum_{n=1}^{\infty}
\frac{(-\lambda)^n}{n!}\,
C_n[K_{A,G};\delta K_A].
\end{equation}
Combining Eqs.~\eqref{eq:supp_logZ_cumulant}, \eqref{eq:supp_entropy_rewrite}, and \eqref{eq:supp_KAG_exp}, and then evaluating the series at $\lambda=1$, gives the all-order entropy expansion
\begin{equation}
\label{eq:supp_all_order_entropy}
S(\rho_A)=S(\rho_{A,G})+\sum_{n=1}^\infty \delta S^{(n)},
\qquad
\delta S^{(n)}
=
\frac{(-1)^n}{n!}
\Big[
(1-n)\,C_n[\delta K_A]+C_n[K_{A,G};\delta K_A]
\Big].
\end{equation}

Equation~\eqref{eq:supp_all_order_entropy} is the compact formal statement underlying the interacting discussion in the main text.
It says that the $n$th entropy correction is a connected modular $n$-point correlator of the single perturbation $\delta K_A$ around the single reference state $\rho_{A,G}$.
Once Eq.~\eqref{eq:supp_deltaK_expansion} is substituted into Eq.~\eqref{eq:supp_all_order_entropy}, the operator content of each correction becomes explicit.

\subsection{Low-order consequences and leading quartic correction}
\label{subsec:supp_all_order_operator_content}
The all-order formula immediately reproduces the familiar low-order expansion.
Linearizing Eq.~\eqref{eq:supp_rho_lambda} around $\lambda=0$ and then setting $\lambda=1$ gives
\begin{equation}
\label{eq:supp_duhamel_linear}
\delta \rho_A
:=
\rho_A-\rho_{A,G}
=
\;-\int_0^1 d\tau\;
\rho_{A,G}^{\,1-\tau}\,\delta K_A\,\rho_{A,G}^{\,\tau}
\;+\;
\rho_{A,G}\,\Tr(\rho_{A,G}\delta K_A)
\;+\;
\mathcal O(\delta K_A^2).
\end{equation}
The first two corrections may be written as
\begin{equation}
\label{eq:supp_entropy_first_second}
\delta S^{(1)}
=
\Tr(\delta\rho_A\,K_{A,G}),
\qquad
\delta S^{(2)}
=
-\frac12\Tr\!\left[
\delta\rho_A\,J_{\rho_{A,G}}^{-1}(\delta\rho_A)
\right],
\end{equation}
where
\begin{equation}
J_{\rho}^{-1}(X)
:=
\int_0^\infty ds\;
(\rho+s)^{-1}X(\rho+s)^{-1}.
\end{equation}
Equation~\eqref{eq:supp_entropy_first_second} is the Bogoliubov--Kubo--Mori quadratic form appropriate to the entropy expansion.

The Gaussian matching now becomes crucial.
Since $K_{A,G}$ is quadratic or quasi-free in the canonical operators on $A$, while $\rho_{A,G}$ is chosen to match the one- and two-point data of $\rho_A$, one has
\begin{equation}
\delta S^{(1)}
=
\Tr\!\big[(\rho_A-\rho_{A,G})K_{A,G}\big]
=0.
\end{equation}
Thus, for the quartic-leading perturbations considered below, the first nontrivial interacting entropy correction starts at second order in the modular couplings.

If the leading perturbation in Eq.~\eqref{eq:supp_deltaK_expansion} is quartic, define
\begin{equation}
O_{ijkl}:=:R_iR_jR_kR_l:_G,
\qquad
\delta K_A
=
\frac{1}{4!}\sum_{ijkl} g_{ijkl}\,O_{ijkl}
\;+\;\mathcal O(R^6).
\end{equation}
Because the operators $O_{ijkl}$ are normal ordered with respect to $\rho_{A,G}$, one has $\expect{O_{ijkl}}_{\rho_{A,G}}=0$, so the normalization term in Eq.~\eqref{eq:supp_duhamel_linear} vanishes at leading quartic order.
Then Eq.~\eqref{eq:supp_all_order_entropy} implies that the leading non-Gaussian entropy correction is the quadratic term
\begin{equation}
\label{eq:supp_leading_quartic_entropy}
\delta S_{\mathrm{lead}}
=
\delta S^{(2)}
=
-\frac{1}{2(4!)^2}
\sum_{ijkl,mnpq}
g_{ijkl}\,
\mathcal K^{(G)}_{ijkl,mnpq}\,
g_{mnpq}
\;+\;
\mathcal O(g^3),
\end{equation}
with Gaussian modular kernel
\begin{equation}
\label{eq:supp_leading_quartic_kernel}
\mathcal K^{(G)}_{ijkl,mnpq}
:=
\int_0^1 d\tau\;
\Big\langle
T_\tau\,O_{ijkl}(\tau)\,O_{mnpq}(0)
\Big\rangle_{G,c}.
\end{equation}
Equivalently, one may trade the unknown modular couplings $g_{ijkl}$ for the measurable quartic cumulants
\begin{equation}
\kappa_{ijkl}
:=
\expect{O_{ijkl}}_{\rho_A},
\end{equation}
obtaining
\begin{equation}
\label{eq:supp_leading_quartic_kappa}
\delta S_{\mathrm{lead}}
=
-\frac12
\sum_{ijkl,mnpq}
\kappa_{ijkl}\,
\widetilde{\mathcal K}^{(G)}_{ijkl,mnpq}\,
\kappa_{mnpq}
\;+\;
\mathcal O(\kappa^3),
\end{equation}
for a kernel $\widetilde{\mathcal K}^{(G)}$ fixed entirely by the Gaussian reference.

This is the precise leading-order bridge to the main text.
The Gaussian reference fixes the modular-time kernels.
What must be learned experimentally are the non-Gaussian quartic cumulants on $A$, and those are many-body correlator data of exactly the type discussed in Sec.~\ref{sec:supp_general_algebra}.

\subsection{Example I: weakly interacting bosons}
\label{subsec:supp_entropy_phi4}
Consider a weakly interacting bosonic field or lattice system with Hamiltonian
\begin{equation}
H
=
H_0+\frac{\lambda}{4!}\int d^dx\,:\phi(x)^4: ,
\end{equation}
where $H_0$ is quadratic.
Let $A$ be the subsystem of interest and let $\rho_{A,G}$ be the Gaussian reference state matching the one- and two-point data of $\rho_A$ on $A$.
The leading modular correction is the continuum specialization of the quartic ansatz above,
\begin{equation}
\label{eq:supp_phi4_deltaK}
\delta K_A
\simeq
\frac{\lambda_{\mathrm{eff}}}{4!}
\int_A d^dx\;
f_A(x)\,:\phi(x)^4:_G,
\end{equation}
where $f_A(x)$ is the geometric weight associated with the subsystem and $\lambda_{\mathrm{eff}}$ is the effective interaction strength at the scale of $A$.

Define the local quartic cumulant field
\begin{equation}
\kappa(x):=\expect{:\phi(x)^4:_G}_{\rho_A}.
\end{equation}
For the Gaussian reference, Wick's theorem gives the continuum version of Eq.~\eqref{eq:supp_leading_quartic_kernel},
\begin{equation}
\label{eq:supp_phi4_wick}
\Big\langle
:\phi(x)^4:_G(\tau)\,
:\phi(y)^4:_G
\Big\rangle_{\rho_{A,G}}
=
4!\,
\big(G_G(\tau;x,y)\big)^4,
\end{equation}
where
\begin{equation}
G_G(\tau;x,y)
:=
\expect{\phi(\tau,x)\phi(0,y)}_{\rho_{A,G}}.
\end{equation}
The Gaussian kernel relevant to the entropy correction is therefore
\begin{equation}
\label{eq:supp_phi4_kernel}
\mathcal K_{\phi}^{(G)}(x,y)
:=
\int_0^1 d\tau\;
4!\,
\big(G_G(\tau;x,y)\big)^4.
\end{equation}
At leading nontrivial order one obtains
\begin{equation}
\label{eq:supp_phi4_entropy}
\delta S_A^{(2)}
=
-\frac{\lambda_{\mathrm{eff}}^2}{2(4!)^2}
\int_A d^dx\int_A d^dy\;
f_A(x)\,\mathcal K_{\phi}^{(G)}(x,y)\,f_A(y)
\;+\;
\mathcal O(\lambda_{\mathrm{eff}}^3).
\end{equation}
Equivalently, after eliminating the unknown coupling profile in favor of the measurable cumulant field $\kappa(x)$,
\begin{equation}
\delta S_A^{(2)}
=
-\frac12
\int_A d^dx\int_A d^dy\;
\kappa(x)\,
\widetilde{\mathcal K}^{(G)}(x,y)\,
\kappa(y)
\;+\;
\mathcal O(\kappa^3),
\end{equation}
for a kernel $\widetilde{\mathcal K}^{(G)}$ determined entirely by the Gaussian reference.

The example is meant to be read operationally.
The Gaussian part of the entropy comes from the covariance matrix, as in Sec.~\ref{sec:supp_entropy_gaussian}.
The leading interaction correction comes from the quartic cumulant field $\kappa(x)$.
Thus the probe task is to learn both the two-point Gaussian data and the quartic fluctuation data on $A$.

\subsection{Example II: weakly interacting fermions}
\label{subsec:supp_entropy_hubbard}
As a fermionic counterpart, consider the Hubbard model
\begin{equation}
H
=
-\sum_{\langle ij\rangle,\sigma} t_{ij}\,c_{i\sigma}^\dagger c_{j\sigma}
+
U\sum_i n_{i\uparrow}n_{i\downarrow},
\qquad
n_{i\sigma}:=c_{i\sigma}^\dagger c_{i\sigma}.
\end{equation}
For weak interaction strength it is natural to choose $\rho_{A,G}$ as the quasi-free reference state matching the one- and two-point fermionic data on $A$.
The leading local non-Gaussian operator is then the double-occupancy fluctuation, which is the lattice specialization of the quartic operator basis used above,
\begin{equation}
\label{eq:supp_hubbard_Oi}
O_i
:=
:n_{i\uparrow}n_{i\downarrow}:_G
=
n_{i\uparrow}n_{i\downarrow}
-\expect{n_{i\uparrow}}_G\expect{n_{i\downarrow}}_G.
\end{equation}
The modular perturbation is modeled as
\begin{equation}
\label{eq:supp_hubbard_deltaK}
\delta K_A
\simeq
U_{\mathrm{eff}}
\sum_{i\in A}
f_A(i)\,O_i .
\end{equation}
Define the corresponding cumulant variables
\begin{equation}
\kappa_i:=\expect{O_i}_{\rho_A}
=
\expect{n_{i\uparrow}n_{i\downarrow}}_{\rho_A}
-\expect{n_{i\uparrow}}_{\rho_A}\expect{n_{i\downarrow}}_{\rho_A}.
\end{equation}
The Gaussian kernel is the discrete analogue of Eq.~\eqref{eq:supp_leading_quartic_kernel},
\begin{equation}
\label{eq:supp_hubbard_kernel}
\mathcal K_{ij}^{(G)}
:=
\int_0^1 d\tau\;
\expect{O_i(\tau)\,O_j}_{\rho_{A,G}}.
\end{equation}
For the quasi-free reference, Wick factorization gives
\begin{equation}
\label{eq:supp_hubbard_wick}
\expect{O_i(\tau)\,O_j}_{\rho_{A,G}}
=
|G_{\uparrow}(\tau;i,j)|^2\,
|G_{\downarrow}(\tau;i,j)|^2,
\end{equation}
where
\begin{equation}
G_{\sigma}(\tau;i,j)
:=
\expect{c_{i\sigma}^\dagger(\tau)c_{j\sigma}(0)}_{\rho_{A,G}}.
\end{equation}
Therefore the leading entropy correction reads
\begin{equation}
\label{eq:supp_hubbard_entropy}
\delta S_A^{(2)}
=
-\frac{U_{\mathrm{eff}}^2}{2}
\sum_{i,j\in A}
f_A(i)\,\mathcal K_{ij}^{(G)}\,f_A(j)
\;+\;
\mathcal O(U_{\mathrm{eff}}^3).
\end{equation}
As in the bosonic example, one can equivalently rewrite the result in terms of the measurable connected cumulants $\kappa_i$ and a Gaussian kernel determined by the reference state.

This example clarifies the fermionic story for the reader.
The quasi-free entropy of Sec.~\ref{sec:supp_entropy_gaussian} depends only on one-body data.
Interactions enter through local double-occupancy fluctuations, which are quartic operators.
The same probe-learning logic therefore survives, but the relevant observables move up the correlator hierarchy.

\stopcontents[supplement]
\end{document}